\theoremstyle{plain}
\newtheorem{theorem}{Theorem}[section]
\newtheorem{proposition}[theorem]{Proposition}
\newtheorem{lemma}[theorem]{Lemma}
\newtheorem{corollary}[theorem]{Corollary}
\theoremstyle{definition}
\newtheorem{definition}[theorem]{Definition}
\theoremstyle{remark}
\DeclareMathOperator*{\argmax}{arg\,max}
\DeclareMathOperator*{\argmin}{arg\,min}
\newcommand{\CFC}{\ensuremath{\mathsf{CFC}}\xspace}
\newcommand{\OWT}{\ensuremath{\mathsf{OWT}}\xspace}
\newcommand{\MTS}{\ensuremath{\mathsf{MTS}}\xspace}
\newcommand{\CFL}{\ensuremath{\mathsf{CFL}}\xspace}
\newcommand{\CFLno}{convex function chasing with a long-term constraint\xspace}
\newcommand{\MAP}{\ensuremath{\mathsf{MAP}}\xspace}
\newcommand{\MAL}{\ensuremath{\mathsf{MAL}}\xspace}
\newcommand{\MALno}{online metric allocation with a long-term constraint\xspace}
\newcommand{\ALG}{\ensuremath{\mathtt{ALG}}\xspace}
\newcommand{\LALG}{\ensuremath{\mathtt{LALG}}\xspace}
\newcommand{\ALGone}{\ensuremath{\mathtt{ALG1}}\xspace}
\newcommand{\ADV}{\ensuremath{\mathtt{ADV}}\xspace}
\newcommand{\ROB}{\ensuremath{\mathtt{ROB}}\xspace}
\newcommand{\BL}{\ensuremath{\mathtt{Baseline}}\xspace}
\newcommand{\OPT}{\ensuremath{\mathtt{OPT}}\xspace}
\newcommand{\CLIP}{\ensuremath{\mathtt{CLIP}}\xspace}
\newcommand{\OFF}{\ensuremath{\mathtt{OFF}}\xspace}
\newcommand{\ON}{\ensuremath{\mathtt{ON}}\xspace}
\newcommand{\mbf}[1]{{\mathbf{#1}}}
\newcommand{\sref}[2]{\hyperref[#2]{#1 \ref{#2}}}
\definecolor{ao(english)}{rgb}{0.0, 0.5, 0.0}
\patchcmd{\hyper@makecurrent}{%
    \ifx\Hy@param\Hy@chapterstring
        \let\Hy@param\Hy@chapapp
    \fi
}{%
    \iftoggle{inappendix}{%
        \@checkappendixparam{chapter}%
        \@checkappendixparam{section}%
        \@checkappendixparam{subsection}%
        \@checkappendixparam{subsubsection}%
        \@checkappendixparam{paragraph}%
        \@checkappendixparam{subparagraph}%
    }{}%
}{}{\errmessage{failed to patch}}
\newcommand*{\@checkappendixparam}[1]{%
    \def\@checkappendixparamtmp{#1}%
    \ifx\Hy@param\@checkappendixparamtmp
        \let\Hy@param\Hy@appendixstring
    \fi
}
\apptocmd{\appendix}{\toggletrue{inappendix}}{}{\errmessage{failed to patch}}
\icmltitlerunning{Chasing Convex Functions with Long-term Constraints}
\begin{document}

\twocolumn[
\icmltitle{Chasing Convex Functions with Long-term Constraints}

\icmlsetsymbol{equal}{*}

\begin{icmlauthorlist}
\icmlauthor{Adam Lechowicz}{umass}
\icmlauthor{Nicolas Christianson}{caltech}
\icmlauthor{Bo Sun}{waterloo}
\icmlauthor{Noman Bashir}{mit}
\icmlauthor{Mohammad Hajiesmaili}{umass}
\icmlauthor{Adam Wierman}{caltech}
\icmlauthor{Prashant Shenoy}{umass}
\end{icmlauthorlist}

\icmlaffiliation{umass}{Manning College of Information and Computer Sciences, \textbf{University of Massachusetts Amherst}, USA.}
\icmlaffiliation{caltech}{Computing \& Mathematical Sciences, \textbf{California Institute of Technology}, USA.}
\icmlaffiliation{waterloo}{Cheriton School of Computer Science, \textbf{University of Waterloo}, Ontario, Canada.}
\icmlaffiliation{mit}{Computer Science \& Artificial Intelligence Laboratory, \textbf{Massachusetts Institute of Technology}, USA.}

\icmlcorrespondingauthor{Adam Lechowicz}{alechowicz@cs.umass.edu}

\icmlkeywords{Machine Learning, ICML}

\vskip 0.3in
]

\printAffiliationsAndNotice{} %

\begin{abstract}
We introduce and study a family of online metric problems with long-term constraints.  In these problems, an online player makes decisions $\mathbf{x}_t$ in a metric space $(X,d)$ to simultaneously minimize their hitting cost $f_t(\mathbf{x}_t)$ and switching cost as determined by the metric.  Over the time horizon $T$, the player must satisfy a long-term demand constraint $\sum_{t} c(\mathbf{x}_t) \geq 1$, where $c(\mathbf{x}_t)$ denotes the fraction of demand satisfied at time $t$. Such problems can find a wide array of applications to online resource allocation in sustainable energy/computing systems. We devise optimal competitive and learning-augmented algorithms for the case of bounded hitting cost gradients and weighted $\ell_1$ metrics, and further show that our proposed algorithms perform well in numerical experiments.    
\end{abstract}

\section{Introduction}
\label{sec:intro}
This paper introduces and studies a novel class of online metric problems with \textit{long-term demand constraints} motivated by emerging applications in the design of sustainable systems.
In \emph{convex function chasing with a long-term constraint}, an online player aims to satisfy a demand by making decisions in a normed vector space, paying a hitting cost based on time-varying convex cost functions which are revealed online, and switching cost defined by the norm.  The player is constrained to ensure that the entire demand is satisfied at or before the time horizon $T$ ends, and their objective is to minimize their total cost.  The generality of this problem makes it applicable to a wide variety of online resource allocation problems; in this paper, we consider one such special case, discussing its connections to other online settings and suggestions towards broad new areas of inquiry in \textit{online optimization with long-term constraints}.

Our motivation to introduce these problems is rooted in an emerging class of \textit{carbon-aware} control problems for sustainable systems.  A shared objective involves minimizing carbon emissions by shifting flexible workloads temporally and/or spatially to better leverage low-carbon electricity generation (e.g., renewables such as solar and wind).  Examples which have recently seen significant interest include carbon-aware electric vehicle (EV) charging~\cite{Cheng:22} and carbon-aware compute shifting~\cite{Wiesner:21,bashir2021enabling,radovanovic2022carbon,acun2022holistic,Hanafy:23:CarbonScaler}.

The problems we introduce in this paper build on a long line of related work in online algorithms.  Most existing work can be roughly classified into two types: \textit{online metric problems}, where many works consider multidimensional decision spaces and switching costs but do not consider long-term constraints~\cite{Borodin:92, Koutsoupias:09, chenSmoothedOnlineConvex2018a, bubeckChasingNestedConvex2019, bubeckMetricalTaskSystems2021, Bansal:22, bubeckRandomizedServerConjecture2022}, and \textit{online search problems}, which feature long-term demand constraints but do not consider multidimensional decision spaces or switching costs~\cite{ElYaniv:01, Lorenz:08, Mohr:14, SunZeynali:20}.  

We briefly review the direct precursors of our work below.
In the online metric literature, the problem we study is an extension of \textit{convex function chasing} (\CFC) introduced by~\citet{FriedmanLinial:93}, where an online player makes online decisions $\mbf{x}_t$ in a normed vector space $(X, \lVert \cdot \rVert)$ over a sequence of time-varying cost functions in order to minimize their total hitting and switching cost.  In the online search literature, the problem we study is a generalization of \textit{one-way trading} (\OWT) introduced by~\citet{ElYaniv:01}, in which an online player must sell an entire asset in fractional shares over a sequence of time-varying prices while maximizing their profit.

Despite extensive existing work in the online metric and online search tracks, few works simultaneously consider long-term demand constraints (as in \OWT) and movement/switching costs (as in \CFC).  The existing prior works~\cite{Lechowicz:23, Lechowicz:24} that consider both components are restricted to unidimensional decision spaces, as is typical in the online search literature.  However, generalizing from the unidimensional case is highly non-trivial; e.g., in convex function chasing with a long-term constraint, the problem cannot simply be decomposed over dimensions due to the shared constraint function and multidimensional switching cost.
Thus, in this work we tackle the following question: \textit{Is it possible to design algorithms for the studied problems that operate in multidimensional decision spaces while simultaneously considering long-term constraints, hitting costs, and switching costs?}

Although the aforementioned literature focuses on competitive algorithms in adversarial settings, there has recently been significant interest in moving beyond worst-case analysis, which can result in overly pessimistic algorithms.
The field of \textit{learning-augmented algorithms}~\cite{Lykouris:18, Purohit:18} has emerged as a paradigm for designing and analyzing algorithms that incorporate untrusted machine-learned advice to improve average-case performance without sacrificing worst-case performance bounds.  Such algorithms are evaluated through the metrics of \textit{consistency and robustness} (see Def.~\ref{def:constrob}). Recent studies have proposed learning-augmented algorithms for related problems, including convex function chasing~\cite{Christianson:22}, one-way trading~\cite{SunLee:21}, metrical task systems~\cite{Christianson:23MTS}, and online search~\cite{Lee:24}.
While the literature in each of these tracks considers a spectrum of different advice models%
, their results prompt a natural open question: \textit{Can we design algorithms for online metric problems with long-term constraints that effectively utilize untrusted advice (such as machine-learned predictions) to improve performance while preserving worst-case competitive guarantees?}
\vspace{-1em}

\paragraph{Contributions. }
Despite extensive prior literature on adjacent problems, the problems we propose in this paper are the first online settings to combine long-term demand constraints with multidimensional decision spaces and switching costs.  
We introduce convex function chasing with a long-term constraint, and a special case called \textit{online metric allocation with a long-term constraint}.  The general forms of both are independently interesting for further study.

We obtain positive results for both of the questions posed above under problem instantiations %
that are especially relevant for motivating applications (namely, weighted $\ell_1$ norms, and cost functions with bounded gradients).  We provide the first competitive results for online problems of this form in \autoref{sec:comp}, and show that our proposed algorithm (\sref{Algorithm}{alg:roro}) achieves the best possible competitive ratio.  In \autoref{sec:clip}, we propose a learning-augmented algorithm, \CLIP (\sref{Algorithm}{alg:clip}), and show it achieves the provably optimal trade-off between consistency and robustness.

To achieve these results, the proposed algorithms must tackle technical challenges distinct from prior work on adjacent problems.  
Motivated by difficulties in directly applying algorithms for unidimensional decision spaces in the online search literature, we build on a generalization of threshold-based design called \textit{pseudo-cost minimization}.  While this framework is well-known in the online search literature, it is a new idea in the context of online metric problems and multidimensional decision spaces (see \autoref{sec:comp}).  %

Our learning-augmented algorithm \CLIP uses a novel adaptive optimization-based approach to achieve specific target consistency and robustness bounds.  In recent years, there has been interest in understanding fundamental trade-offs between consistency and robustness and designing algorithms that exactly match those trade-offs, which has proven to be non-trivial~\cite{Wei:20}.  \CLIP's approach directly incorporates the lower bound and exactly matches it, distinguishing it from algorithms that achieve, e.g., an asymptotically optimal trade-off.  To achieve this, \CLIP introduces a \textit{projected consistency constraint} designed to guarantee consistency against the advice \ADV by continuously comparing solutions in terms of cost incurred so far, switching cost trajectories, and the projected worst-case cost required to complete the long-term constraint. We believe \CLIP's high-level approach is applicable to other problems and may improve current results in the broader field of learning-augmented algorithms.

\vspace{-1em}

\section{Problem Formulation and Preliminaries} 
\label{sec:prob}
This section formalizes convex function chasing and online metric allocation with long-term constraints, motivating them with a sustainability application. We also provide preliminaries used throughout the paper, and give initial results to build algorithmic connections between both problems. 
\vspace{-0.5em}

\paragraph{Convex function chasing with a long-term constraint.}
A player chooses decisions $\mbf{x}_t \in X \subseteq \mathbb{R}^d$ %
online from a normed vector space $(X, \lVert \cdot \rVert)$ in order to minimize their total cost $\sum_{t=1}^T f_t(\mbf{x}_t) + \sum_{t=1}^{T+1} \lVert \mbf{x}_t - \mbf{x}_{t-1} \rVert$, where $f_t(\cdot) : X \rightarrow \mathbb{R}$ %
is a convex ``hitting'' cost that is revealed just before the player chooses $\mbf{x}_t$, and $\lVert \mbf{x}_t - \mbf{x}_{t-1} \rVert$ is a switching cost associated with changing decisions between rounds.
Additionally, the player must satisfy a long term constraint of the form $\sum_{t=1}^T c(\mbf{x}_t) = 1$, where $c(\mbf{x}) : X \to [0,1]$ 
gives the fraction of the constraint satisfied by a decision $\mbf{x}$.  
The offline version of the problem is formalized as follows:
\begingroup
\allowdisplaybreaks
\begin{align} 
\min_{\{\mbf{x}_t\}_{ t \in [T] }} &\underbrace{ \ \sum\nolimits_{t=1}^T f_t( \mbf{x}_t ) }_{\text{Convex hitting cost}} + \underbrace{ \sum\nolimits_{t=1}^{T+1} \lVert \mbf{x}_t - \mbf{x}_{t-1} \rVert }_{\text{Switching cost}} \label{align:objMin} \\
\text{s.t.}\quad & \underbrace{\sum\nolimits_{t=1}^T c(\mbf{x}_t) \geq 1,}_{\text{Long-term constraint}}\\
&\mbf{x}_t^i \in [0, 1]  \ \forall i \in [d], \ \forall t \in [T]. \label{align:pos}
\end{align}
\endgroup
We denote the \textit{utilization} at time $t$ by $z^{(t)} = \sum_{\tau=1}^t c(\mbf{x}_{\tau})$, which gives the total fraction of the long-term constraint satisfied up to and including time $t$.
\paragraph{Assumptions.}

Here, we describe the precise variant of \CFLno for which we design algorithms in the remainder of the paper.  
{\it Let $\lVert \mbf{x} - \mbf{x}' \rVert \coloneqq \lVert \mbf{x} - \mbf{x}' \lVert_{\ell_1 (\mbf{w})}$, where $\lVert \cdot \lVert_{\ell_1 (\mbf{w})}$ denotes the weighted $\ell_1$ norm with weight vector $\mbf{w} \in \mathbb{R}^d$.  

We define the long-term constraint such that $c(\mbf{x}) \coloneqq \lVert \mbf{x} \rVert_{\ell_1( \mbf{c})}$, i.e., the weighted $\ell_1$ norm with weight vector $\mbf{c} \in \mathbb{R}^d$.  Then let the metric space $X$ be the $\ell_1$ ball defined by $X \coloneqq \{ \mbf{x} \in \mathbb{R}^d : c(\mbf{x}) \leq 1 \}$.  Note that $\mbf{x}$ is restricted to lie in the positive orthant by \eqref{align:pos}.

For all cost functions $f_t(\cdot) : X \rightarrow \mathbb{R}$, we assume bounded gradients such that $L \leq \nicefrac{ \left[ \nabla f_t \right]^i } {\mbf{c}^i} \leq U \ \forall i \in [d], t \in [T]$, where $i$ denotes the $i$th dimension of the corresponding vector, and $L, U$ are known positive constants.  This also gives as a corollary that $f_t(\mbf{x}) \geq 0$  for any valid $\mbf{x}$.

Letting $\mbf{0}$ denote the origin in $\mathbb{R}^d$ (w.l.o.g), we have the property $f_t(\mbf{0}) = 0$ for all $t \in [T]$, i.e., that ``satisfying none of the long-term constraint costs nothing'', since $c(\mbf{0}) = 0$.  We assume the player starts and ends at the origin, i.e., $\mbf{x}_0 = \mbf{0}$ and $\mbf{x}_{T+1} = \mbf{0}$, to enforce switching ``on'' and ``off.'' These assumptions are intuitive and reasonable in practice, e.g., in our example motivating application below.

For analysis, it will be useful to establish a shorthand for the magnitude of the switching cost. 
Let $\beta \coloneqq \max \left( \nicefrac{\mbf{w}^i}{\mbf{c}^i} \right)$, which gives the greatest magnitude of the switching cost coefficient when normalized by the constraint function.  We assume that $\beta$ is bounded on the interval $[0, \nicefrac{U-L}{2})$; if $\beta$ is ``large'' (i.e., $> \nicefrac{U-L}{2}$), we can show that the player should prioritize minimizing the switching cost.\footnote{As brief justification for the bounds on $\beta$, consider that a feasible solution may have objective value $L + 2\beta$.  If $\beta > \nicefrac{U-L}{2}$, $L + 2\beta > U$, and we argue that the incurred switching cost is more important than the cost functions accepted.
} 

Recall that the long-term constraint must be satisfied before the sequence ends.  If the player has satisfied $z^{(t)}$ of the constraint at time $t$, we assume a \textit{compulsory trade} begins at time $j$ as soon as $\left( T - (j+1) \right) \cdot \mbf{c}^i < \left( 1 - z^{(j)} \right) \ \forall i \in [d]$ (i.e., when the time steps after $j$ are not sufficient to satisfy the constraint by moving to a point at the boundary defined by \eqref{align:pos}).  During this compulsory trade, a cost-agnostic algorithm takes over, making maximal decisions to satisfy the constraint.  $T$ is unknown in advance, but the player is notified when this compulsory trade should begin.\footnote{Note that pre-notification of $T$ is necessary to ensure constraint satisfaction -- if $T$ is completely unknown, the algorithm has very little flexibility to do anything without risking a violation of the long-term constraint.}  For the problem to remain technically interesting, we assume that this compulsory trade is a small portion of the sequence.\footnote{We assume the first time $j'$ where $\left( T - (j'+1)\right) \mbf{c}^i < 1 \ \forall i$ satisfies $j' \gg 1$, implying that $T$ and $\mbf{c}$ are both appropriate for the constraint.  This is reasonable for our motivating applications, since short deadlines (small $T$) or low throughput (small $\mbf{c}^i \ \forall i$) imply that even offline solutions suffer a lack of flexibility in reducing the overall cost. }}

For brevity, we henceforth use \CFL to refer to the variant of convex function chasing with a long-term constraint under the assumptions outlined here.
While this setting is challenging theoretically and useful for real-world applications as outlined below, it is worth mentioning that our solution techniques in Sections~\ref{sec:comp} and \ref{sec:clip} do not heavily rely on the idiosyncrasies of, e.g., the $\ell_1$ norm.  Furthermore, as is common in the literature, the results in the rest of this paper can extend to other metrics by leveraging finite-dimension norm equivalencies~\cite{Johnson:20}.

\paragraph{An example motivating application.}
\CFL can model a variety of applications, including specific applications that motivate this study. Consider a \textit{carbon-aware temporal load shifting} application with heterogeneous servers.  Here, each of the $d$ dimensions corresponds to one of $d$ heterogeneous servers.  An algorithm makes decisions $\mbf{x}_t \in \mathbb{R}^d$, where $\mbf{x}_{t}^i \in [0,1]$ denotes the load of the $i^{\text{th}}$ server at time $t$.  The long-term constraint $\sum_{t=1}^T c(\mbf{x}_t) \geq 1$ enforces that an entire workload should be finished before time $T$, and each coefficient $\mbf{c}^i$ represents the throughput of the $i$th server.  Each cost function $f_t(\mbf{x}_t)$ represents the carbon emissions due to the electricity usage of the servers configured according to $\mbf{x}_t$, and the switching cost $\lVert \cdot \rVert_{\ell_1 (\mbf{w})}$ captures the carbon emissions overhead (e.g., extra latency) of pausing, resuming, scaling, and moving the workload between servers.

Our motivation for the assumptions placed on \CFL are deeply rooted in this and other similar applications, where the switching cost and constraint function are both typically best modeled as a linear (i.e., $\ell_1$) function, and bounds on the marginal hitting cost (i.e., the bounded gradient assumption) are reasonable to obtain.

\paragraph{Online metric allocation with a long-term constraint.}
\citet{Bansal:22} introduced the online metric allocation problem (\MAP), which connects several online metric problems.  \MAP on a star metric is equivalent to \CFC when cost functions are separable over dimensions and supported on the unit simplex $\Delta_n$.\footnote{Given metric space $X$, consider $\Delta(X)$, which represents the set of probability measures over the points of $X$. Since $X$ is finite, we have that $\lvert X \rvert = n$ and $\Delta(X)$ is denoted as $\Delta_n$.}  Furthermore, the randomized metrical task systems problem (\MTS) is a special case of \MAP when cost functions are linear and increasing.

We build on this formulation in our setting and introduce \textit{online metric allocation with a long-term constraint}, which captures a particularly interesting special case of \CFL.
The general version of the problem considers an $n$-point metric space $(X, d)$, and a unit resource which can be allocated in arbitrary fractions to the points of $X$.  At each time $t \in [T]$, convex cost functions $f_t^a(\cdot) : [0,1] \rightarrow \mathbb{R}$ arrive at each point $a$ in the metric space.  The online player chooses an allocation $x_t^a$ to each point $a$ in the metric space, such that $\sum_{a=1}^n x_t^a = 1$ for all $t \in [T]$. When changing this allocation between time steps, the player pays a switching cost defined by $d(a, b)$ for any distinct points $a, b \in X$.
As in \CFL, the long-term constraint enforces that $\sum_{t=1}^T c(\mbf{x}_t) \geq 1$, where $c(\mbf{x})$ is a linear and separable function of the form $c(\mbf{x}) = \sum_{a = 1}^n \mbf{c}^a x^a$.  As previously, the player's objective is to minimize the total cost (hitting plus switching costs) incurred while satisfying the long-term constraint.

\paragraph{Assumptions.}
In the rest of the paper, we consider an instantiation of \MALno on \textit{weighted star metrics} that is particularly relevant to a wide class of resource allocation problems.

{\it To ensure the long-term constraint is non-trivial, we denote at least one point $a'$ in the metric space as an ``\OFF state'', where $\mbf{c}^{a'} = 0$ and $f_t^{a'}(x) = 0 \ \ \forall t \in [T], \forall x \in [0,1]$.  We do not require that an \OFF state be situated at the center of the star, although this condition creates a useful special case of \MAL (see \sref{Lemma}{lem:simplexTransform} for details).  For all other cost functions, we carry forward the assumptions that $L \leq \left( \nicefrac{\frac{d f_t^a}{dx^a}}{\mbf{c}^a} \right) \leq U, f_t^a(0) = 0 \ \ \forall t \in [T]$.  We define $\beta \coloneqq \max_{a', a} \nicefrac{d( a', a)}{\mbf{c}^a}$, i.e., the maximum distance between any \OFF state and any \ON state in the weighted star, normalized by the value of $\mbf{c}^a$ at a specific \ON state.  We inherit the same assumption that $\beta \in [0, \nicefrac{U-L}{2})$.}  For brevity, we henceforth use \MAL to refer to the problem on weighted star metrics with the assumptions described above.

\paragraph{Competitive analysis. } 
{Our goal is to design an algorithm that guarantees a small \textit{competitive ratio}~\cite{Manasse:88, Borodin:92}, i.e., performs nearly as well as the offline optimal solution.  
Formally, let $\mathcal{I} \in \Omega$ denote a valid input sequence, where $\Omega$ is the set of all feasible inputs for the problem.  Let $\OPT(\mathcal{I})$ denote the cost of an optimal offline solution for instance $\mathcal{I}$, and let $\ALG(\mathcal{I})$ denote the cost incurred by running an online algorithm $\ALG$ over the same instance.  The competitive ratio is then defined as $
\textnormal{CR}(\ALG) \coloneqq \sup_{\mathcal{I} \in \Omega} \nicefrac{ \ALG(\mathcal{I}) }{ \OPT(\mathcal{I}) } = \eta
$, and \ALG is said to be $\eta$-\textbf{competitive}.
Note that $\textnormal{CR}(\ALG)$ is always $\geq 1$, and a \textit{lower} competitive ratio implies that the online algorithm is guaranteed to be \textit{closer} to the offline optimal solution.}

\paragraph{Learning-augmented consistency and robustness.}
In the emerging literature on learning-augmented algorithms, competitive ratio is interpreted via the notions of \textit{consistency} and \textit{robustness}, introduced by~\cite{Lykouris:18, Purohit:18}.  

\begin{definition} \label{def:constrob}
{\it Let $\LALG$ denote a learning-augmented online algorithm provided with advice denoted by $\ADV$.
Then $\LALG$ is said to be $b$-\textbf{consistent} if it is $b$-competitive with respect to $\ADV$.  Conversely, $\LALG$ is $r$-\textbf{robust} if it is $r$-competitive with respect to $\OPT$ when given any $\ADV$ (i.e., regardless of the performance of $\ADV$). }
\end{definition}
\vspace{-1em}

\paragraph{A connection between \CFL and \MAL.}
Below we state two useful results connecting the \CFL and \MAL settings that influence our algorithm design for each problem. 

\begin{lemma} \label{lem:simplexTransform}
    For any \MAL instance on a weighted star metric $(X, d)$, there is a corresponding \CFL instance on $(\mathbb{R}^{n-1}, \lVert \cdot \rVert_{\ell_1 (\mbf{w'})})$ that preserves $f_t^a(\cdot) \ \forall t, c(\cdot) \ \forall a \in X$. Furthermore, for any points $(a,b) \in X$, their distance is upper bounded by a weighted $\ell_1$ norm between the corresponding vectors $(\mbf{a}, \mbf{b}) \in \mathbb{R}^{n-1}$, i.e., $d(a,b) \leq \lVert \mbf{a} - \mbf{b} \rVert_{\ell_1 (\mbf{w'})}$.
    If the \MAL instance contains an \OFF state at the center of the weighted star $X$, distances are preserved exactly, i.e., $d(a,b) = \lVert \mbf{a} - \mbf{b} \rVert_{\ell_1 (\mbf{w'})}$.
\end{lemma}
Leveraging \sref{Lemma}{lem:simplexTransform}, the following result explicitly connects the competitive results of the \CFL and \MAL settings.
\begin{proposition} \label{prop:CFL-MAL}
Given an algorithm \ALG for \CFL, any competitive bound for \ALG gives an identical competitive bound for \MAL with parameters corresponding to the \CFL instance constructed in \sref{Lemma}{lem:simplexTransform}.
\end{proposition}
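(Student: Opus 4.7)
The plan is to reduce any \MAL instance $\mathcal{I}_{\mathrm{MAL}}$ on a weighted star to the \CFL instance $\mathcal{I}_{\mathrm{CFL}}$ produced by \sref{Lemma}{lem:simplexTransform}, run \ALG on $\mathcal{I}_{\mathrm{CFL}}$, and reinterpret its trajectory back in \MAL. Closing the loop requires two inequalities: (i) $\ALG_{\mathrm{MAL}}(\mathcal{I}_{\mathrm{MAL}}) \leq \ALG(\mathcal{I}_{\mathrm{CFL}})$, and (ii) $\OPT(\mathcal{I}_{\mathrm{CFL}}) \leq \OPT(\mathcal{I}_{\mathrm{MAL}})$. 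Chaining these with the assumed $\eta$-competitiveness of \ALG on \CFL gives $\ALG_{\mathrm{MAL}}(\mathcal{I}_{\mathrm{MAL}}) \leq \ALG(\mathcal{I}_{\mathrm{CFL}}) \leq \eta \cdot \OPT(\mathcal{I}_{\mathrm{CFL}}) \leq \eta \cdot \OPT(\mathcal{I}_{\mathrm{MAL}})$, i.e., the desired $\eta$-competitive bound for \MAL.

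\paragraph{Trajectory identification and inequality (i).} Use the natural coordinate identification: a \MAL allocation $(x_t^{a_1}, \dots, x_t^{a_{n-1}}, x_t^{a'})$ on the $n$-point weighted star, with \OFF point $a'$ absorbing residual mass $x_t^{a'} = 1 - \sum_i x_t^{a_i}$, corresponds to the \CFL decision $\mbf{x}_t = (x_t^{a_1}, \dots, x_t^{a_{n-1}})$, and the \CFL boundary $\mbf{x}_0 = \mbf{x}_{T+1} = \mbf{0}$ maps to the all-\OFF allocation. Under this mapping \sref{Lemma}{lem:simplexTransform} preserves $f_t^a(\cdot)$ and $c(\cdot)$, so \ALG's trajectory has identical hitting cost and long-term utilization whether evaluated in \CFL or in \MAL. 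For the switching cost, the lemma's upper bound $d(a,b) \leq \lVert e_a - e_b \rVert_{\ell_1(\mbf{w'})}$ on basis vectors extends pointwise, via the standard transport-decomposition argument on a star, to the Wasserstein distance between the allocations associated with $\mbf{x}_{t-1}$ and $\mbf{x}_t$; telescoping over $t$ yields inequality (i).

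\paragraph{Inequality (ii) and the main obstacle.} For (ii), push an optimal offline \MAL trajectory $\{\mbf{x}_t^\star\}$ through the same identification; the mapped sequence is \CFL-feasible and inherits the \MAL hitting cost and long-term utilization. The subtle step, and the main technical obstacle, lies in the switching cost: \sref{Lemma}{lem:simplexTransform} only provides an \emph{upper} bound on $d$, so a priori the \CFL cost of $\{\mbf{x}_t^\star\}$ could strictly exceed its \MAL cost and inequality (ii) could fail. The resolution is that the embedding of the weighted star into $(\mathbb{R}^{n-1}, \lVert \cdot \rVert_{\ell_1(\mbf{w'})})$ underlying the lemma is tight -- indeed, isometric -- on the simplex of leaf allocations, which is the classical isometric embedding of a weighted tree metric into $\ell_1$ with $\mbf{w'}^i$ matching the edge weight to leaf $i$. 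Hence the mapped trajectory has \CFL cost exactly $\OPT(\mathcal{I}_{\mathrm{MAL}})$, which establishes (ii) and closes the chain. Verifying this tightness is the crux: any slack would inflate $\OPT(\mathcal{I}_{\mathrm{CFL}})$ and degrade the transferred competitive bound accordingly.
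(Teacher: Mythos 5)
Your inequality (i) and the overall reduction framework match the paper's. The gap is in your resolution of the obstacle you correctly identify at (ii). You claim the embedding from \sref{Lemma}{lem:simplexTransform} is isometric on the simplex of leaf allocations, invoking the classical tree-into-$\ell_1$ embedding. But the paper's own proof of \sref{Lemma}{lem:simplexTransform} shows explicitly that for $\mbf{x},\mbf{y}\in\Delta_n$ with no mass on the \OFF point, $\lVert \Phi\mbf{x}-\Phi\mbf{y}\rVert_{\ell_1(\mbf{w}')} = \lVert\mbf{x}-\mbf{y}\rVert_{\ell_1(\mbf{w})} + \lVert\mbf{x}-\mbf{y}\rVert_{\ell_1}\cdot\mbf{w}^0$, so the \CFL distance strictly exceeds the \MAL distance whenever $\mbf{w}^0>0$, and the lemma explicitly states the transformation is a bijective isometry \emph{only if} $\mbf{w}^0=0$. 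The \MAL assumptions impose $\mbf{c}^{a'}=0$ and $f_t^{a'}\equiv 0$ for the \OFF point but do \emph{not} impose $\mbf{w}^{a'}=0$. Hence pushing the \MAL-optimal trajectory through $\Phi$ can strictly inflate the switching cost, and your inequality $\OPT(\mathcal{I}_{\mathrm{CFL}})\le\OPT(\mathcal{I}_{\mathrm{MAL}})$ does not follow.

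The paper avoids this comparison altogether. It never compares $\OPT(\mathcal{I}_{\mathrm{CFL}})$ and $\OPT(\mathcal{I}_{\mathrm{MAL}})$ directly; instead it restricts attention to competitive bounds whose lower bound on $\OPT$ is derived purely from hitting costs and the constraint function — i.e., bounds that implicitly assume $\OPT$ pays no switching cost (as in \sref{Lemma}{lem:opt-lb} used in the proof of \autoref{thm:alphaCompCFL}). Since hitting costs and the constraint function are preserved \emph{exactly} under $\Phi$, that specific lower bound $B\le\OPT(\mathcal{I}_{\mathrm{CFL}})$ applies verbatim as $B\le\OPT(\mathcal{I}_{\mathrm{MAL}})$, and the chain closes as $\ALG_{\mathrm{MAL}}(\mathcal{I}_{\mathrm{MAL}})\le\ALG(\mathcal{I}_{\mathrm{CFL}})\le\eta B\le\eta\,\OPT(\mathcal{I}_{\mathrm{MAL}})$. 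This is also why the paper's own gloss of the proposition in the main text contains the caveat ``if $\ALG$ is $\eta$-competitive against $\OPT$ which pays no switching cost.'' Your proof proposal, as written, asserts an unconditional transfer and rests on an isometry that the paper's lemma explicitly disclaims; to fix it you would need to either add the $\mbf{w}^{a'}=0$ assumption or restrict, as the paper does, to the hitting-cost-only lower-bound structure.
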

The proofs of both are deferred to \autoref{apx:alphaCompMAL}.  At a high-level, \sref{Proposition}{prop:CFL-MAL} shows that if $\ALG$ is $\eta$-competitive against $\OPT$ which pays no switching cost, \sref{Lemma}{lem:simplexTransform} implies it is also $\eta$-competitive on \MAL.
In the next section, our proposed algorithms will be presented using \CFL notation, but these results provide the necessary condition which allows them to solve \MAL as well.

\section{Designing Competitive Algorithms}
\label{sec:comp}
In this section, we present our robust algorithm design. %
We start by discussing some inherent challenges in the problem, highlighting reasons why existing algorithms fail.  Next, we %
introduce a generalization of techniques from online search called pseudo-cost minimization, which underpins our competitive algorithm, \ALGone (\sref{Algorithm}{alg:roro}).  Finally, we state (and prove in \autoref{apx:comp}) two bounds, which jointly imply that \ALGone achieves the optimal competitive ratio for \CFL and \MAL.
\paragraph{Challenges.}
Canonical algorithms for \CFC~\cite{chenSmoothedOnlineConvex2018a,sellkeChasingConvexBodies2020,Zhang:21} make decisions that attempt to minimize (or nearly minimize) the hitting cost of cost functions $f_t(\cdot)$ and switching cost across all time steps.  As discussed in the introduction, the structure of the problem with a long-term constraint means that such myopic cost-minimization algorithms will fail in general.  To illustrate this, consider the actions of a minimizer-driven algorithm on an arbitrary sequence with length $T$.  For each $t < T$, the algorithm chooses a point at or near $\mbf{0}$, since $\mbf{0}$ is the minimizer of each $f_t$.  However, since $c(\mbf{0}) = 0$, such an algorithm must subsequently satisfy all or almost all of the long-term constraint during the \textit{compulsory trade}, incurring an arbitrarily bad hitting cost.

This challenge motivates an algorithm design that balances between the two extremes of finishing the long-term constraint ``immediately'' (i.e., at early time steps), and finishing the long-term constraint ``when forced to'' (i.e., during the compulsory trade).  Both extremes result in a poor competitive ratio.  Algorithms in the online search literature (e.g., online knapsack, \OWT) leverage a \textit{threshold-based design} to address precisely this problem, as in~\cite{Zhou:08, SunZeynali:20, Lechowicz:24}.  However, such threshold-based algorithms are traditionally derived for unidimensional decision spaces with no switching costs.  

Generalizing beyond unidimensional decision spaces proves to be non-trivial -- consider the following example.  Towards the application of carbon-aware load shifting, it may reasonable to expect that cost functions are separable over dimensions.  It is thus reasonable to consider whether an existing unidimensional algorithm (i.e., as shown by~\citet{Lechowicz:24}) can solve the problem via \textit{decomposition}, e.g., by running $d$ instances of the unidimensional algorithm.  However, such a technique fails from the perspective of competitive analysis, because the necessary coupling between $d$ independently determined unidimensional decisions and the multidimensional long-term constraint is broken.  A multidimensional switching cost complicates such a decomposition even further, as decisions in one dimension increase cost in other dimensions, potentially beyond the competitive threshold used to determine a decision in the first place.

In what follows, we describe a \textit{pseudo-cost minimization approach}, which generalizes the threshold-based design to operate in the \CFL setting.  A key enabling result for this approach is the ability to simultaneously consider all dimensions for both the hitting costs and the long-term constraint, leveraging the definition of the constraint function $c(\mbf{x})$.

\begin{algorithm}[t]
   \caption{Pseudo-cost minimization algorithm (\ALGone)}
   \label{alg:roro}
{\small
\begin{algorithmic}
   \STATE {\bfseries input:} long-term constraint function $c(\cdot)$, distance metric $\lVert \cdot \rVert_{\ell_1 (\mbf{w})}$, pseudo-cost threshold function $\phi(z)$
   \STATE {\bfseries initialize:} $z^{(0)} = 0;$
   \WHILE {cost function $f_t (\cdot)$ is revealed and $z^{(t-1)} < 1$}
   \STATE solve pseudo-cost minimization problem: {\small \begin{align}
       \mbf{x}_t = \argmin_{\mbf{x} \in X : c(\mbf{x}) \leq 1-z^{(t-1)}} &f_t(\mbf{x}) + \lVert \mbf{x} - \mbf{x}_{t-1} \rVert_{\ell_1 (\mbf{w})} \nonumber\\
       &- \int_{z^{(t-1)}}^{z^{(t-1)} + c(\mbf{x})} \phi (u) du \label{eq:pseudocostRORO}
    \end{align}}
   \STATE update utilization $z^{(t)} = z^{(t-1)} + c(\mbf{x}_t)$
   \ENDWHILE
\end{algorithmic}}
\end{algorithm}
\setlength{\textfloatsep}{6pt}%
\setlength{\intextsep}{0pt}

\paragraph{Algorithm description. } 
Recall that $z^{(t)}$ gives the fraction of the long-term constraint satisfied at time $t$.  Building off of the intuition of threshold-based design, we define a function $\phi$, which will be used to compute a \textit{pseudo-cost minimization} problem central to our robust algorithm.

\begin{definition}[Pseudo-cost threshold function $\phi$ for \CFL]\label{dfn:phi-min}
{\it For any utilization $z \in [0,1]$, $\phi$ is defined as:\vspace{-0.3em}
\begin{equation}\label{eq:phi-min}
\phi(z) = U - \beta + (\nicefrac{U}{\alpha} - U + 2 \beta) \exp(\nicefrac{z}{\alpha}),
\end{equation}\vspace{-1em}
where $\alpha$ is the competitive ratio and is defined in \eqref{eq:alpha}.}
\end{definition}

Then our algorithm (\sref{Algorithm}{alg:roro}, referred to as \ALGone) solves the pseudo-cost minimization problem defined in \eqref{eq:pseudocostRORO} to obtain a decision $\mbf{x}_t$ at each time step. At a high level, the inclusion of $\phi$ in this pseudo-cost problem enforces that, upon arrival of a cost function, the algorithm satisfies ``just enough'' of the long-term constraint. Concretely, the structure of the $\phi$ function enforces that $\phi(z^{(t)}) - \beta$ corresponds to the ``best cost function seen so far''.  Then, if a good cost function arrives, the pseudo-cost minimization problem solves for the $\mbf{x}_t$ which guarantees a competitive ratio of $\alpha$ against the current estimate of $\OPT$.

At a glance, it is not obvious that the minimization problem in \eqref{eq:pseudocostRORO} is tractable; however, in \autoref{apx:pseudo-convex}, we show that the problem is \textit{convex}, implying that it can be solved efficiently.  In \autoref{thm:alphaCompCFL}, we state the competitive result for \ALGone.  We discuss the significance of the result below, and relegate the full proof to \autoref{apx:alphaCompCFL}.

\begin{theorem} \label{thm:alphaCompCFL}
\ALGone is $\alpha$-competitive for \CFL, where $\alpha$ is the solution to $\frac{U - L - 2 \beta}{U - \nicefrac{U}{\alpha} - 2 \beta} = \exp(1/\alpha)$, given by \vspace{-0.5em}
\begin{equation}
    \alpha \coloneqq \left[ W \left( \left( \frac{2\beta}{U} + \frac{L}{U} -1 \right) e^{\frac{2\beta}{U}-1} \right) - \frac{2\beta}{U} + 1 \right]^{-1} \kern-1em \label{eq:alpha},
\end{equation}
where $W$ is the Lambert $W$ function~\cite{Corless:96LambertW}.
\end{theorem}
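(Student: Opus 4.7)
The plan is to extend the threshold-based pseudo-cost analysis from the one-way trading literature to this multidimensional setting with switching costs. The function $\phi$ is defined so that it plays two roles: its integral $\int_0^z \phi(u)\,du$ upper-bounds \ALGone's cost at utilization $z$, while $\phi(z)$ lower-bounds the best per-unit-constraint cost any algorithm (including $\OPT$) could have seen up to this point. The competitive ratio $\alpha$ emerges as the tightest constant reconciling these two bounds. First I would establish boundary and structural properties of $\phi$: direct computation using the defining relation in \eqref{eq:alpha} gives $\phi(0) = U/\alpha + \beta$ and $\phi(1) = L + \beta$ (the $\beta$ shift captures the ``switching tax'' per unit of utilization), and $\phi$ is monotonically decreasing and convex on $[0,1]$, satisfying a first-order ODE that will close the analysis.

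Next I would upper bound \ALGone's total cost. Using the pseudo-cost optimality in \eqref{eq:pseudocostRORO} evaluated at appropriate test points (e.g.\ $\mbf{x}=\mbf{0}$), together with the convexity of the pseudo-cost program (cited from \autoref{apx:pseudo-convex}), which certifies $\mbf{x}_t$ as a global minimizer, I would derive a per-step inequality of the form $f_t(\mbf{x}_t) + \lVert\mbf{x}_t-\mbf{x}_{t-1}\rVert_{\ell_1(\mbf{w})} \leq \int_{z^{(t-1)}}^{z^{(t)}} \phi(u)\,du + (\text{switching correction})$. Summing over $t$, telescoping the switching corrections (which are controlled by $\beta$), and separately bounding the compulsory-trade contribution by $U(1-z^{(j')}) + 2\beta$ using the standing assumption that $j' \gg 1$, yields a clean bound of the form $\ALGone(\mathcal{I}) \leq \int_0^1 \phi(u)\,du + 2\beta$.

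Then I would lower bound $\OPT(\mathcal{I})$. The key observation is that whenever \ALGone accepts an incremental unit of utilization at level $z$, pseudo-cost optimality forces the incoming cost function to have effective marginal gradient at most $\phi(z) - \beta$. In particular, by the time \ALGone reaches utilization $z^{(T)}$, every cost function up to that point had effective marginal gradient at least $\phi(z^{(T)}) - \beta$, so $\OPT$ (which must satisfy the full constraint) pays at least $\max(L,\,\phi(z^{(T)}) - \beta)$. Combining with the upper bound yields $\int_0^1 \phi(u)\,du + 2\beta \leq \alpha\cdot(\phi(z^{(T)}) - \beta)$ for all $z^{(T)}\in[0,1]$; forcing equality produces precisely the ODE whose solution is \eqref{eq:phi-min}, with the tightest $\alpha$ given by the Lambert-$W$ expression in \eqref{eq:alpha}. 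Invoking \sref{Proposition}{prop:CFL-MAL} then transfers the bound to \MAL.

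The main obstacle is controlling switching cost in the multidimensional setting. In the classical one-dimensional analysis, ``one unit of utilization accepted'' maps immediately to ``one unit of cost incurred,'' but here each change in $\mbf{x}_t$ incurs $\ell_1(\mbf{w})$ switching cost that must be charged carefully against the $\phi$-discount in the pseudo-cost objective. This is precisely why $\beta = \max_i \mbf{w}^i/\mbf{c}^i$ enters the boundary conditions additively, and why the upper-bound argument must telescope switching terms rather than bound them pointwise. A secondary challenge is handling the compulsory trade at the end of the sequence cleanly; the assumption $j' \gg 1$ is what allows its contribution to be absorbed into an $O(\beta)$ constant overhead without degrading the overall ratio.
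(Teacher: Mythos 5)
Your high-level strategy matches the paper's: a pseudo-cost upper bound on $\ALGone$'s total cost, a lower bound on $\OPT$ in terms of the threshold function, and an identity relating the two to produce $\alpha$. Your boundary computations $\phi(0) = U/\alpha + \beta$ and $\phi(1) = L + \beta$ are also correct. However, there is a genuine gap in the execution that breaks the final combination step.

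The gap is in the upper bound on $\ALGone$. You claim a \emph{fixed} bound $\ALGone(\mathcal{I}) \leq \int_0^1 \phi(u)\,du + 2\beta$ and then try to compare it against $\OPT \geq \phi(z^{(T)}) - \beta$ \emph{for all} $z^{(T)}$. This cannot work: the correct upper bound (as in the paper's Lemma~\ref{lem:alg-ub}) is parametrized by the utilization $z^{(j)}$ reached before the compulsory trade,
\begin{align*}
\ALGone(\mathcal{I}) \;\leq\; \int_0^{z^{(j)}} \phi(u)\,du + \beta z^{(j)} + (1 - z^{(j)})\,U,
\end{align*}
and the ratio is controlled because this quantity is \emph{exactly} $\alpha\bigl(\phi(z^{(j)}) - \beta\bigr)$ for every $z^{(j)} \in [0,1]$. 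The excess cost $(1-z^{(j)})U$ paid during the compulsory trade is large precisely when $z^{(j)}$ is small, and this is compensated by $\OPT$'s lower bound $\phi(z^{(j)}) - \beta$ also being large; your fixed upper bound discards this trade-off. Concretely, at $z^{(j)} = 1$ the identity gives $\int_0^1 \phi(u)\,du = \alpha L - \beta$, so your claimed inequality $\int_0^1 \phi(u)\,du + 2\beta \leq \alpha(\phi(1) - \beta) = \alpha L$ reads $\alpha L + \beta \leq \alpha L$, which is false for any $\beta > 0$. Relatedly, the switching cost does \emph{not} telescope to an $O(\beta)$ constant — the algorithm may repeatedly ramp up from and back to the origin, so the switching cost must be charged per unit of utilization (giving the $\beta z^{(j)}$ term), not bounded by $2\beta$. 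To repair the argument you need to keep the $z^{(j)}$-parametrized upper bound and verify the pointwise identity $\int_0^z \phi(u)\,du + \beta z + (1-z)U = \alpha[\phi(z) - \beta]$, rather than attempting a uniform cost bound.
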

Intuitively, parameters of \CFL ($L$, $U$, and $\beta$) appear in the competitive bound.  While results for \OWT and \CFC are not directly comparable with \CFL results, we discuss a few connections here.

When $\beta \to 0$, $\alpha$ matches the optimal competitive ratio of $\left[ W \left( \left( \nicefrac{L}{U} - 1 \right) e^{-1} \right) + 1 \right]^{-1}$ for the minimization variant of \OWT~\cite{Lorenz:08, SunLee:21}.  
In the intermediate case (i.e., when $\beta \in (0, \nicefrac{U-L}{2})$), Taylor expanding the expression for $\alpha$ yields a leading order approximation of $O \left( \sqrt{\nicefrac{U}{L}} \right) + O \left( \beta \right)$ -- thus, \CFL adds a new \textit{linear} dependence on $\beta$ compared to \OWT.
Furthermore, as $\beta \to \nicefrac{U-L}{2}$, $\alpha$ approaches $\nicefrac{U}{L}$, which is the competitive ratio achievable by e.g., a myopic cost minimization algorithm.  In the Appendix, in \autoref{fig:plotting-alpha}, we plot $\alpha$ as a function of these dependencies (i.e., $\nicefrac{U}{L}$, $\beta$).

Since $\alpha$ does not feature a dependence on the dimension $d$ of the vector space, we note a connection with \CFC: 
it is known that ``dimension-free'' bounds are achievable in \CFC with necessary structural assumptions on the hitting cost that are evocative of our bounded gradient assumptions~\cite{argueDimensionFreeBoundsChasing2020}. In particular, \citet{chenSmoothedOnlineConvex2018a} give an algorithm for \CFC that achieves a dimension-independent competitive ratio on $\ell_2$ metrics when hitting cost functions increase away from the minimizer at a certain rate.  They extend this to a dimension-dependent bound for general metrics using norm equivalency results.

Via \sref{Proposition}{prop:CFL-MAL}, we obtain an immediate corollary to \autoref{thm:alphaCompCFL} which gives the following competitive bound when \ALGone is used to solve \MAL.  The full proof of \sref{Corollary}{cor:alphaCompMAL} can be found in \autoref{apx:alphaCompMAL}.
\begin{corollary} \label{cor:alphaCompMAL}
\ALGone is $\alpha$-competitive for \MAL.
\end{corollary}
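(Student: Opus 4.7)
The plan is to derive the corollary as an immediate consequence of \autoref{thm:alphaCompCFL} combined with \sref{Proposition}{prop:CFL-MAL}; the only substantive work is to verify that the parameters $(L, U, \beta)$ of the constructed \CFL instance coincide with (or are no worse than) those of the original \MAL instance, so that the value of $\alpha$ defined in \eqref{eq:alpha} is the same on both sides.

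First, I would fix an arbitrary \MAL instance on a weighted star metric $(X, d)$ satisfying the stated assumptions: the gradient bounds $L \leq \nicefrac{d f_t^a}{dx^a} \leq U$ for each non-\OFF point, and $\beta := \max_{a', a} d(a',a) \in [0, \nicefrac{U-L}{2})$. Invoking \sref{Lemma}{lem:simplexTransform}, I would then produce the corresponding \CFL instance on $(\mathbb{R}^{n-1}, \lVert \cdot \rVert_{\ell_1(\mbf{w}')})$, which preserves all hitting costs $f_t^a$ and the long-term constraint function $c$. Because $f_t^a$ and $c$ are preserved verbatim, the hitting-cost parameters $L$ and $U$ transfer directly to the \CFL instance. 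Since \sref{Lemma}{lem:simplexTransform} upper bounds the star-metric distances, the transformed weights $\mbf{w}'$ and the preserved $\mbf{c}$ satisfy $\max_i \nicefrac{\mbf{w}'^i}{\mbf{c}^i} \leq \beta$, so the \CFL switching-cost parameter is no larger than the \MAL parameter $\beta$.

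Next, I would apply \autoref{thm:alphaCompCFL} to the constructed \CFL instance, concluding that \ALGone achieves competitive ratio $\alpha$ as defined in \eqref{eq:alpha}, where $\alpha$ depends only on $L$, $U$, and $\beta$. Since these match the \MAL parameters, this is precisely the competitive ratio advertised in the corollary statement. Finally, \sref{Proposition}{prop:CFL-MAL} promotes this $\alpha$-competitive guarantee on the transformed \CFL instance back to an $\alpha$-competitive guarantee for \ALGone on the original \MAL instance: informally, the proposition ensures that a decision trajectory produced by \ALGone in $\mathbb{R}^{n-1}$ translates back to an allocation on $X$ whose hitting cost is identical and whose switching cost is no larger than the \CFL switching cost (thanks to the upper-bounding property of the transformation).

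The main obstacle I anticipate — and the reason the full argument is deferred to \autoref{apx:alphaCompMAL} — is in the first paragraph: carefully checking that the transformation of \sref{Lemma}{lem:simplexTransform} does not inflate $\beta$ and that the $\OPT$ values of the two instances relate correctly so that the ratio $\ALGone(\mathcal{I})/\OPT(\mathcal{I})$ in the \MAL world is bounded by $\ALGone(\mathcal{I}')/\OPT(\mathcal{I}')$ in the \CFL world (rather than the other direction, which would not be useful). Once that bookkeeping is in place, the corollary follows by the two-step composition outlined above.
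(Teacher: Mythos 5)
Your overall plan matches the paper's proof exactly: invoke \sref{Lemma}{lem:simplexTransform} to map the \MAL instance to a \CFL instance, apply \autoref{thm:alphaCompCFL}, and pull the guarantee back via \sref{Proposition}{prop:CFL-MAL}, which is valid here because the proof of \autoref{thm:alphaCompCFL} lower bounds $\OPT$ without charging it any switching cost.

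One piece of your reasoning about $\beta$ is backwards, though. You argue that because \sref{Lemma}{lem:simplexTransform} \emph{upper bounds} the star-metric distances, the transformed weights satisfy $\max_i \nicefrac{\mbf{w}'^i}{\mbf{c}^i} \leq \beta$. But making the \CFL distances \emph{larger} than the \MAL distances would, if anything, push $\beta_{\CFL}$ \emph{up}, not down — so "distances are upper bounded" does not by itself give the inequality you want, and if $\beta_{\CFL} > \beta_{\MAL}$ the composed bound would produce a weaker $\alpha$ than the corollary claims. What the paper actually needs (and proves inside \sref{Lemma}{lem:simplexTransform}) is that the transformation \emph{preserves} $\beta$ exactly: for the pair $(a',a)$ achieving $\max_{a',a} d(a',a)$, the mapped distance $\lVert \Phi a' - \Phi a \rVert_{\ell_1(\mbf{w}')}$ equals $\beta$, because $\Phi a' = \mbf{0}$ and $\mbf{w'}^i = \mbf{w}^{i+1} + \mbf{w}^0$ is built precisely to reproduce the distance through the star's center from the \OFF point. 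So the conclusion you reach ($\beta$ is not inflated) is correct, but it comes from this explicit preservation property rather than from the upper-bounding property. The rest of your argument, including the correct direction check that $\nicefrac{\ALGone(\mathcal{I})}{\OPT(\mathcal{I})}$ in \MAL is dominated by the same ratio in \CFL, is sound.
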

\paragraph{On the tightness of competitive ratios. }  It is important to highlight that the bounds in \autoref{thm:alphaCompCFL} and \sref{Corollary}{cor:alphaCompMAL} are the first competitive bounds for any variant of convex function chasing or online metric allocation imbued with long-term constraints. %
A natural follow-up question concerns whether any online algorithm for \CFL (or \MAL) can achieve a better competitive bound.  In the following, we answer this question in the negative, showing that \ALGone's competitive ratio is the best that any deterministic online algorithm for \CFL and/or \MAL can achieve.  We state the result here, and defer the full proof to \autoref{apx:lowerboundCFL}.

\begin{theorem} \label{thm:lowerboundCFL}
For any $L, U,$ and $\beta \in [0, \nicefrac{U-L}{2})$, there exists a family of \CFL instances such that any deterministic online algorithm for \CFL is at least $\alpha$-competitive, where $\alpha$ is as defined in \eqref{eq:alpha}.
\end{theorem}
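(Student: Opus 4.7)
The plan is to adapt the classical threshold-based lower bound from the online search literature to handle switching costs, via a carefully designed adversarial family. First, I would construct a family of \CFL instances $\{\mathcal{I}_{y^*}\}_{y^* \in [L, U]}$ parameterized by a ``stopping price'' $y^*$. In $\mathcal{I}_{y^*}$ the adversary presents a long sequence of linear cost functions $f_t(\mbf{x}) = y_t\, c(\mbf{x})$ with effective price $y_t$ decreasing smoothly from $U$ down to $y^*$ over many time steps, followed by a brief phase at price $U$ that triggers the compulsory trade. The key feature of these linear cost functions is that each is uniquely minimized at $\mbf{0}$, which---combined with many time steps between distinct price levels---pins any cost-conscious online algorithm to $\mbf{x}_t = \mbf{0}$ at non-allocating steps, forcing each voluntary allocation to effectively be a round-trip of the trajectory.

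Next, I would characterize any deterministic online algorithm by a non-increasing ``descent profile'' $G : [L, U] \to [0, 1]$ with $G(U) = 0$, where $G(y)$ denotes the fraction of the long-term constraint satisfied by the time the adversary's price has descended to $y$. For $\OPT(\mathcal{I}_{y^*})$: the offline optimum allocates the entire unit of demand at the minimum price $y^*$, spreading this allocation over the many Phase~1 steps at $y^*$ so that its switching cost is negligible, yielding $\OPT(\mathcal{I}_{y^*}) \to y^*$ in the limit. For $\ALG(\mathcal{I}_{y^*})$: since non-allocating steps are pinned at $\mbf{x}_t = \mbf{0}$, each voluntary allocation of $-dG(y)$ at price $y$ is a round-trip costing $y \cdot (-dG)$ in hitting and at least $2\beta \cdot (-dG)$ in switching; adding the compulsory trade hitting cost $U(1 - G(y^*))$ (omitting its switching contribution to obtain a valid lower bound), I obtain
\begin{equation*}
    \ALG(\mathcal{I}_{y^*}) \;\geq\; \int_{y^*}^{U} y \, (-dG(y)) \;+\; 2\beta\, G(y^*) \;+\; U\, (1 - G(y^*)).
\end{equation*}

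I would then impose $\ALG(\mathcal{I}_{y^*}) \leq \alpha \cdot \OPT(\mathcal{I}_{y^*})$ simultaneously for every $y^* \in [L, U]$ and treat this as an equality for the tightest admissible $G$. Differentiating in $y^*$ yields the first-order ODE $(y^* + 2\beta - U)\,G'(y^*) = \alpha$, with solution $G(y^*) = \alpha \ln \frac{(U - 2\beta) - y^*}{(U - 2\beta) - y_0}$. The integration constant $y_0$ is determined from the ``no-allocation'' boundary: for $y^* \in [y_0, U]$ the tight algorithm has $G(y^*) = 0$ and incurs the full compulsory trade cost $U$ against $\OPT \to y^*$, so the requirement $U/y^* \leq \alpha$ forces $y_0 = U/\alpha$. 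Imposing $G(L) = 1$ at the opposite end then yields $e^{1/\alpha} = (U - L - 2\beta)/(U - U/\alpha - 2\beta)$---precisely the defining equation of $\alpha$ in \eqref{eq:alpha}. Any $G$ deviating from this tight solution produces a strictly worse ratio at some $y^*$, completing the lower bound.

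The hard part will be rigorously justifying the $2\beta\, G(y^*)$ switching lower bound. Specifically, one must argue that against the adversary's slowly-descending sequence of linear cost functions, any algorithm's attempt to ``hold'' a nonzero allocation across many time steps between distinct price levels---thereby saving switching cost---incurs enough additional hitting cost at those intermediate steps to more than offset the savings. Formalizing this trade-off, particularly in the continuous-time limit where prices decrement infinitesimally and time steps are abundant, is the technical heart of the proof and requires careful accounting of the algorithm's trajectory $\mbf{x}_t$ and its interaction with the adversary's price schedule.
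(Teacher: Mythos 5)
The integral/ODE core of your argument matches the paper's: you parametrize instances by a stopping price, introduce a non-increasing conversion profile $G(y)$, write $\ALG$ as $\int_{y^*}^{U} y\,(-dG) + 2\beta G(y^*) + U(1-G(y^*))$, impose $\ALG \leq \alpha\,\OPT$ uniformly, differentiate (the paper uses integration by parts plus Gr\"onwall where you integrate an ODE), and close with $G(L)=1$, $y_0 = U/\alpha$ to recover the defining equation $e^{1/\alpha} = (U-L-2\beta)/(U - U/\alpha - 2\beta)$. That bookkeeping is correct. The gap, which you yourself flag as ``the technical heart,'' is the $2\beta G(y^*)$ switching term, and your proposed route to justifying it does not work. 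Against a fixed, oblivious descending price schedule $f_t(\mbf{x}) = y_t\,c(\mbf{x})$, an algorithm that \emph{holds} a small nonzero allocation $\mbf{x}$ with $c(\mbf{x}) = \epsilon$ across consecutive steps is not penalized at all: each such step both pays hitting cost $\approx y\epsilon$ \emph{and} advances the long-term constraint by $\epsilon$ — it is simply acquiring at a decent price while paying essentially zero switching. Your claim that the intermediate hitting cost offsets the switching savings fails precisely because the hitting cost is not wasted; it buys progress. So a non-adaptive adversary cannot force the round-trip switching charge, and the $2\beta G$ term in your $\ALG$ lower bound is unsubstantiated.

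The paper resolves this with an \emph{adaptive} $y$-adversary (Definition~A.4 / Definition~\ref{dfn:yadversary}): the moment the algorithm moves to any $\mbf{x}$ with $c(\mbf{x})>0$ during a block of ``good'' cost functions $\textbf{Down}^i$, the adversary immediately switches to $\textbf{Up}$ (coefficient $U$ in every direction) and keeps presenting it until the algorithm either returns to the origin or exhausts its utilization. This removes the option of cheaply holding: staying away from $\mbf{0}$ now costs $U$ per unit of further progress, which is dominated by retreating and waiting for the next $\textbf{Down}$ block. That forced round trip is what rigorously yields the $2\beta g(y)$ term. A secondary point you should also address: the paper's $\textbf{Down}^i$ functions offer the reduced coefficient only in the single direction $k = \argmax_i \mbf{w}^i$ and keep coefficient $U$ elsewhere, which is what makes the \emph{maximal} switching weight $\beta = \mbf{w}^k$ the relevant one; your uniform cost $f_t(\mbf{x}) = y_t c(\mbf{x})$ would let the algorithm allocate in whichever dimension has the smallest switching weight, so even if switching were forced you would only get $2(\min_i \mbf{w}^i/\mbf{c}^i)\,G$ rather than $2\beta G$. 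To repair the proof you need both modifications: make the adversary adaptive, and concentrate the good price in the worst (highest-$\beta$) direction.
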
\vspace{-0.5em}

Since \ALGone is $\alpha$-competitive by \autoref{thm:alphaCompCFL}, this implies that \ALGone achieves the optimal competitive ratio for \CFL.  Furthermore, by leveraging \sref{Lemma}{lem:simplexTransform}, this result gives an immediate corollary result in the \MAL setting by constructing a corresponding family of \MAL instances, which forces any algorithm to achieve a competitive ratio of $\alpha$.  We state the result here, deferring the full proof to \autoref{apx:lowerboundMAL}.

\begin{corollary} \label{cor:lowerboundMAL}
The \CFL instances in \autoref{thm:lowerboundCFL} correspond to instances of \MAL such that any deterministic online algorithm for \MAL is at least $\alpha$-competitive.
\end{corollary}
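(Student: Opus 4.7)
The plan is to realize the hard \CFL instances from Theorem \ref{thm:lowerboundCFL} as \MAL instances on a simple weighted star, then transfer hardness through a cost-preserving bijection. Since the competitive ratio $\alpha$ in Theorem \ref{thm:lowerboundCFL} is dimension-free, I would first verify that the lower-bound family can be taken to be unidimensional ($d=1$), i.e., with a single \CFL coordinate $x_t \in [0,1]$, scalar switching-cost coefficient $\beta$, and scalar constraint coefficient $\mathbf{c}$. This matches the intuition that the $\phi$-based adversary in the proof of Theorem \ref{thm:lowerboundCFL} need not exploit multiple dimensions to force $\alpha$-competitiveness.

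Second, I would construct the corresponding \MAL instance on a weighted star with $n=2$: an OFF state $a'$ satisfying $\mathbf{c}^{a'}=0$ and $f_t^{a'} \equiv 0$, and a single ON state $a$ with $\mathbf{c}^a = \mathbf{c}$, $f_t^a(\cdot) = f_t(\cdot)$, and edge distance $d(a',a) = \beta$. The \MAL simplex constraint $x_t^{a'} + x_t^a = 1$ forces $x_t^{a'} = 1 - x_t^a$, yielding a bijection with \CFL trajectories via $\tilde{x}_t \coloneqq x_t^a$. Under this bijection the hitting costs coincide term-by-term (the OFF state contributes nothing to either cost or constraint), and because every movement between allocations in \MAL crosses the single edge of the star, the \MAL switching cost reduces to $\beta \cdot |x_t^a - x_{t-1}^a|$, which is exactly the \CFL switching cost. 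Thus \OPT values coincide and any deterministic online \MAL algorithm maps to a deterministic online \CFL algorithm with identical per-step cost.

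Third, I would close the argument by contradiction: suppose some deterministic online \MAL algorithm \ALG achieved competitive ratio $\alpha' < \alpha$ on this family. Via the bijection, \ALG induces a deterministic online \CFL algorithm on the family of Theorem \ref{thm:lowerboundCFL} with the same cost against the same \OPT, hence also $\alpha'$-competitive, contradicting Theorem \ref{thm:lowerboundCFL}. This yields exactly the stated claim.

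The main obstacle I anticipate is the first step, namely verifying that the adversarial construction of Theorem \ref{thm:lowerboundCFL} can be taken to be unidimensional. If it is intrinsically multidimensional, the fall-back is to build the \MAL instance on a $(d+1)$-point weighted star with one OFF state at the center and $d$ ON states at distances $\mathbf{w}^i$; then the subtlety becomes ensuring that the triangle-inequality slack highlighted in Lemma \ref{lem:simplexTransform} (where star distances \emph{upper bound} \CFL $\ell_1$ movement) does not loosen the bound. This would follow if the adversarial \CFL cost functions are structured so that the offline optimum concentrates its mass on a single ON coordinate per round, eliminating any cross-dimensional \MAL transport that would cost more on the star than it does in the $\ell_1$ norm. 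Verifying this structural property of the adversary is the main technical step, after which the bijection and contradiction complete the proof.
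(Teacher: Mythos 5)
Your proposal is correct but takes a genuinely different route from the paper's own argument. The paper does not reduce to a low-dimensional instance and transport the hard family across a bijection. Instead, it constructs a fresh family of adaptive $y$-adversaries tailored to the weighted star \MAL setting directly (\sref{Definition}{dfn:yadversaryMAL}), with cost sets $\textbf{Up}$ and $\textbf{Down}^i$ that mirror the \CFL adversary: the $\OFF$ point has edge weight zero, all non-$\OFF$ points carry coefficient $U$ except the maximum-weight point $k$, whose coefficient decreases in steps of $\delta$. The paper then observes that the resulting cost expression $\ALG(\mathcal{A}_y) = g(\nicefrac{U}{\alpha^\star})\nicefrac{U}{\alpha^\star} - \int^y_{\nicefrac{U}{\alpha^\star}} u \, dg(u) + 2\beta g(y) + (1 - g(y))U$ is verbatim the one from \autoref{thm:lowerboundCFL}, and re-invokes the integration-by-parts and Gr\"onwall argument to get the same $\alpha^\star$ bound. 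Your proposal instead first argues that the \CFL adversary of \sref{Definition}{dfn:yadversary} is effectively unidimensional (since all dimensions except the max-weight direction $k$ carry the pessimal coefficient $U$ throughout), then realizes it on a two-point weighted star ($\OFF$ state plus one $\ON$ state at distance $\beta$), where the simplex constraint $x_t^{a'} + x_t^a = 1$ gives an exact cost-preserving bijection with scalar \CFL, and closes by contradiction against \autoref{thm:lowerboundCFL}. Both are valid; what each buys: your reduction is cleaner and reuses \autoref{thm:lowerboundCFL} as a black box, avoiding the need to re-derive the Gr\"onwall bound, while the paper's direct construction is self-contained and keeps an explicit $n$-point \MAL adversary that matches the stated phrasing that the ``\CFL instances correspond to instances of \MAL.'' The one step you flag as needing care --- that the \CFL adversary can be taken unidimensional --- is indeed correct and worth spelling out, since the adversary's $\textbf{Down}^i$ functions never give any dimension other than $k$ a coefficient below $U$, so no competitive algorithm benefits from acting outside dimension $k$, and the compulsory-trade switching cost in any dimension is assumed negligible. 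Your fall-back via a $(d+1)$-point star is also sound: because the adversary's trajectories only ever move between $\mbf{0}$ and the single coordinate $k$, the triangle-inequality slack in \sref{Lemma}{lem:simplexTransform} vanishes, so \CFL and \MAL switching costs coincide on this family.
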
\vspace{-0.5em}

As previously, since \ALGone is $\alpha$-competitive by \sref{Corollary}{cor:alphaCompMAL}, it achieves the optimal competitive ratio for \MAL.  We note that beyond the settings of \CFL and \MAL considered in this paper, \autoref{thm:lowerboundCFL} and \sref{Corollary}{cor:lowerboundMAL} are the first lower bound results for convex function chasing and online metric allocation with long-term constraints, and may thus give useful insight into the achievable competitive bounds for different or more general settings of these problems.

\section{Learning-augmented Algorithms}
\label{sec:clip}
In this section, we %
consider how \textit{untrusted black-box advice} can help improve the average-case performance of a learning-augmented algorithm for \CFL and \MAL while retaining worst-case guarantees. We first consider a sub-optimal ``baseline'' algorithm that directly combines advice with a robust algorithm such as \ALGone.  We then propose a unified algorithm called \CLIP, which integrates advice more efficiently and achieves the optimal trade-off between \textit{consistency} and \textit{robustness} (\sref{Definition}{def:constrob}).\vspace{-0.5em}

\paragraph{Advice model.}
For a \CFL or \MAL instance $\mathcal{I} \in \Omega$, let $\ADV$ denote untrusted black-box decision advice, i.e., $\ADV \coloneqq \{ \mbf{a}_t \in X : t \in [T] \}$.  If the advice is correct, it achieves the optimal objective value (i.e., $\ADV(\mathcal{I}) = \OPT(\mathcal{I})$). %

\paragraph{A simple baseline.}  \citet{Lechowicz:24} show that a straightforward ``fixed-ratio'' learning-augmented approach works well in practice for unidimensional online search with switching costs.  Here we show that a similar technique (playing a convex combination of the solutions chosen by the advice and a robust algorithm) achieves bounded but sub-optimal consistency and robustness for \CFL.  

Let $\ROB \coloneqq \{ \Tilde{\mbf{x}}_t : t \in [T] \}$ denote the actions of a robust algorithm for \CFL (e.g., \ALGone).  For any value $\epsilon \in (0, \alpha - 1]$, the fixed-ratio algorithm (denoted as \BL for brevity) sets a combination factor $\lambda \coloneqq \frac{\alpha - 1 - \epsilon}{\alpha - 1}$.  Then at each time step, \BL makes a decision according to $\mbf{x}_t = \lambda \mbf{a}_t + (1- \lambda) \Tilde{\mbf{x}}_t$.  We present consistency and robustness results for \BL below, deferring the full proof to \autoref{apx:baseline}.
\begin{lemma} \label{lem:baseline}
Letting \ROB denote the actions of \ALGone and setting a parameter $\epsilon \in (0, \alpha -1]$, \BL is $(1+\epsilon)$-consistent and $\left( \frac{\nicefrac{(U + 2 \beta)}{L} (\alpha - 1 - \epsilon) + \alpha \epsilon}{( \alpha - 1) } \right)$-robust for \CFL.
\end{lemma}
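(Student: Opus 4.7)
The plan is to exploit the fact that $\BL$'s decision at every step is an exact convex combination of $\ADV$ and $\ROB$, and to push all of the analysis onto two elementary inequalities. First I would verify feasibility: because $c(\cdot)$ is a (weighted) $\ell_1$ norm and therefore linear on nonnegative points, the utilization of $\BL$ satisfies $\sum_t c(\mbf{x}_t) = \lambda\sum_t c(\mbf{a}_t) + (1-\lambda)\sum_t c(\tilde{\mbf{x}}_t) \geq 1$ whenever both $\ADV$ and $\ROB$ satisfy the long-term constraint. Then, combining convexity of each $f_t$ with the triangle inequality for $\lVert\cdot\rVert_{\ell_1(\mbf{w})}$, the total cost obeys the master bound
\begin{equation*}
\BL(\mathcal{I}) \;\leq\; \lambda\,\ADV(\mathcal{I}) + (1-\lambda)\,\ROB(\mathcal{I}),
\end{equation*}
and both halves of the lemma will follow by instantiating this inequality correctly.

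For consistency, I would set $\ADV=\OPT$ and invoke \autoref{thm:alphaCompCFL} to obtain $\ROB(\mathcal{I})\leq\alpha\,\OPT(\mathcal{I})$, so that $\BL(\mathcal{I})\leq[\alpha-\lambda(\alpha-1)]\OPT(\mathcal{I})$; substituting $\lambda=(\alpha-1-\epsilon)/(\alpha-1)$ collapses the bracket to exactly $1+\epsilon$. For robustness, I need to bound $\ADV(\mathcal{I})$ without appealing to its quality. The bounded-gradient assumptions on \CFL yield two estimates that hold for any reasonable feasible schedule: combining $[\nabla f_t]^i/\mbf{c}^i\leq U$ with $f_t(\mbf{0})=0$ and unit total utilization gives hitting cost at most $U$, while $\lVert\mbf{x}\rVert_{\ell_1(\mbf{w})}\leq\beta\,c(\mbf{x})$ together with the start/end at $\mbf{0}$ gives total switching cost at most $2\beta$, so $\ADV(\mathcal{I})\leq U+2\beta$. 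Meanwhile, the lower gradient bound $L$ applied to $\OPT$'s trajectory forces $\OPT(\mathcal{I})\geq L$. Substituting $\lambda\,\ADV(\mathcal{I}) \leq \lambda(U+2\beta)/L\cdot\OPT(\mathcal{I})$ and $(1-\lambda)\ROB(\mathcal{I}) \leq (1-\lambda)\alpha\,\OPT(\mathcal{I})$ into the master bound and plugging in the values of $\lambda$ and $1-\lambda$ reproduces the stated robustness ratio.

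The step I expect to require the most care is the robustness bound on $\ADV(\mathcal{I})$: in principle an adversarial black-box $\ADV$ could oscillate wildly and carry unbounded switching cost, which would invalidate the clean $U+2\beta$ estimate. I would handle this either by restricting the advice class to schedules that do not underperform the trivial single-commit feasible solution (a standard convention consistent with the footnote in \autoref{sec:prob} on objective value $L+2\beta$), or by slightly modifying $\BL$ to defer to $\ROB$ whenever the cumulative cost along $\ADV$'s trajectory exceeds $U+2\beta$. Either fix preserves the consistency argument above, and the remaining work is routine algebra in $\lambda$, $\epsilon$, and $\alpha$.
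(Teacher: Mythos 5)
Your proof matches the paper's step for step: feasibility via linearity of $c$, the master bound $\BL(\mathcal{I}) \leq \lambda\,\ADV(\mathcal{I}) + (1-\lambda)\,\ROB(\mathcal{I})$ via convexity of each $f_t$ and the triangle inequality for $\lVert\cdot\rVert_{\ell_1(\mbf{w})}$, consistency via $\ROB \leq \alpha\,\OPT \leq \alpha\,\ADV$, and robustness via $\ADV \leq U+2\beta$ and $\OPT \geq L$. The algebra reproduces the stated constants exactly.

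The one place you flag as delicate --- the bound $\ADV(\mathcal{I}) \leq U+2\beta$ --- is actually more solid than you suggest. Because $\mbf{a}_0 = \mbf{a}_{T+1} = \mbf{0}$, the triangle inequality telescopes to give
\begin{equation*}
\sum_{t=1}^{T+1}\lVert\mbf{a}_t - \mbf{a}_{t-1}\rVert_{\ell_1(\mbf{w})} \;\leq\; 2\sum_{t=1}^T\lVert\mbf{a}_t\rVert_{\ell_1(\mbf{w})} \;\leq\; 2\beta\sum_{t=1}^T c(\mbf{a}_t),
\end{equation*}
while the gradient bound gives total hitting cost at most $U\sum_t c(\mbf{a}_t)$. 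So a wildly oscillating $\ADV$ cannot escape $U + 2\beta$ as long as $\sum_t c(\mbf{a}_t) = 1$: the oscillation is absorbed automatically by the telescoping estimate, and the only failure mode is overprovisioning ($\sum_t c(\mbf{a}_t) > 1$), which is wasteful and implicitly excluded by the advice model. You are right that the paper asserts the $U+2\beta$ bound without argument, but neither of the fixes you propose is needed once this telescoping observation is made explicit.
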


Although this fixed-ratio algorithm verifies that an algorithm for \CFL can utilize untrusted advice to improve performance, it remains an open question of whether the trade-off between consistency and robustness given in \sref{Lemma}{lem:baseline} is optimal.  Thus, we study whether a learning-augmented algorithm for \CFL can be designed which \textit{does} achieve the provably optimal consistency-robustness trade-off.  In the next section, we start by considering a more sophisticated method of incorporating advice into an algorithm design.%

\begin{algorithm*}[t!]
   \caption{\textbf{C}onsistency \textbf{Li}mited \textbf{P}seudo-cost minimization (\CLIP)}
   \label{alg:clip}
{\small
\begin{algorithmic}
   \STATE {\bfseries input:} consistency parameter $\epsilon$, long-term constraint function $c(\cdot)$, pseudo-cost threshold function $\phi^\epsilon(\cdot)$
   \STATE {\bfseries initialize:} $z^{(0)} = 0; \ p^{(0)} = 0; \ A^{(0)} = 0; \ \CLIP_0 = 0; \ \ADV_0 = 0$
   \WHILE {cost function $f_t (\cdot)$ is revealed, untrusted advice $\mbf{a}_t$ is revealed, and $z^{(t-1)} < 1$}
   \STATE update advice cost $\ADV_t = \ADV_{t-1} + f_t(\mbf{a}_t) + \lVert \mbf{a}_t - \mbf{a}_{t-1} \rVert_{\ell_1 (\mbf{w})}$ and advice utilization $A^{(t)} = A^{(t-1)} + c(\mbf{a}_t)$
   \STATE solve \textit{constrained} pseudo-cost minimization problem:\vspace{-1em} {\scriptsize \begin{align}
        & \mbf{x}_t = \argmin_{\mbf{x} \in X : c(\mbf{x}) \leq 1-z^{(t-1)}} f_t(\mbf{x}) + \lVert \mbf{x} - \mbf{x}_{t-1} \rVert_{\ell_1 (\mbf{w})} - \int_{p^{(t-1)}}^{p^{(t-1)} + c(\mbf{x})} \phi^\epsilon (u) du \label{eq:pseudocostCLIP} \\
       \begin{split}
       &\text{\small such that}\\
        &\CLIP_{t-1} + f_t(\mbf{x}) + \lVert \mbf{x} - \mbf{x}_{t-1} \rVert_{\ell_1 (\mbf{w})} + \lVert \mbf{x} - \mbf{a}_{t} \rVert_{\ell_1 (\mbf{w})} + \lVert \mbf{a}_{t} \rVert_{\ell_1 (\mbf{w})} + (1 - z^{(t-1)} - c(\mbf{x}))L + \max( (A^{(t)} - z^{(t-1)} - c(\mbf{x})), \ 0)(U-L)\\ 
       & \hspace{43.5em} \leq (1+\epsilon) [\ADV_{t} + \lVert \mbf{a}_t \rVert_{\ell_1 (\mbf{w})} + (1 - A^{(t)} ) L]
       \end{split} \label{eq:const-constraint}
    \end{align}}
   \STATE update cost $\CLIP_t = \CLIP_{t-1} + f_t(\mbf{x}_t) + \lVert \mbf{x}_t - \mbf{x}_{t-1} \rVert_{\ell_1 (\mbf{w})}$ and utilization $z^{(t)} = z^{(t-1)} + c(\mbf{x}_t)$
   \STATE solve \textit{unconstrained} pseudo-cost minimization problem:\vspace{-1em} {\scriptsize \begin{align}
       \bar{\mbf{x}}_t &= \argmin_{\mbf{x} \in X : c(\mbf{x}) \leq 1-z^{(t-1)}} f_t(\mbf{x}) + \lVert \mbf{x} - \mbf{x}_{t-1} \rVert_{\ell_1 (\mbf{w})} - \int_{p^{(t-1)}}^{p^{(t-1)} + c(\mbf{x})} \phi^\epsilon (u) du
    \end{align}}\vspace{-1em}
    \STATE update pseudo-utilization $p^{(t)} = p^{(t-1)} + \min ( c(\bar{\mbf{x}}_t), c(\mbf{x}_t))$
   \ENDWHILE
\end{algorithmic}}
\end{algorithm*}

\paragraph{An optimal learning-augmented algorithm.}
We present \CLIP (\textbf{C}onsistency-\textbf{Li}mited \textbf{P}seudo-cost minimization, \sref{Algorithm}{alg:clip}) which achieves the optimal trade-off between consistency and robustness for \CFL.

To start, for any $\epsilon \in (0, \alpha - 1]$, we define a corresponding \textit{target robustness factor} $\gamma^\epsilon$, the unique positive solution to:

\vspace{-2em}
\begin{equation}
    \gamma^\epsilon = \epsilon + \frac{U}{L} - \frac{\gamma^\epsilon}{L} (U-L) \ln \left( \frac{U - L - 2\beta}{U - \nicefrac{U}{\gamma^\epsilon} - 2\beta} \right). \label{eq:gamma}
\end{equation} 

Note that $\gamma^{\alpha - 1} = \alpha$, and $\gamma^{0} = \nicefrac{U}{L}$.  
We use $\gamma^\epsilon$ to define a pseudo-cost threshold function $\phi^\epsilon$ that will be used in a minimization problem to choose a decision at each step of the \CLIP algorithm.  

\begin{definition}[Pseudo-cost threshold function $\phi^\epsilon$]\label{dfn:clip-phi}
\hspace{3em}{\it
Given $\gamma^\epsilon$ from \eqref{eq:gamma}, $\phi^\epsilon(p)$ for $p \in [0,1]$ is defined as:}
\begin{align}\label{eq:clipphi}
        \phi^\epsilon(p) & = U - \beta + (\nicefrac{U}{\gamma^\epsilon} - U + 2\beta) \exp (\nicefrac{p}{\gamma^\epsilon}).
    \end{align}
\end{definition}
For each time step $t \in [T]$, we define a \textit{pseudo-utilization} $p^{(t)} \in [0,1]$, where $p^{(t)} \le z^{(t)} \ \forall t$, and $p^{(t)}$ describes the fraction of the long-term constraint which been satisfied ``robustly'' (as defined by the pseudo-cost) at time $t$.

Then \CLIP (see \sref{Algorithm}{alg:clip}) solves a \textit{constrained} pseudo-cost minimization problem (defined in \eqref{eq:pseudocostCLIP}) %
to obtain a decision $\mbf{x}_t$ at each time step.  The objective of this problem is mostly inherited from \ALGone, but the inclusion of a \textit{consistency constraint} allows the framework to accommodate untrusted advice for bounded consistency and robustness.

The high-level intuition behind this consistency constraint (defined in \eqref{eq:const-constraint}) is to directly compare the solutions of \CLIP and \ADV \textit{so far}, while \textit{``hedging''} against worst-case scenarios which may cause \CLIP to violate the desired $(1+\epsilon)$-consistency.
We introduce some notation to simplify the expression of the constraint.  We let $\CLIP_t$ denote the cost of $\CLIP$ up to time $t$, i.e., $\CLIP_t \coloneqq \sum_{\tau=1}^{t} f_{\tau}(\mbf{x}_\tau) + \lVert \mbf{x}_{\tau} - \mbf{x}_{\tau-1} \rVert_{\ell_1 (\mbf{w})}$.  Similarly, we let $\ADV_t \coloneqq \sum_{\tau=1}^{t} f_{\tau}(\mbf{a}_{\tau}) + \lVert \mbf{a}_\tau - \mbf{a}_{\tau-1} \rVert_{\ell_1 (\mbf{w})}$ denote the cost of $\ADV$ up to time $t$.  Additionally, we let $A^{(t)}$ denote the utilization of $\ADV$ at time $t$, i.e., $A^{(t)} \coloneqq \sum_{\tau=1}^{t} c(\mbf{a}_{\tau})$

The constraint defined in \eqref{eq:const-constraint} considers the cost of both \CLIP and \ADV so far, and the current hitting and switching cost $f_t(\mbf{x}) + \lVert \mbf{x} - \mbf{x}_{t-1} \rVert_{\ell_1 (\mbf{w})}$, ensuring that $(1+\epsilon)$-consistency is preserved.  Both sides of the constraint also include terms which consider the cost of potential future situations.  First, $\lVert \mbf{x} - \mbf{a}_t \rVert_{\ell_1 (\mbf{w})} + \lVert \mbf{a}_t \rVert_{\ell_1 (\mbf{w})}$ ensures that if \CLIP pays a switching cost to follow \ADV and/or pays a switching cost to ``switch off'' (move to $\mbf{0}$) in e.g., the next time step, that cost has been paid for ``in advance''.  As $\mbf{x}_{T+1} = \mbf{0}$, the constraint also charges $\ADV$ in advance for the mandatory switching cost at the end of the sequence $\left( \lVert \mbf{a}_t \rVert_{\ell_1 (\mbf{w})} \right)$; this ensures that there is always a feasible setting of $\mbf{x}_t$.

In the term $\left( 1 - A^{(t)} \right) L $, the consistency constraint assumes that $\ADV$ can satisfy the rest of the long-term constraint at the best marginal cost $L$.  Respectively, in the term $(1 - z^{(t-1)} - c(\mbf{x}))L + \max( (A^{(t)} - z^{(t-1)} - c(\mbf{x})), \ 0)(U-L)$, the constraint assumes $\CLIP$ can satisfy \textit{up to} $\left( 1 - A^{(t)} \right)$ of the remaining long-term constraint at the best cost $L$, but any excess (i.e., $(A^{(t)} - z^{(t)})$) must be satisfied at the worst cost $U$ (e.g., during the compulsory trade).  This worst-case assumption ensures that when actual hitting costs replace the above terms, the desired $(1+\epsilon)$-consistency holds.

At each step, \CLIP also solves an \textit{unconstrained} pseudo-cost minimization problem to obtain $\bar{\mbf{x}}_t$, which updates the pseudo-utilization $p^{(t)}$.  This ensures that when \ADV has accepted a cost function which would not be accepted by the unconstrained pseudo-cost minimization, the threshold function $\phi^\epsilon$ can ``start from zero'' in subsequent time steps.

At a high level, \CLIP's consistency constraint combined with the pseudo-cost minimization generates decisions which are \textit{as robust as possible} while preserving consistency. 
In \autoref{thm:constrobCLIP}, we state the consistency and robustness of \CLIP; we relegate the full proof to \autoref{apx:constrobCLIP}.

\begin{theorem} \label{thm:constrobCLIP}
For any $\epsilon \in [0, \alpha - 1]$, \CLIP is $(1+\epsilon)$-consistent and $\gamma^\epsilon$-robust for \CFL ($\gamma^\epsilon$ as defined in \eqref{eq:gamma}).
\end{theorem}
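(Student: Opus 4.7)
The plan is to handle consistency and robustness separately, with the pseudo-utilization $p^{(t)}$ and the unconstrained solution $\bar{\mbf{x}}_t$ serving as the bridge between the two bounds.

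The $(1+\epsilon)$-consistency claim is essentially by construction: the constraint \eqref{eq:const-constraint} is engineered to enforce it. First I would verify feasibility at every step by showing that the choice $\mbf{x}_t = \mbf{a}_t$ satisfies the constraint whenever the analogous inequality held at step $t-1$; mimicking the advice collapses the advice-switching and completion-hedge terms so that the inequality propagates inductively. Then I would evaluate the constraint at the final step $T$ with $z^{(T)} = A^{(T)} = 1$: the completion hedges $(1-z-c(\mbf{x}))L$ and $(1-A^{(t)})L$ vanish, the $\max$-excess term vanishes, and the advance-charged switching terms $\lVert \mbf{x}-\mbf{a}_t\rVert + \lVert \mbf{a}_t\rVert$ account for the mandatory return $\mbf{x}_{T+1} = \mbf{0}$. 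Summing the per-step inequalities telescopes to $\CLIP_T + \lVert \mbf{x}_{T+1}\rVert \leq (1+\epsilon)\left(\ADV_T + \lVert \mbf{a}_{T+1}\rVert\right)$, which is exactly the desired consistency statement.

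For $\gamma^\epsilon$-robustness, I would compare CLIP's actual trajectory $\{\mbf{x}_t\}$ to the virtual unconstrained trajectory $\{\bar{\mbf{x}}_t\}$, transferring the pseudo-cost analysis of \autoref{thm:alphaCompCFL} through $p^{(t)}$. Specifically, I would establish (i) a per-step pseudo-cost inequality $f_t(\bar{\mbf{x}}_t) + \lVert \bar{\mbf{x}}_t - \mbf{x}_{t-1}\rVert \leq \int_{p^{(t-1)}}^{p^{(t-1)}+c(\bar{\mbf{x}}_t)}\phi^\epsilon(u)\,du$ analogous to the one driving the analysis of \ALGone; (ii) that the update $p^{(t)} = p^{(t-1)} + \min(c(\bar{\mbf{x}}_t), c(\mbf{x}_t))$ keeps $p^{(t)}$ a conservative lower bound on ``robust progress'' whenever CLIP deviates from $\bar{\mbf{x}}_t$ due to a binding consistency constraint; and (iii) that any extra cost CLIP pays beyond the pseudo-cost integral, precisely because that deviation is caused by a tight consistency constraint, can be charged using \eqref{eq:const-constraint} and the structural relation $\gamma^\epsilon \geq 1+\epsilon$. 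Summing these three contributions and invoking the defining equation \eqref{eq:gamma} for $\gamma^\epsilon$ to balance them against $\OPT$ yields robustness.

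The main obstacle is step (iii). Unlike \ALGone, pseudo-cost minimization with threshold $\phi^\epsilon$ and parameter $\gamma^\epsilon$ is \emph{not} $\gamma^\epsilon$-competitive as a standalone algorithm, since \eqref{eq:gamma} is not the exponential equation that characterizes $\alpha$ in \autoref{thm:alphaCompCFL}; rather, $\gamma^\epsilon$ is calibrated so that the pseudo-cost contribution (integrated up to $p^{(T)}$) together with the consistency-forced deviation contribution (bounded through \eqref{eq:const-constraint}) collectively amount to $\gamma^\epsilon \cdot \OPT$. Carefully handling the switching costs incurred during transitions between the ``robust'' and ``consistency-bound'' regimes, the compulsory trade at the end of the sequence, and the boundary cases $p^{(T)} = 1$ versus $p^{(T)} < 1$ is where the bulk of the technical work will lie, and is what the choice of $\gamma^\epsilon$ in \eqref{eq:gamma} is specifically tuned to reconcile.
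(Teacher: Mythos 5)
Your high-level decomposition (consistency from the engineered constraint, robustness via the pseudo-utilization bridge) is the right starting point, but there are two places where your plan diverges from what actually works, and one of them is a genuine gap.

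\textbf{Consistency.} You propose to ``evaluate the constraint at the final step $T$ with $z^{(T)}=A^{(T)}=1$'' and let the hedge terms vanish, then ``sum the per-step inequalities'' to telescope. Neither move is available. The constraint \eqref{eq:const-constraint} is cumulative (it already contains $\CLIP_{t-1}$ and $\ADV_t$), so there is nothing to telescope; and, more importantly, it is only enforced while the \texttt{while} loop runs, i.e., up to the time step $j$ at which the compulsory trade begins. Once the compulsory trade takes over, \CLIP no longer solves the constrained minimization, so the constraint is simply not available at $T$. The substance of the paper's consistency argument is exactly the step you skip: starting from the constraint at step $j-1$, one must show that the hedge terms $(1-z^{(j-1)})L + \max(A^{(j-1)} - z^{(j-1)},0)(U-L)$ on the left and $(1-A^{(j-1)})L$ on the right correctly upper-bound and lower-bound, respectively, the \emph{actual} hitting costs \CLIP and \ADV incur during the compulsory fill. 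That requires a case split on whether \CLIP has under- or over-provisioned relative to \ADV at time $j-1$, together with the convexity bounds $\sum f_t(\mbf{a}_t) \geq L\sum c(\mbf{a}_t)$ and $\sum f_t(\mbf{x}_t) \leq \sum f_t(\mbf{a}_t) + U(\sum c(\mbf{x}_t)-\sum c(\mbf{a}_t))$. Without this, the claim that the hedges ``vanish'' is the conclusion, not an observation, and the argument does not close.

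\textbf{Robustness.} Here your plan is not wrong so much as at a different level of granularity from the paper's, and step (iii) --- the charging argument --- is precisely where the paper does something concrete that you leave open. Rather than per-step charging through $p^{(t)}$, the paper analyzes two global worst-case advice scenarios (\ADV inactive, \ADV overactive) and derives a saturation point $z_{\text{PCM}} = \gamma^\epsilon \ln\bigl[\tfrac{U-L-2\beta}{U-U/\gamma^\epsilon - 2\beta}\bigr]$, which is exactly where $\phi^\epsilon(z_{\text{PCM}}) = L+\beta$. The key claim is that the consistency constraint caps the ``robustly acquired'' utilization at $z_{\text{PCM}}$: plugging $\CLIP_{t-1} = \int_0^{z_{\text{PCM}}}\phi^\epsilon + \beta z_{\text{PCM}}$ and the compulsory-hedge term into \eqref{eq:const-constraint} against an inactive \ADV yields exactly $(1+\epsilon)L$, so the constraint becomes binding and forces $c(\mbf{x}_t)=0$ thereafter. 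The robustness bound then follows by pricing the remaining $1-z_{\text{PCM}}$ fraction at $U$ and using the defining equation \eqref{eq:gamma}. Your observation that $\gamma^\epsilon$ does not satisfy the $\alpha$-equation and is instead calibrated to balance the pseudo-cost contribution against the consistency-forced residual is correct and is precisely what $z_{\text{PCM}}$ crystallizes --- but the paper realizes this via a single scenario-based bound rather than a step-by-step charge, and the overactive-advice case needs its own sub-cases (on whether $v \gtrless \tfrac{U+\beta}{1+\epsilon}$) that your sketch does not anticipate.

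In short: the feasibility-by-following-\ADV check is a sound ingredient, the intuition about $\gamma^\epsilon$ is right, but the consistency argument is missing its load-bearing compulsory-trade casework, and the robustness argument needs the explicit $z_{\text{PCM}}$ bound in place of a generic charging scheme.
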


The previous result gives an immediate corollary when \CLIP is used to solve \MAL, which we state below.  The full proof of \sref{Corollary}{cor:constrobCLIPMAL} can be found in \autoref{apx:constrobCLIPMAL}.

\begin{corollary} \label{cor:constrobCLIPMAL}
For any $\epsilon \in [0, \alpha - 1]$, \CLIP is $(1+\epsilon)$-consistent and $\gamma^\epsilon$-robust for \MAL.
\end{corollary}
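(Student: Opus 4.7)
The plan is to deduce \sref{Corollary}{cor:constrobCLIPMAL} from \sref{Theorem}{thm:constrobCLIP} using the same \CFL-to-\MAL transfer machinery that gave \sref{Corollary}{cor:alphaCompMAL} from \sref{Theorem}{thm:alphaCompCFL}, extending it to also handle consistency against \ADV. Concretely, I would fix an arbitrary \MAL instance on a weighted star metric together with its advice $\ADV$, apply \sref{Lemma}{lem:simplexTransform} to lift both the instance and the advice sequence to a corresponding \CFL instance on $(\mathbb{R}^{n-1}, \lVert \cdot \rVert_{\ell_1 (\mbf{w'})})$ with the same hitting cost functions $f_t^a$, the same constraint function $c$, and switching distances that upper-bound the original star metric distances. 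Since the \MAL \OFF state and the assumptions on gradients and on $\beta \in [0,\nicefrac{U-L}{2})$ all translate directly to the lifted \CFL instance, \CLIP can be executed on the lifted instance and \sref{Theorem}{thm:constrobCLIP} applies.

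For robustness, I would follow the argument sketched after \sref{Proposition}{prop:CFL-MAL}: \sref{Theorem}{thm:constrobCLIP} establishes the bound of \CLIP against a version of $\OPT$ that pays no switching cost, so substituting $\OPT_{\text{MAL}} \geq \OPT_{\text{CFL, no-switch}}$ and using that the \CFL cost of \CLIP's decisions upper-bounds the corresponding \MAL cost yields the chain $\CLIP_{\text{MAL}} \leq \CLIP_{\text{CFL}} \leq \gamma^\epsilon \OPT_{\text{CFL, no-switch}} \leq \gamma^\epsilon \OPT_{\text{MAL}}$, establishing $\gamma^\epsilon$-robustness.

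The main obstacle is the consistency direction. Since the \CFL switching costs can strictly exceed the underlying \MAL star distances, the \CFL advice cost generally exceeds the \MAL advice cost, and the bound $\CLIP_{\text{CFL}} \leq (1+\epsilon)\ADV_{\text{CFL}}$ does not trivially push through to $\CLIP_{\text{MAL}} \leq (1+\epsilon)\ADV_{\text{MAL}}$. I would resolve this by running \CLIP directly on the star metric, replacing $\lVert \cdot \rVert_{\ell_1(\mbf{w})}$ by $d(\cdot,\cdot)$ throughout \sref{Algorithm}{alg:clip} so that the consistency constraint \eqref{eq:const-constraint} is instantiated with the true \MAL advice switching cost rather than the inflated \CFL surrogate. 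The proof of \sref{Theorem}{thm:constrobCLIP} should then carry over verbatim, since every step depends only on cost-based inequalities, the triangle inequality for the ambient metric, bounded gradients $L \leq \nicefrac{df_t^a}{dx^a} \leq U$, and the bound $\beta = \max_{a',a} d(a',a) \in [0, \nicefrac{U-L}{2})$ — all of which are inherited from the \MAL assumptions. Combining this direct consistency argument with the robustness argument above then yields the desired simultaneous $(1+\epsilon)$-consistency and $\gamma^\epsilon$-robustness for \MAL.
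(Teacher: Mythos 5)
Your proposal is correct and follows essentially the same route as the paper: the paper also proves consistency by observing that the consistency argument of Lemma~\ref{lem:clipconst} applies verbatim when the switching terms in \eqref{eq:const-constraint} are expressed in the genuine \MAL metric (i.e., on the simplex $\Delta_n$ with the star's weighted $\ell_1$ norm), and proves robustness by noting Lemma~\ref{lem:cliprob} bounds \OPT by a no-switching-cost lower bound, which is exactly what Proposition~\ref{prop:CFL-MAL} requires. You also isolate, more explicitly than the paper does, why a naive transfer of the consistency bound through the $\mathbb{R}^{n-1}$ lift fails (the lifted weights $\mbf{w'}$ inflate $\ADV$'s switching cost), which is the right concern.

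One small tension worth tightening: your robustness chain $\CLIP_{\text{MAL}} \leq \CLIP_{\text{CFL}} \leq \gamma^\epsilon\, \OPT_{\text{CFL, no-switch}} \leq \gamma^\epsilon\, \OPT_{\text{MAL}}$ implicitly treats $\CLIP_{\text{CFL}}$ as \CLIP run on the lifted instance, but your consistency argument runs \CLIP on the star metric itself; these are a priori different algorithms with different decision trajectories, so the middle inequality does not literally apply to the decisions your consistency argument reasons about. The cleaner (and what I believe is the intended) resolution is to apply the proof of Lemma~\ref{lem:cliprob} \emph{directly} to \CLIP run on the star metric: that proof only uses the pseudo-cost structure, the upper bound $\int_0^{z_{\text{PCM}}} \phi^\epsilon + \beta z_{\text{PCM}} + (1-z_{\text{PCM}})U$ on \CLIP's cost (which depends on the metric only through $\beta = \max_{a',a} d(a',a)$), and the lower bound $\OPT \geq L$ with no switching cost, all of which hold verbatim on the star metric. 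This sidesteps the lift entirely for robustness, just as your consistency argument does. The paper's appeal to Proposition~\ref{prop:CFL-MAL} here has the same mild gloss, so this is not a defect unique to your write-up, but it is worth stating the direct version since you already committed to running \CLIP on the star metric.
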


\paragraph{Optimal trade-offs between robustness and consistency. }
Although the trade-off given by \CLIP implies that achieving $1$-consistency requires a large robustness bound of $\nicefrac{U}{L}$ in the worst-case, in the following theorem we show that this is the best we can obtain from any consistent and robust algorithm.  We state the result and discuss its significance here, deferring the full proof to \autoref{apx:optimalconstrobCFL}.

\begin{theorem} \label{thm:optimalconstrobCFL}
Given untrusted advice \ADV and $\epsilon \in (0, \alpha-1]$, any $(1 + \epsilon)$-consistent learning-augmented algorithm for \CFL is at least $\gamma^\epsilon$-robust, where $\gamma^\epsilon$ is defined in \eqref{eq:gamma}.
\end{theorem}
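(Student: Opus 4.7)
\textbf{Proof proposal for \autoref{thm:optimalconstrobCFL}.} The plan is to exhibit a parametric adversarial family of \CFL instances together with a carefully chosen advice sequence \ADV, and then show that any learning-augmented algorithm which is $(1+\epsilon)$-consistent against this advice must incur cost at least $\gamma^\epsilon \cdot \OPT$ on some instance in the family. The construction mirrors the adversarial family used in \autoref{thm:lowerboundCFL}, but augments it with advice whose behavior is matched to the "long" (near-optimal) instance so that the consistency constraint genuinely bites.

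Concretely, I would fix a single dimension (to make the lower bound as clean as possible; higher dimensions only give the adversary more power) and build a family $\{\mathcal{I}_j\}$ of instances whose cost functions have marginal slopes $y_t$ that decrease smoothly from $U$ down to some value $y_j$ as $t$ grows, after which the sequence terminates and a compulsory trade completes whatever fraction of the constraint remains at marginal cost $U$. The advice \ADV is chosen to be the \OPT trajectory on the longest instance in the family: it waits until the marginal cost has reached $L$ and then satisfies the entire constraint at cost $\approx L$, paying a single round-trip switching cost of $\approx 2\beta$. So on the full instance, $\ADV \approx L + 2\beta$, while on a truncated instance $\mathcal{I}_j$ the true $\OPT$ is $\approx y_j + 2\beta$.

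For any $(1+\epsilon)$-consistent algorithm \LALG on this family, define $z(y)$ to be the utilization \LALG has achieved by the first time the marginal cost drops to $y$. Since the adversary may stop at any such $y$, robustness requires
\[
\LALG(\mathcal{I}_y) \;=\; \underbrace{\int_{U}^{y} u\,(-z'(u))\,du}_{\text{hitting cost accumulated so far}} \;+\; \underbrace{(1-z(y))\,U}_{\text{compulsory trade}} \;+\; O(\beta) \;\le\; r \cdot (y + 2\beta),
\]
while $(1+\epsilon)$-consistency on the full (un-truncated) instance forces $\LALG(\mathcal{I}_{\text{full}}) \le (1+\epsilon)(L+2\beta)$. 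Taking the worst stopping point over $y$ and optimizing $z(\cdot)$ pointwise yields a differential inequality of Bernoulli type whose extremal solution is precisely the threshold $\phi^\epsilon$ defined in~\eqref{eq:clipphi}. Substituting this extremal $z(\cdot)$ back into the constraint $\LALG(\mathcal{I}_y)/\OPT(\mathcal{I}_y) \le r$ at the boundary $y = L$ produces the fixed-point equation~\eqref{eq:gamma}, so $r \ge \gamma^\epsilon$ is necessary.

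The main obstacle I anticipate is the bookkeeping around the switching cost: unlike pure \OWT lower bounds, an algorithm here can in principle trade increased switching cost against decreased hitting cost or vice versa, so I must verify that any such tradeoff is dominated by the extremal $\phi^\epsilon$-shaped strategy. I would handle this by (i) showing that w.l.o.g.\ the algorithm's trajectory is monotone (any "backtracking" can be shortcut without increasing cost on this family), and (ii) absorbing the extra $2\beta$ terms on both numerator and denominator into the linear $\beta$ dependence already present in $\phi^\epsilon$, matching the same $-\beta + (\cdot + 2\beta)\exp(\cdot)$ structure that appears in $\phi^\epsilon$. A secondary subtlety is ensuring the family lies within the allowed parameter regime ($\beta < (U-L)/2$ and the compulsory-trade portion of the horizon is negligible), which is arranged by choosing $T$ sufficiently large relative to $\mathbf{c}$ as the assumptions in \autoref{sec:prob} permit.
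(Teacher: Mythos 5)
Your high-level plan is the same as the paper's: reuse the $y$-adversary family from \autoref{thm:lowerboundCFL}, fix a black-box advice that is optimal on the longest instance (and hence does nothing but wait on every truncation), define a conversion function recording the utilization accrued by any deterministic \LALG, and combine the resulting robustness constraint (over all stopping points $y$) with the consistency constraint at $y=L$ via a Gr\"{o}nwall-type differential inequality. However, your switching-cost accounting contains errors that would not reproduce the exact formula for $\gamma^\epsilon$.

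First, you write $\OPT(\mathcal{I}_j) \approx y_j + 2\beta$ and $\ADV \approx L + 2\beta$ on the full instance. In the paper's construction each cost level is presented in a batch of $m$ time steps, so both $\OPT$ and the good advice fill the constraint at rate $1/m$ per step and pay only $2\beta/m$ switching; as $m \to \infty$ one has $\OPT(\mathcal{A}_y) \to y$ and $\ADV(\mathcal{A}_L) \to L$, with no residual $2\beta$. Your extra $2\beta$ in the denominator of the robustness ratio, and in the consistency target $(1+\epsilon)(L+2\beta)$, both relax the constraints, so the lower bound you would derive is strictly smaller than $\gamma^\epsilon$. The proposed fix of ``absorbing the $2\beta$ terms on both numerator and denominator'' is exactly backwards: the $\beta$-dependence of $\gamma^\epsilon$ comes from an \emph{asymmetry} in which the online algorithm is forced to pay $2\beta g(y)$ in switching while $\OPT$ pays essentially nothing.

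Second, and relatedly, you write ``$+O(\beta)$'' for the algorithm's switching cost and propose to show w.l.o.g.\ that the trajectory is monotone, shortcutting any backtracking. This is at odds with the mechanism of the adaptive $y$-adversary: as soon as \LALG accepts any fraction during a $\textbf{Down}^i$ block, the adversary immediately switches to $\textbf{Up}$ cost functions and holds them until the algorithm returns to the origin, so (to avoid paying hitting cost $U$) the algorithm is driven to make a round trip of magnitude $2\beta$ per unit accepted. The switching cost is therefore exactly $2\beta g(y)$, and it is this term that produces the $-\beta$ and $+2\beta$ structure appearing in $\phi^\epsilon$ and in \eqref{eq:gamma} after the Gr\"{o}nwall step. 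If you eliminate it by a monotonicity/shortcut argument you will recover only the switching-free ($\beta=0$) version of the bound. Once these two points are corrected, the rest of the proposal matches the paper's argument.
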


This result implies that \CLIP achieves the \textit{optimal} trade-off between consistency and robustness for \CFL.  Furthermore, via \sref{Lemma}{lem:simplexTransform}, this result immediately gives \sref{Corollary}{cor:optimalconstrobMAL}, which we state here and prove in \autoref{apx:optimalconstrobMAL}.

\begin{corollary} \label{cor:optimalconstrobMAL}
Any $(1 + \epsilon)$-consistent learning-augmented algorithm for \MAL is at least $\gamma^\epsilon$-robust ($\gamma^\epsilon$ defined by \eqref{eq:gamma}).
\end{corollary}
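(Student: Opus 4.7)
The plan is to derive the corollary from Theorem~\ref{thm:optimalconstrobCFL} by the same reduction that takes Theorem~\ref{thm:lowerboundCFL} to Corollary~\ref{cor:lowerboundMAL}, now adapted to keep track of the untrusted advice as well as the online decisions. At a high level, I want to show that the \CFL family that witnesses Theorem~\ref{thm:optimalconstrobCFL} can be realized as a \MAL family on a weighted star metric, and that any $(1+\epsilon)$-consistent, $r$-robust \MAL algorithm on this family induces a \CFL algorithm with the same guarantees, forcing $r \geq \gamma^\epsilon$.

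First, I would invoke the lower-bound family of \CFL instances used in the proof of Theorem~\ref{thm:optimalconstrobCFL}, together with its associated advice sequences \ADV. Using the construction underlying \sref{Lemma}{lem:simplexTransform} in the reverse direction (exactly as done in Corollary~\ref{cor:lowerboundMAL}), each such \CFL instance in $\mathbb{R}^d$ is viewed as a \MAL instance on a weighted star metric with $n = d+1$ points: the $d$ coordinate directions become the $d$ non-\OFF points of the star, the weight vector $\mbf{w}$ and constraint weights $\mbf{c}$ supply the star distances and the point weights respectively, and the residual mass $1 - c(\mbf{x}_t)$ is placed on the \OFF state. Under this identification, the hitting costs $f_t^a(\cdot)$, the constraint function $c(\cdot)$, and the gradient bounds $L, U$ are preserved exactly, while the star distances upper bound the weighted $\ell_1$ switching cost; this direction of the inequality is the useful one for a lower bound, since switching is only more expensive for a \MAL algorithm than for its \CFL counterpart.

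Next I would promote any candidate $(1+\epsilon)$-consistent and $r$-robust \MAL algorithm \LALG to a \CFL algorithm by simulating \LALG on the translated instance and dropping the \OFF coordinate from its allocations to recover a feasible \CFL decision sequence. The resulting \CFL cost is no larger than \LALG's cost on the \MAL instance, because the hitting cost and long-term constraint are preserved and the weighted $\ell_1$ switching cost is bounded above by the star-metric movement cost. The advice \ADV is translated pointwise in the same manner, and since the encoding preserves both costs, the \MAL advice cost matches the \CFL advice cost and the offline optima coincide on the constructed family. Consequently, $(1+\epsilon)$-consistency and $r$-robustness for \LALG transfer to the induced \CFL algorithm; applying Theorem~\ref{thm:optimalconstrobCFL} then yields $r \geq \gamma^\epsilon$.

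The main obstacle is the bookkeeping around the reduction: I need to confirm that (i) \OPT on the \MAL instance equals \OPT on the corresponding \CFL instance (so that robustness against $\OPT$ means the same thing in both settings), (ii) the advice cost $\ADV(\mathcal{I})$ is likewise preserved (so that the $(1+\epsilon)$-consistency requirement is neither loosened nor tightened by the translation), and (iii) the mapped \CFL decisions remain feasible at every time step, including for the long-term constraint and the compulsory-trade threshold. Items (i) and (iii) are inherited directly from the arguments for Corollary~\ref{cor:lowerboundMAL}, so the only genuinely new work is the consistency-side check in (ii), which follows because the advice is translated via the same star-metric embedding used for the online decisions.
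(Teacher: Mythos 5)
There is a genuine gap in the switching-cost transfer step. You claim that ``the weighted $\ell_1$ switching cost is bounded above by the star-metric movement cost'' and that ``switching is only more expensive for a \MAL algorithm than for its \CFL counterpart,'' but Lemma~\ref{lem:simplexTransform} gives the \emph{opposite} inequality: the \CFL norm upper bounds the star distance, $\lVert \mbf{x} - \mbf{y}\rVert_{\ell_1(\mbf{w})} \leq \lVert \Phi\mbf{x} - \Phi\mbf{y}\rVert_{\ell_1(\mbf{w'})}$. That direction is the useful one for the competitive upper-bound transfer in \sref{Proposition}{prop:CFL-MAL}, but it is the wrong one for your reduction: taking a \MAL algorithm's decisions and dropping the \OFF coordinate yields a \CFL switching cost that is in general \emph{larger}, not smaller, than the \MAL switching cost, so the induced \CFL algorithm's cost is not automatically bounded by \LALG's \MAL cost, and the consistency and robustness guarantees do not transfer.

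The argument is salvageable, but it rests on a choice you did not make explicit. On the $y$-adversary family used here (\sref{Definition}{dfn:yadversaryMAL}) the paper fixes the \OFF point's edge weight to zero, $\mbf{w}^{a'}=0$, and $\mbf{c}^a=1$ at the non-\OFF points; $\mbf{w}^{a'}=0$ is precisely the case observed at the end of the proof of Lemma~\ref{lem:simplexTransform} in which $\Phi$ is a bijective \emph{isometry}, so switching costs -- and hence $\OPT$, $\ADV$, and the consistency/robustness ratios -- transfer with equality in both directions, which is what lets you invoke \autoref{thm:optimalconstrobCFL} for the induced \CFL algorithm. You need to state and use that isometry; as written the step relies on an inequality in the wrong direction. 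One framing note: the paper's own proof of this corollary, like that of \sref{Corollary}{cor:lowerboundMAL}, does not invert Lemma~\ref{lem:simplexTransform}. It constructs the \MAL $y$-adversaries directly with $\mbf{w}^{a'}=0$, verifies that $\LALG(\mathcal{A}_y)$ takes the same algebraic form as in the \CFL argument, and re-runs the Gr\"{o}nwall step -- essentially the same reduction you propose, stated in the forward direction.
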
 

As previously, this implies \CLIP achieves the optimal consistency-robustness trade-off for \MAL.  Beyond the settings of \CFL and \MAL, these Pareto-optimality results may give useful insight into the achievable consistency-robustness trade-offs for more general settings.\vspace{-0.5em}

\section{Numerical Experiments}
\vspace{1em}
\begin{figure}[h]
    \small
    \minipage{0.225\textwidth}
	\includegraphics[width=\linewidth]{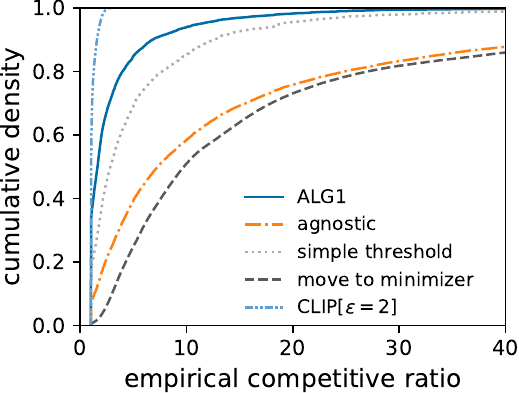}\vspace{-2em}
    \caption{CDFs of empirical competitive ratios for various algorithms.} \label{fig:cdf}
	\endminipage\hfill
    \minipage{0.235\textwidth}
	\includegraphics[width=\linewidth]{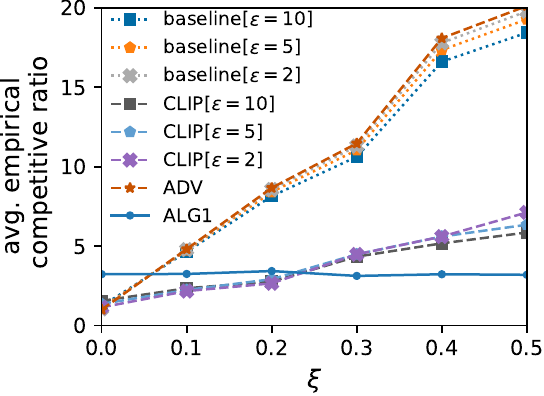}\vspace{-2em}
    \caption{Varying adversarial factor $\xi$, with $\nicefrac{U}{L}$ = $250, \beta$ = $50, d$ = $5$, and $\sigma$ = $50$.}\label{fig:xi}
	\endminipage
    \vspace{1em}
\end{figure}
\label{sec:exp}
In this section, we conduct numerical experiments on synthetic \CFL instances.  We evaluate \ALGone and \CLIP against the offline optimal solution, three heuristics adapted from related work, and the learning-augmented \BL.

\begin{figure*}[t!]
    \small
    \begin{center} 
    \minipage{\textwidth}
    \centering
    \includegraphics[width=0.66\linewidth]{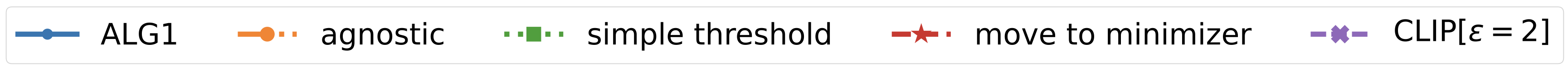}
    \vspace{-0.2cm}
    \endminipage\hfill
    \end{center}
	\minipage{0.24\textwidth}
	\includegraphics[width=\linewidth]{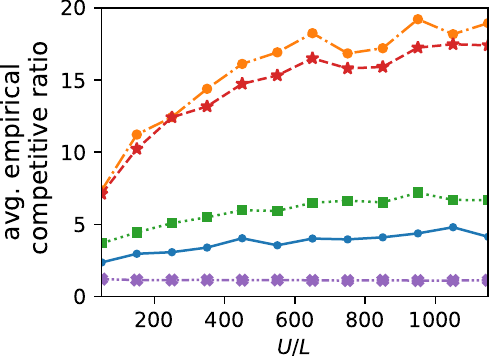}\vspace{-2em}
    \caption{Varying $\nicefrac{U}{L}$, with $\beta = \nicefrac{U}{5}, d = 5, \xi = 0$, and $\sigma = \nicefrac{U}{5}$.} \label{fig:ul}
	\endminipage\hfill
    \minipage{0.24\textwidth}
	\includegraphics[width=\linewidth]{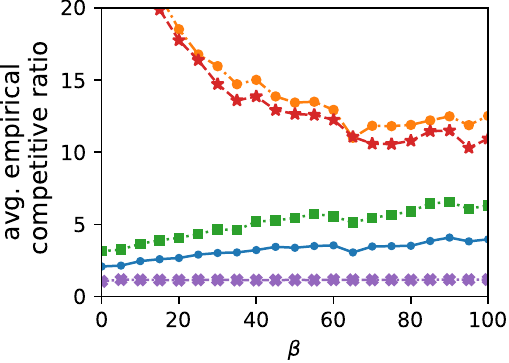}\vspace{-2em}
    \caption{Varying $\beta$, with $\nicefrac{U}{L} = 250, d = 5, \xi = 0$, and $\sigma = 50$.} \label{fig:beta}
	\endminipage \hfill
 	\minipage{0.24\textwidth}
	\includegraphics[width=\linewidth]{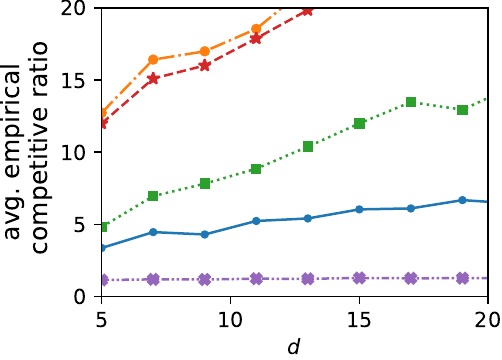}\vspace{-2em}
    \caption{Varying $d$ with $\beta$ = $50, \nicefrac{U}{L}$ = $250, \sigma$ = $50,$ and $\xi$=$0$.}\label{fig:dim}
	\endminipage\hfill
    \minipage{0.24\textwidth}
	\includegraphics[width=\linewidth]{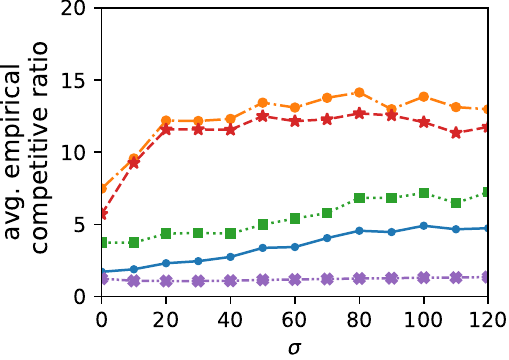}\vspace{-2em}
    \caption{Varying $\sigma$, with $\beta$ = $50, \nicefrac{U}{L}$ = $250, d$ = $5$, and $\xi$ = $0$.}\label{fig:var}
	\endminipage
\end{figure*}

\paragraph{Setup.} 
We construct a $d$-dimensional decision space, where $d$ is picked from the set $\{5, 7, \dots, 21\}$. The competitive ratio of our proposed algorithms depends on both $\nicefrac{U}{L}$ and $\beta = \max_i \mathbf{w^i}$, as the switching cost. Hence, we evaluate their performance over the range of these parameters. We set different cost fluctuation ratios $\nicefrac{U}{L} \in \{ 50, 150, \dots, 1250 \}$ by setting $L$ and $U$ accordingly, and $\beta$ is picked from the set $\beta \in \{ 0, 5, \dots, \nicefrac{U}{2.5}\}$.  For each experiment, $c(\mbf{x}) = \lVert \mbf{x} \rVert_1$.

For a given setting of $d$, $\nicefrac{U}{L}$, and $\beta$, we generate 1,000 random instances as follows.  First, each term of the weight vector $\mbf{w}$ for the weighted $\ell_1$ norm is drawn randomly from the uniform distribution on $[0, \beta]$.
Next, the time horizon $T$ is generated randomly from a uniform distribution on $[6, 24]$.
For each time $t \in [T]$, a cost function is generated as follows:  Let $f_t(\mbf{x}) = \mbf{f}_t^\intercal \mbf{x}$, where $\mbf{f}_t$ is a $d$-dimensional \textit{cost vector}.  To generate $\mbf{f}_t$, we first draw $\mu_t$ from the uniform distribution on $[L, U]$, and then draw each term of $\mbf{f}_t$ from a normal distribution centered at $\mu_t$ with standard deviation $\sigma$ (i.e., $\mbf{f}^i_t \thicksim \mathcal{N}(\mu_t, \sigma)$).  Any terms which are outside the assumed interval $[L,U]$ (i.e. $\mbf{f}^i_t < L$ or $\mbf{f}^i_t > U$) are truncated appropriately.
For each instance, we report the empirical competitive ratios as the evaluation metric, comparing the tested algorithms against an offline optimal benchmark.

In the setting with advice, we obtain simulated advice as follows:  Let $\xi \in [0,1]$ denote an \textit{adversarial factor}.  When $\xi = 0$, \ADV gives the optimal solution, and when $\xi = 1$, \ADV is fully adversarial.  Formally, letting $\{ \mbf{x}_t^\star : t \in [T]\}$ denote the decisions made by an optimal solution, and letting ${\{ \breve{\mbf{x}_t} : t \in [T]\}}$ represent the decisions made by a solution which maximizes the objective (rather than minimizing it), we have that $\ADV = \{ (1-\xi) \mbf{x}_t^\star + \xi \breve{\mbf{x}_t} : t \in [T]\}$.  We note that although $\{ \breve{\mbf{x}_t} : t \in [T]\}$ is adversarial from the perspective of the objective, it is still a feasible solution for the problem (i.e., it satisfies the long-term constraint). 

\paragraph{Comparison algorithms. } 
We use CVXPY~\cite{CVXPY} to compute the offline optimal solution for each instance using a convex optimization solver with access to all cost functions in advance.  
Since we are the first to study \CFL, there are no directly comparable algorithms with competitive guarantees -- thus, we consider three heuristic techniques based on literature for adjacent problems. 

The first technique is termed ``\textbf{agnostic}'', which chooses the minimum dimension of the cost function in the first time step $t=1$ (i.e., $k = \argmin_{i \in [d]} \mbf{c}_1^i$), sets $\mbf{x}_1^k = 1$, and $\mbf{x}_t = \mbf{0} \ \forall t > 1$.
The second technique is termed ``\textbf{move to minimizer}'', which takes inspiration from algorithms for \CFC~\cite{Zhang:21} and satisfies $\nicefrac{1}{T}$ fraction of the long-term constraint at each time step by moving to the minimum dimension of each cost function.  Formally, at each time step $t$, letting $k_t = \argmin_{i \in [d]} \mbf{c}_t^i$, ``move to minimizer'' sets $\mbf{x}_t^{k_t} = \nicefrac{1}{T}$. 
Finally, the third technique is termed ``\textbf{simple threshold}'', which takes inspiration from algorithms for online search~\cite{ElYaniv:01}.  This algorithm sets a fixed threshold $\psi = \sqrt{UL}$, and completes the long-term constraint at the first time step and dimension where the hitting cost does not exceed $\psi$.  Formally, at the first time step $\tau$ satisfying $\exists \ k \in [d] : \mbf{f}_{\tau}^k \le \psi$, ``simple threshold'' sets $\mbf{x}_{\tau}^{k} = 1$.
In the setting with advice, we compare our proposed \CLIP learning-augmented algorithm against the \BL learning-augmented algorithm described in \autoref{sec:clip} (e.g., \sref{Lemma}{lem:baseline}).

\paragraph{Experimental results. } 
\autoref{fig:cdf} summarizes the main results for \ALGone, the comparison heuristic algorithms, and one setting of \CLIP ($\epsilon = 2$) in a CDF plot of the empirical competitive ratios across several experiments.  In these experiments, we fix $\nicefrac{U}{L} = 250, \xi = 0, \sigma = 50$, while varying $\beta$ and $d$.  Notably, \ALGone outperforms in both average-case and worst-case performance, improving on the closest ``simple threshold'' by an average of 18.2\%, and outperforming ``agnostic'' and ``move to minimizer'' by averages of 56.1\% and 71.5\%, respectively.  With correct advice, \CLIP sees significant performance gains across the board.

In \autoref{fig:ul}-\ref{fig:var}, we investigate the impact of parameters on the average empirical competitive ratio for each algorithm.  In \autoref{apx:95th}, we give corresponding plots for the $95$th percentile (``worst-case'') results.  
\autoref{fig:ul} plots competitive ratios for different values of $\nicefrac{U}{L}$. We fix $\beta = \nicefrac{U}{5}, d = 5, \xi = 0, \sigma = \nicefrac{U}{5}$, while varying $\nicefrac{U}{L}$.  Since there is a dependence on $\nicefrac{U}{L}$ in our competitive results, the performance of \ALGone degrades as $\nicefrac{U}{L}$ grows, albeit at a favorable pace compared to the heuristics.  \autoref{fig:beta} plots competitive ratios for different values of $\beta$.  We fix $\nicefrac{U}{L} = 250, d = 5, \xi = 0, \sigma = 50$.  As $\beta$ grows, the ``agnostic'' and ``move to minimizer'' heuristics improve because the switching cost paid by $\OPT$ grows.  %

In \autoref{fig:var}, we plot competitive ratios for different values of $\sigma$. We fix $\nicefrac{U}{L} = 250, \beta = 50, d = 5,\xi = 0$, while varying $\sigma$.  As cost functions become more variable, the performance of all algorithms degrades, with the exception of \CLIP.  There is a plateau as $\sigma$ grows, because a large $\sigma$ implies that more terms in each $\mbf{f}_t$ must be truncated to the interval $[L,U]$.
Finally, \autoref{fig:dim} plots competitive ratios for different values of $d$. We fix $\nicefrac{U}{L} = 250, \beta = 50, \xi = 0,\sigma = 50$, while varying $d$.  As $d$ grows, \ALGone and \CLIP's performance degrades slower compared to the heuristics, as predicted by their \textit{dimension-free} theoretical bounds.\footnote{While the plot suggests a correlation between $d$ and \ALGone's performance, this seems to be a side effect of the random cost functions, which indirectly introduce more dimension-wise variability as $d$ increases.  There is a large gap between empirical performance and the theoretical upper bounds -- the asymptotic performance of \ALGone would be constant for large $d$.}

\autoref{fig:xi} plots the effect of prediction error on the learning-augmented algorithms \CLIP and \BL.  We test several values of $\xi \in [0,\nicefrac{1}{2}]$ (recall that $\xi = 0$ recovers correct advice), while fixing $\nicefrac{U}{L} = 250, \beta = 50, d = 5$, and $\sigma = 50$.  We also test \BL and \CLIP for several values of $\epsilon \in \{2, 5, 10\}$ (note that $\ADV$ corresponds to \BL and \CLIP with $\epsilon = 0$).  Notably, we find that \CLIP significantly outperforms the \BL algorithm as $\xi$ grows, showing an average improvement of 60.8\% when $\xi > 0.1$.  This result implies that \CLIP is more empirically robust to prediction errors than the simple fixed ratio technique of \BL.

\section{Conclusion}
\label{sec:conclusion}
We study online metric problems with long-term constraints, motivated by emerging problems in sustainability.  These are the first such problems to concurrently incorporate multidimensional decision spaces, switching costs, and long-term demand constraints.  Our main results instantiate the \CFL and \MAL problems towards a motivating application.  We design competitive and learning-augmented algorithms, show that their performance bounds are tight, and validate them in numerical experiments.  Several interesting open questions are prompted by our work.  Specifically, (i) what is achievable in non-$\ell_{1}$ vector spaces e.g., the Euclidean setting, and (ii) can our results for \MAL inform algorithm designs for e.g., tree metrics, and by extension, arbitrary metric spaces?
\clearpage

\section*{Impact Statement}

This paper presents work whose goal is to advance the field of online optimization 
and study problems relevant to sustainability applications.
There are many potential societal consequences of our work, none which we feel must be explicitly highlighted here. 

\section*{Acknowledgements}
This research is supported by National Science Foundation grants CAREER-2045641, CNS-2102963, CNS-2106299, CNS-2146814, CNS-1518941, CNS-2106403, CPS-2136197, CPS-2136199, NGSDI-2105494, NGSDI-2105648, 1908298, 2020888, 2021693, 2045641, 2213636, 2211888, and an NSF Graduate Research Fellowship (DGE-2139433).

This material is based upon work supported by the U.S. Department of Energy, Office of Science, Office of Advanced Scientific Computing Research, Department of Energy Computational Science Graduate Fellowship under Award Number DE-SC0024386.

\section*{Disclaimers}
This report was prepared as an account of work sponsored by an agency of the United States Government. Neither the United States Government nor any agency thereof, nor any of their employees, makes any warranty, express or implied, or assumes any legal liability or responsibility for the accuracy, completeness, or usefulness of any information, apparatus, product, or process disclosed, or represents that its use would not infringe privately owned rights. Reference herein to any specific commercial product, process, or service by trade name, trademark, manufacturer, or otherwise does not necessarily constitute or imply its endorsement, recommendation, or favoring by the United States Government or any agency thereof. The views and opinions of authors expressed herein do not necessarily state or reflect those of the United States Government or any agency thereof.

\bibliography{main}
\bibliographystyle{icml2024}

\newpage
\appendix
\onecolumn
\section*{\centering Appendix}

\section{Numerical Experiments (continued)} \label{apx:exp}
In this section, we give supplemental results examining the 95th percentile (``worst-case'') empirical competitive ratio results, following the same general structure as in the main body.

\subsection{Supplemental Results} \label{apx:95th}

To complement the results for the average empirical competitive ratio shown in \autoref{sec:exp}, in this section we plot the 95th percentile empirical competitive ratios for each tested algorithm, which primarily serve to show that the improved performance of our proposed algorithm holds in both average-case and tail (``worst-case'') scenarios.

In \autoref{fig:ul-95}-\ref{fig:var-95}, we investigate the impact of different parameters on the performance of each algorithm.
In \autoref{fig:ul-95}, we plot $95$th percentile empirical competitiveness for different values of $\nicefrac{U}{L}$ -- in this experiment, we fix $\beta = \nicefrac{U}{5}, d = 5, \xi = 0$, and $\sigma = \nicefrac{U}{5}$, while varying $\nicefrac{U}{L} \in \{ 50, \dots, 1250 \}$.  As observed in the average competitive ratio plot (\autoref{fig:ul}), the performance of \ALGone degrades as $\nicefrac{U}{L}$ grows, albeit at a favorable pace compared to the comparison algorithms.  \autoref{fig:beta-95} plots the $95$th percentile empirical competitiveness for different values of $\beta$ -- in this experiment, we fix $\nicefrac{U}{L} = 250, d = 5, \xi = 0$, and $\sigma = 50$.  As previously in the average competitive results (\autoref{fig:beta}), ``agnostic'' and ``move to minimizer'' heuristics perform better when $\beta$ grows, because the switching cost paid by the optimal solution grows as well.  %

\begin{wrapfigure}{r}{0.32\textwidth}
    \minipage{0.32\textwidth}
	\includegraphics[width=\linewidth]{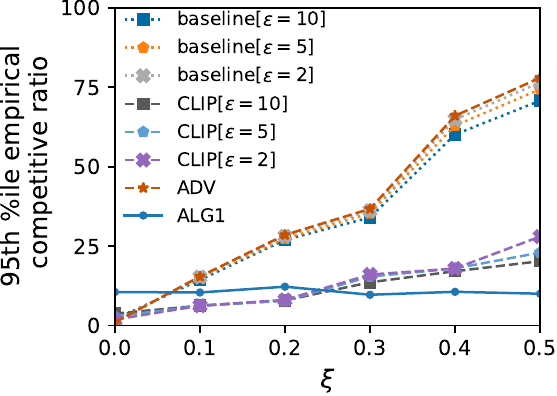}\vspace{-2em}
    \caption{Varying adversarial factor $\xi$, with $\nicefrac{U}{L}$ = $250, \beta$ = $50, d$ = $5$, $\sigma$ = $50$.}\label{fig:xi-95}
    \endminipage
    \vspace{1em}
\end{wrapfigure}

In \autoref{fig:dim-95}, we plot the $95$th percentile empirical competitiveness for different values of $d$ -- in this experiment, we fix $\nicefrac{U}{L} = 250, \beta = 50, \xi = 0$, and $\sigma = 50$, while varying $d$.  Mirroring the previous results (\autoref{fig:dim}), \ALGone and \CLIP's competitive performance degrades slower as $d$ grows compared to the comparison heuristics, as predicted by their \textit{dimension-free} theoretical bounds.  Finally, \autoref{fig:var-95} plots the $95$th percentile empirical competitiveness for different values of $\sigma$, which is the dimension-wise variability of each cost function. Here we fix $\nicefrac{U}{L} = 250, \beta = 50, d = 5,$ and $\xi = 0$, while varying $\sigma \in \{ 0, \dots, \nicefrac{U}{2}\}$.  Intuitively, as cost functions become more variable, the competitive ratios of all tested algorithms degrade, with the exception of our learning-augmented algorithm \CLIP.  This degradation plateaus as $\sigma$ grows, as a large standard deviation forces more of the terms of each cost vector $\mbf{c}_t$ to be truncated to the interval $[L,U]$.

In \autoref{fig:xi-95}, we plot the $95$th percentile empirical competitive ratio companion to \autoref{fig:xi}, which measures the effect of prediction error on the learning-augmented algorithms \CLIP and \BL.  We test several values of $\xi \in [0,1]$, the adversarial factor (recall that $\xi = 0$ implies the advice is correct), while fixing $\nicefrac{U}{L} = 250, \beta = 50, d = 5, \sigma = 50$.  We  test \BL and \CLIP for several values of $\epsilon \in \{2, 5, 10\}$ (note that $\ADV$ corresponds to running either \BL or \CLIP with $\epsilon = 0$).  Notably, in these $95$th percentile ``worst-case'' results, we find that \CLIP continues to significantly outperforms the \BL algorithm as $\xi$ grows, further validating that \CLIP is more empirically robust to prediction errors than the simple fixed ratio technique of \BL.

\begin{figure*}[h]
    \small
    \begin{center} 
    \vspace{1em}
    \minipage{\textwidth}
    \centering
    \includegraphics[width=0.66\linewidth]{figs/legend.png}
    \endminipage\hfill
    \end{center}
	\minipage{0.24\textwidth}
	\includegraphics[width=\linewidth]{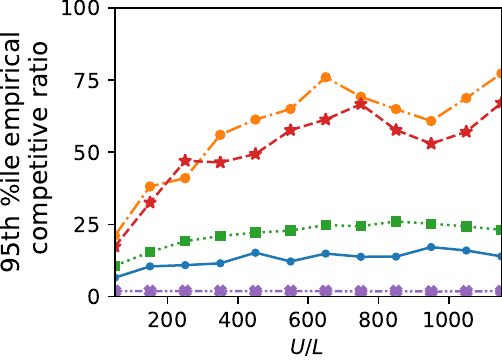}\vspace{-2em}
    \caption{Varying $\nicefrac{U}{L}$, with $\beta = \nicefrac{U}{5}, d = 5, \xi = 0$, and $\sigma = \nicefrac{U}{5}$.} \label{fig:ul-95}
	\endminipage\hfill
    \minipage{0.24\textwidth}
	\includegraphics[width=\linewidth]{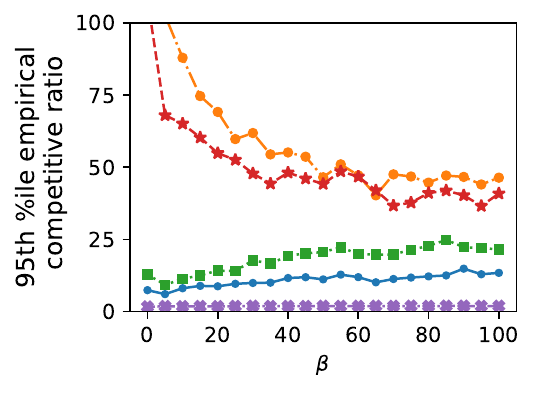}\vspace{-2em}
    \caption{Varying $\beta$, with $\nicefrac{U}{L} = 250, d = 5, \xi = 0$, and $\sigma = 50$.} \label{fig:beta-95}
	\endminipage \hfill
 	\minipage{0.24\textwidth}
	\includegraphics[width=\linewidth]{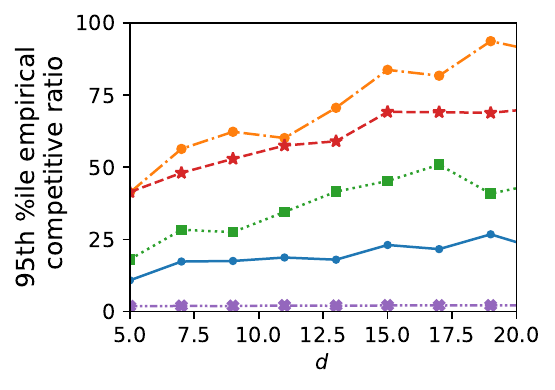}\vspace{-2em}
    \caption{Varying $d$ with $\beta$ = $50, \nicefrac{U}{L}$ = $250, \sigma$ = $50,$ and $\xi$=$0$.}\label{fig:dim-95}
	\endminipage\hfill
    \minipage{0.24\textwidth}
	\includegraphics[width=\linewidth]{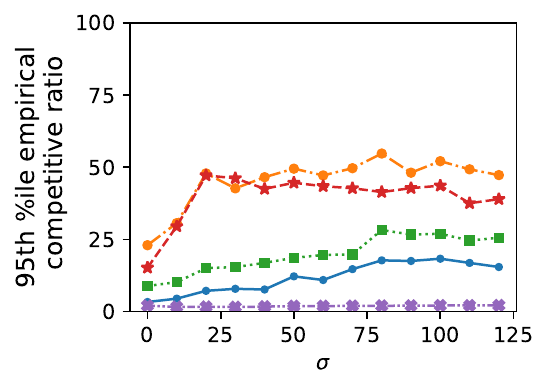}\vspace{-2em}
    \caption{Varying $\sigma$, with $\beta$ = $50, \nicefrac{U}{L}$ = $250, d$ = $5$, and $\xi$ = $0$.}\label{fig:var-95}
	\endminipage
\end{figure*}

\section{Proofs for \autoref{sec:comp} (Competitive Algorithms)} \label{apx:comp}
\begin{figure*}[h!]
    \small
    \minipage{0.66\textwidth}
    \centering
	\includegraphics[width=0.9\linewidth]{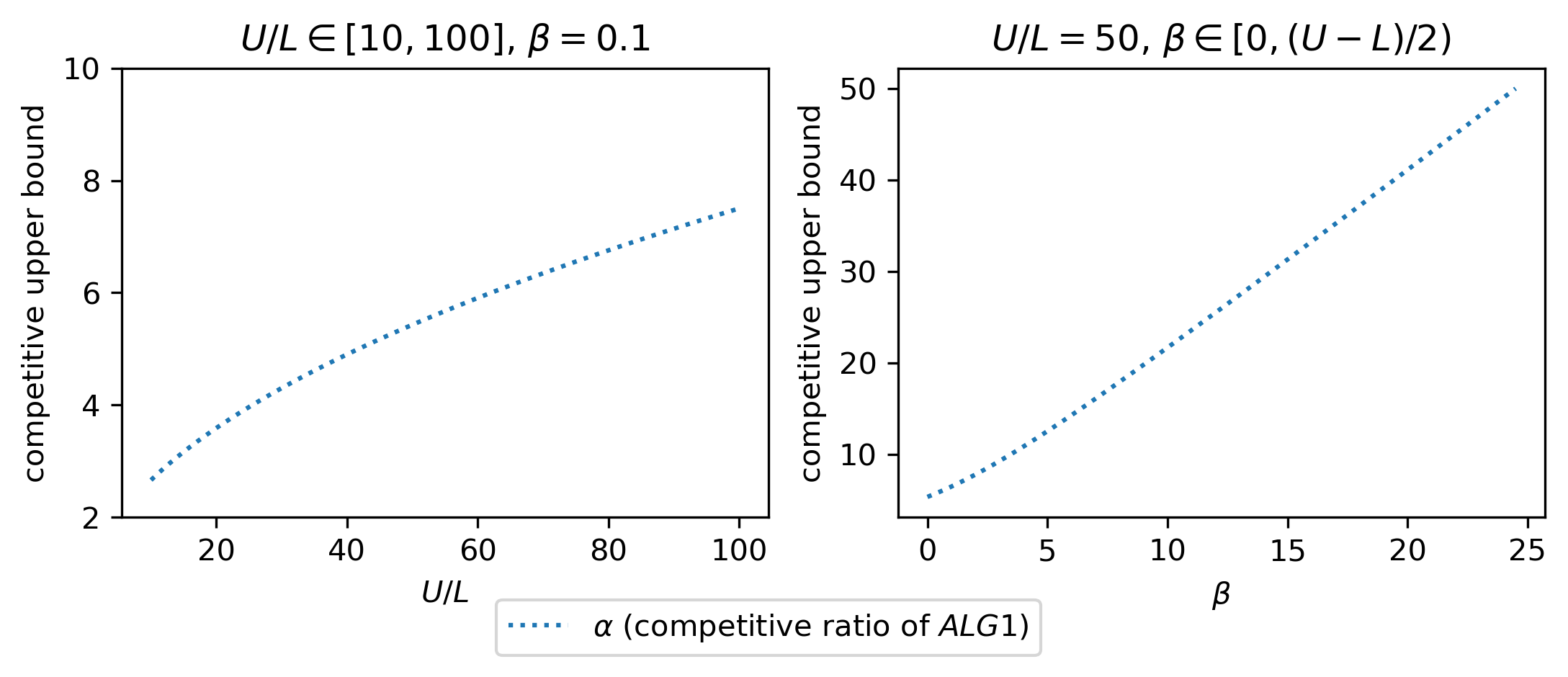}\vspace{-1em}
	\endminipage \hfill
    \minipage{0.33\textwidth}
	\includegraphics[width=\linewidth]{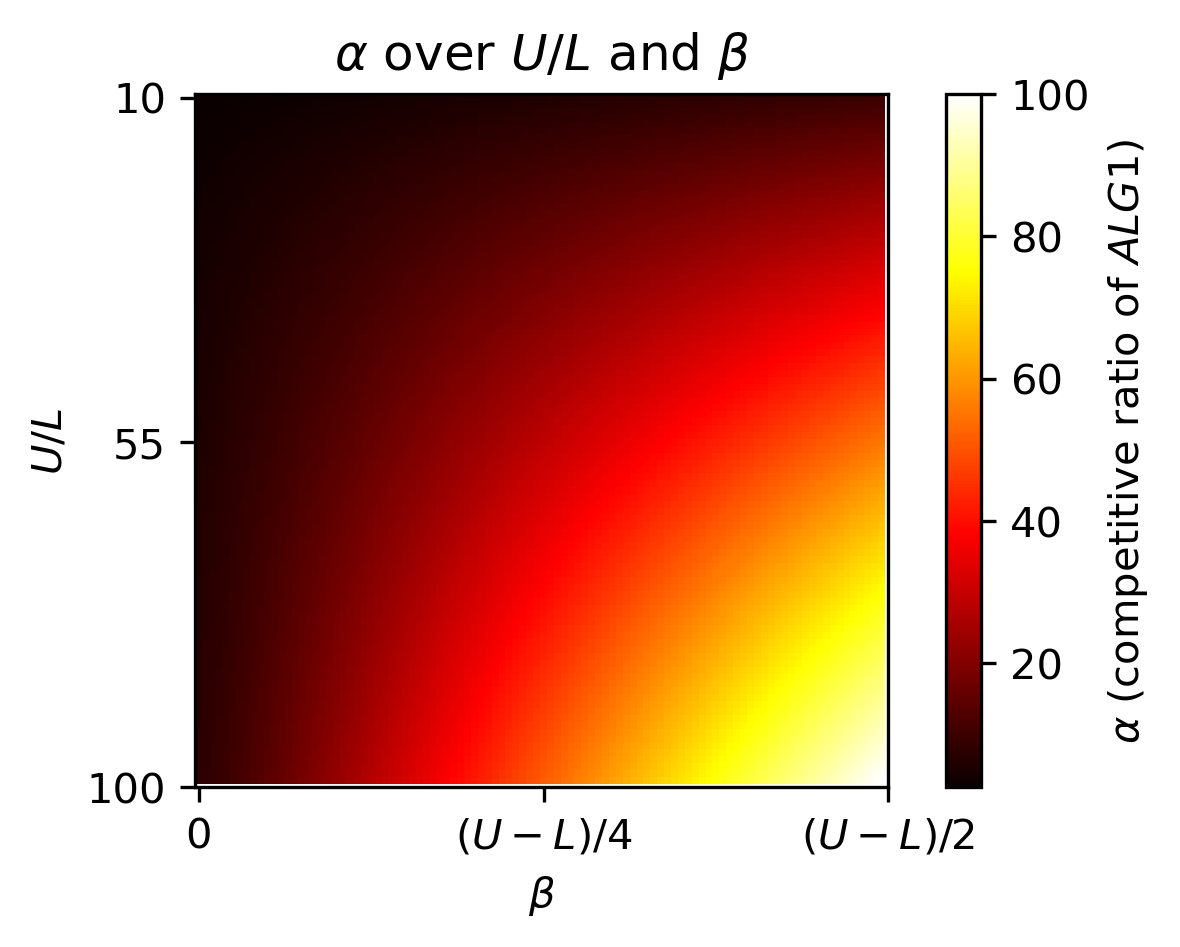}\vspace{-1em}
	\endminipage
     \caption{Plotting $\alpha$, the competitive upper bound of \ALGone (see \autoref{eq:alpha}), as a function of the \CFL problem's parameters (i.e., $\nicefrac{U}{L}$, $\beta$).}\label{fig:plotting-alpha}
    \vspace{1em}
\end{figure*}

\subsection{Convexity of the pseudo-cost minimization problem in \ALGone} \label{apx:pseudo-convex}

In this section, we show that the pseudo-cost minimization problem central to the design of \ALGone is a convex minimization problem, implying that it can be solved efficiently.

Define $h_t(\mbf{x}) : t \in [T]$ to represent the pseudo-cost minimization problem for a single arbitrary time step:
\begin{align}
    h_t(\cdot) = f_t(\mbf{x}) + d(\mbf{x}, \mbf{x}_{t-1}) - \int_{z^{(t-1)}}^{z^{(t-1)} + c( \mbf{x} ) }\phi(u) du.
\end{align}
 
\begin{theorem}
    Under the assumptions of the \CFL and \MAL problem settings, $h_t( \cdot )$ is always convex.
\end{theorem}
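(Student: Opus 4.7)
The plan is to decompose $h_t$ into three pieces and argue convexity of each. The hitting cost $f_t(\mathbf{x})$ is convex by assumption, and the switching cost $\lVert \mathbf{x} - \mathbf{x}_{t-1}\rVert_{\ell_1(\mathbf{w})}$ is convex as a composition of a norm with a translation. The nontrivial step is to establish that $-\int_{z^{(t-1)}}^{z^{(t-1)} + c(\mathbf{x})} \phi(u)\, du$ is convex in $\mathbf{x}$.

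To handle this, I would first exploit that the feasible domain enforces $\mathbf{x}^i \in [0,1]$, so on this domain $c(\mathbf{x}) = \sum_i \mathbf{c}^i \mathbf{x}^i$ is affine (and similarly for \MAL, where the separable $c$ is affine on the simplex-type domain). Define the scalar function $G(y) := \int_{z^{(t-1)}}^{z^{(t-1)} + y} \phi(u)\, du$. Since composing a concave scalar function with an affine map preserves concavity, it suffices to show that $G$ is concave, so that $-G \circ c$ is convex in $\mathbf{x}$. By the fundamental theorem of calculus, $G''(y) = \phi'(z^{(t-1)} + y)$, so concavity of $G$ reduces to checking that $\phi' \le 0$ on $[0,1]$.

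Differentiating \eqref{eq:phi-min} yields $\phi'(z) = \tfrac{1}{\alpha}\bigl(\tfrac{U}{\alpha} - U + 2\beta\bigr)\exp(z/\alpha)$, so the task reduces to verifying the sign condition $\tfrac{U}{\alpha} + 2\beta - U \le 0$, i.e., $\alpha \ge \tfrac{U}{U - 2\beta}$. This is where the implicit equation defining $\alpha$ in \autoref{thm:alphaCompCFL} plays its role: the relation $\tfrac{U - L - 2\beta}{U - U/\alpha - 2\beta} = \exp(1/\alpha)$ together with the standing assumption $\beta \in [0, (U-L)/2)$ forces the numerator on the left to be strictly positive; since $\exp(1/\alpha) > 0$, the denominator $U - U/\alpha - 2\beta$ must also be strictly positive, which is exactly the required inequality. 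Hence $\phi$ is non-increasing, $G$ is concave, and $-G(c(\cdot))$ is convex. Summing the three convex pieces concludes the argument.

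The main obstacle, if any, is this last sign extraction: it requires reading off information about $U/\alpha$ from an implicit transcendental equation rather than a closed form. Everything else (linearity of $c$, norm convexity, concave-composed-with-affine) is essentially routine. I would flag that the argument transfers verbatim to \MAL via \sref{Lemma}{lem:simplexTransform}, since the image of the star-metric allocation under the embedding again lives in the positive orthant where $c$ is affine and the same $\phi$ is used.
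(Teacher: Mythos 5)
Your proof is correct and follows the same overall route as the paper's: decompose $h_t$ into three pieces, note that the hitting and switching costs are convex, and establish convexity of the negated integral term by viewing it as a concave scalar function of $c(\mathbf{x})$ composed with the map $c$, which is affine on the feasible domain. Where you go beyond the paper is in verifying that $\phi$ is actually non-increasing: the paper simply asserts that ``$\phi$ is monotonically decreasing on the interval $[0,1]$,'' whereas you derive the needed sign condition $U - U/\alpha - 2\beta > 0$ from the implicit relation $\tfrac{U-L-2\beta}{U - U/\alpha - 2\beta} = \exp(1/\alpha)$ together with the standing assumption $\beta < (U-L)/2$ (a strictly positive numerator and a strictly positive exponential force a strictly positive denominator). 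That verification is worth keeping, since it closes a step the paper leaves implicit. A cosmetic difference: the paper frames its argument ``by contradiction'' and computes the Hessian of $k \circ c$ via the chain rule, but neither the chain-rule computation nor a contradiction is essential—your concave-composed-with-affine phrasing is equivalent and cleaner.
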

\begin{proof}

We prove the above statement by contradiction.

By definition, we know that the sum of two convex functions gives a convex function.  Since we have that $d(\mbf{x}, \mbf{x}')$ is defined as some norm, by definition and by observing that $\mbf{x}'$ is fixed, $d(\mbf{x}, \mbf{x}')$ is convex.  We have also assumed as part of the problem setting that each $f_t( \mbf{x} )$ is convex.  Thus, $f_t(\mbf{x}) + d(\mbf{x}, \mbf{x}')$ must be convex.

We turn our attention to the term $- \int_{z^{(t-1)}}^{z^{(t-1)} + c( \mbf{x} ) }\phi(u) du$.  Let $k(c(\mbf{x})) = \int_{z^{(t-1)}}^{z^{(t-1)} + c( \mbf{x} ) }\phi(u) du$.  By the fundamental theorem of calculus, $\nabla k(c(\mbf{x})) = \phi( z^{(t-1)} + c( \mbf{x} ) ) \nabla c(\mbf{x})$

Let $g(c(\mbf{x})) = \phi( z^{(t-1)} + c( \mbf{x} ) )$.  Then $\nabla^2 k(c(\mbf{x})) = \nabla^2 c(\mbf{x}) k(c(\mbf{x})) + \nabla c(\mbf{x}) g'(c(\mbf{x})) \nabla c(\mbf{x})^\intercal$.  Since $c(\mbf{x})$ is piecewise linear (\CFL and \MAL both assume it is linear), we know that $\nabla^2 c(\mbf{x}) g(c(\mbf{x})) = 0$.  Since $\phi$ is monotonically decreasing on the interval $[0,1]$, we know that $ g'(c(\mbf{x})) < 0$, and thus $\nabla c(\mbf{x}) g'(c(\mbf{x})) \nabla c(\mbf{x})^\intercal$ is negative semidefinite.  This implies that $k(c(\mbf{x}))$ is concave in $\mbf{x}$.

Since the negation of a concave function is convex, this causes a contradiction, because the sum of two convex functions gives a convex function.

Thus, $h_t(\cdot) = f_t(\mbf{x}) + d(\mbf{x}, \mbf{x}_{t-1}) - \int_{z^{(t-1)}}^{z^{(t-1)} + c( \mbf{x} ) }\phi(u) du$ is always convex under the assumptions of \CFL and \MAL.
\end{proof}

By showing that $h_t( \cdot ) $ is convex, it follows that the pseudo-cost minimization \eqref{eq:pseudocostRORO} in \ALGone is a convex minimization problem (i.e., it can be solved efficiently using numerical methods).

\subsection{Proof of \autoref{thm:alphaCompCFL}} \label{apx:alphaCompCFL}

In this section, we prove \autoref{thm:alphaCompCFL}, which shows that $\alpha$ as given by \eqref{eq:alpha} is an upper bound on the worst-case competitive ratio of \ALGone (given by Algorithm~\ref{alg:roro}) for the \CFL problem.

\begin{proof}[Proof of \autoref{thm:alphaCompCFL}]

Let $z^{(j)} = \sum_{t \in [T]} c( \mbf{x}_t )$ denote the fraction of the long-term constraint satisfied by \ALGone before the compulsory trade on an arbitrary \CFL instance $\mathcal{I} \in \Omega$. Also note that $z^{(t)} = \sum_{m\in[t]} c( \mbf{x}_m )$ is non-decreasing over $n$.

\begin{lemma}
\label{lem:opt-lb}
The offline optimal solution $\OPT(\mathcal{I})$ for any \CFL instance $\mathcal{I} \in \Omega$ is lower bounded by $\phi(z^{(j)}) - \beta$.
\end{lemma}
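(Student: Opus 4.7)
The plan is to exploit the first-order optimality of \ALGone's pseudo-cost minimization in order to convert the fact that the algorithm's utilization settled at $z^{(j)}$ into a marginal-cost floor on every hitting function in the sequence, and then to lift this pointwise bound to a global lower bound on \OPT via the long-term constraint.

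First, I would fix an arbitrary time $t \in [j]$ and examine the KKT conditions of the pseudo-cost minimization \eqref{eq:pseudocostRORO} at its optimizer $\mbf{x}_t$. Differentiating the reward term $\int_{z^{(t-1)}}^{z^{(t-1)}+c(\mbf{x})}\phi(u)\,du$ with respect to $\mbf{x}^i$ yields $\phi(z^{(t)})\,\mbf{c}^i$, and the subgradient of the weighted $\ell_1$ switching cost $\lVert \mbf{x} - \mbf{x}_{t-1}\rVert_{\ell_1(\mbf{w})}$ has magnitude at most $\mbf{w}^i \leq \beta\,\mbf{c}^i$ coordinatewise. Combining these two observations, I expect to derive a coordinate-wise bound of the form $[\nabla f_t]^i / \mbf{c}^i \geq \phi(z^{(t)}) - \beta$ in every direction that \ALGone chose not to push further (and in fact a stronger bound in already-expanded directions). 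Since $\phi$ is monotonically decreasing on $[0,1]$ and $z^{(t)} \leq z^{(j)}$ for every $t \leq j$, this weakens uniformly to $[\nabla f_t]^i / \mbf{c}^i \geq \phi(z^{(j)}) - \beta$ across the whole sequence.

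Next, I would convert this pointwise marginal-cost floor into a global lower bound on any feasible offline trajectory. By convexity of each $f_t$ together with the normalization $f_t(\mbf{0}) = 0$, a first-order expansion at the origin gives $f_t(\mbf{x}_t^\star) \geq (\phi(z^{(j)}) - \beta)\,c(\mbf{x}_t^\star)$ for any decision $\mbf{x}_t^\star$ in an optimal offline trajectory. Summing over $t \in [T]$ and invoking the long-term constraint $\sum_t c(\mbf{x}_t^\star) \geq 1$ then yields $\OPT(\mathcal{I}) \geq \sum_t f_t(\mbf{x}_t^\star) \geq \phi(z^{(j)}) - \beta$, as claimed.

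The main obstacle will be making the coordinate-wise KKT bound rigorous in dimensions where \ALGone has already allocated positive mass: in those ``used'' coordinates the first-order condition is an equality modulated by the orientation of the switching-cost subgradient, and the $-\beta$ slack is exactly what is needed to absorb that orientation uniformly. I anticipate the cleanest way around any case split is either (i) a direct substitution argument that compares \OPT's trajectory against \ALGone's aggregate pseudo-cost objective, leveraging convexity of that objective (\autoref{apx:pseudo-convex}) together with feasibility of \OPT to conclude that \OPT's pseudo-cost is at least \ALGone's; or (ii) applying the bounded-gradient assumption to conclude that $[\nabla f_t(\mbf{0})]^i \geq (\phi(z^{(j)}) - \beta)\,\mbf{c}^i$ already at the origin, so that the convexity step above applies coordinate-wise without reference to which dimensions \ALGone happened to use.
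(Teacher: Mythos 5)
Your proposal takes a direct route (a pointwise gradient floor summed against the long-term constraint) whereas the paper argues by contradiction, isolating a single critical time step $m$ at which $\ALGone$ has already reached its final pre-compulsory utilization $z^{(j)}$ and so chose $\mbf{x}_m = \mbf{0}$. That choice of $m$ is not incidental: it is precisely what lets the paper turn first-order optimality into a bound on $\nabla f_m$ \emph{at the origin}, which is the quantity that controls $\OPT$'s per-unit cost.

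The gap in your version is exactly the one you flag as ``the main obstacle,'' and neither of your two proposed workarounds closes it. The KKT conditions at the minimizer $\mbf{x}_t$ of \eqref{eq:pseudocostRORO} bound the coordinates of $\nabla f_t(\mbf{x}_t)$, not of $\nabla f_t(\mbf{0})$. In coordinates where $\ALGone$ pushed ($\mbf{x}_t^i > 0$), stationarity gives $[\nabla f_t(\mbf{x}_t)]^i = \phi(z^{(t)})\mbf{c}^i - \mbf{w}^i$, but for a convex $f_t$ the per-coordinate derivative is \emph{non-decreasing} away from $\mbf{0}$, so $[\nabla f_t(\mbf{0})]^i$ is \emph{at most} this value, not at least; the floor you need at the origin can genuinely be violated. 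Said differently, the fact that $\ALGone$ bought in direction $i$ at time $t$ is evidence that $f_t$ was \emph{cheap} near $\mbf{0}$ in that direction, which works against, not for, a lower bound on $\nabla f_t(\mbf{0})$. Consequently the first-order expansion $f_t(\mbf{x}_t^\star) \geq (\phi(z^{(j)})-\beta)\,c(\mbf{x}_t^\star)$ does not follow from your KKT step. Your workaround (ii), invoking the bounded-gradient assumption to get the floor ``already at the origin,'' is circular: that assumption only gives $[\nabla f_t]^i/\mbf{c}^i \geq L$, and $\phi(z^{(j)})-\beta$ is generally strictly larger than $L$ when $z^{(j)} < 1$. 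Workaround (i) (comparing $\OPT$'s trajectory against $\ALGone$'s aggregated pseudo-cost objective) is not developed enough to evaluate, but note that the pseudo-cost objective includes the reward term $\int \phi$, so $\OPT$ minimizing true cost gives no obvious ordering of pseudo-costs.

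The paper avoids all of this by restricting attention (using the descending-order worst-case reduction) to a single time step $m$ with $\mbf{x}_m = \mbf{0}$ and $z^{(m-1)} = z^{(j)}$. There the relevant first-order condition is evaluated at $\mbf{0}$ itself, the switching subgradient contributes at most $\beta\,\mbf{c}^i$, and one immediately concludes $[\nabla f_m(\mbf{0})]^i \geq (\phi(z^{(j)})-\beta)\,\mbf{c}^i$; the contradiction then comes from showing that any cheaper $f_m$ would have forced $c(\mbf{x}_m) > 0$, contradicting $z^{(j)}$ being final. If you want to keep a direct proof rather than a contradiction, you would need to argue the gradient floor at $\mbf{0}$ for \emph{every} $t$, which would require showing that whenever $\nabla f_t(\mbf{0})$ dips below $\phi(z^{(j)})-\beta$, the resulting purchases by $\ALGone$ drive $z^{(j)}$ high enough that $\phi(z^{(j)})-\beta$ drops below that dip — a self-referential fixed-point argument that your proposal does not attempt.
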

\paragraph{Proof of Lemma~\ref{lem:opt-lb}.} We prove this lemma by contradiction. Note that the offline optimum will stay at $\mbf{0}$ whenever possible, and satisfy the long-term constraint using the cost functions with the minimum gradient (i.e., the best marginal cost).  Assume that $\OPT(\mathcal{I}) < \phi(z^{(j)}) - \beta$, and that $z^{(j)} < 1$ (implying that $\OPT(\mathcal{I}) > L$).

Recall that any cost function $f_t(\cdot) : X \to \mathbb{R}$ is minimized exactly at $\mbf{0}$, since $f_t(\mbf{0}) = 0 \ \forall t \in [T]$.  By convexity of the cost functions, this implies that the gradient of some cost function $f_t$ is similarly minimized at the point $\mbf{0}$, and thus the \textit{best marginal cost} for $f_t$ can be obtained by taking an infinitesimally small step away from $\mbf{0}$ in at least one direction, which we denote (without loss of generality) as $i \in [d]$. For brevity, we denote this best marginal cost in $f_t$ by $\left[ \nabla f_t \right]^i$.

The assumption that $\OPT(\mathcal{I}) < \phi(z^{(j)}) - \beta$ implies that instance $\mathcal{I}$ must contain a cost function $f_m( \cdot )$ at some arbitrary time step $m$ ($m\in[T]$) which satisfies $\left[ \nabla f_m \right]^i < \phi(z^{(j)}) - \beta$ for any dimension $i \in [d]$.  

Prior work~\cite{Lorenz:08, SunZeynali:20} has shown that the worst-case for online search problems with long-term demand constraints occurs when cost functions arrive online in descending order, so we henceforth adopt this assumption.  Recall that at each time step, \ALGone solves the pseudo-cost minimization problem defined in \eqref{eq:pseudocostRORO}.
Without loss of generality, assume that $z^{(m-1)} = z^{(j)}$, i.e. the cost function $f_m( \cdot )$ arrives when $\ALGone$ has already reached its final utilization (before the compulsory trade).  This implies that $\mbf{x}_m = \mbf{0}$, and further that $c( \mbf{x}_m ) = 0$.  This implies that  
$
f_m (\mbf{x}) + \lVert \mbf{x} - \mbf{x}_{m-1} \rVert_{\ell_1 (\mbf{w})} > \int_{z^{(m-1)}}^{z^{(m-1)}+ c(\mbf{x})} \phi(u) du
$, since the pseudo-cost minimization problem should be minimized when \ALGone sets $\mbf{x}_m = \mbf{0}$.

The pseudo-cost minimization problem at time step $m$ can be expressed as follows:
\begin{align*}
    \mbf{x}_m = \argmin_{\mbf{x} \in \mathbb{R}^d : c(\mbf{x}) \leq 1 - z^{(m-1)}} f_m(\mbf{x}) + \lVert \mbf{x} - \mbf{x}_{m-1} \rVert_{\ell_1 (\mbf{w})} - \int_{z^{(m-1)}}^{z^{(m-1)}+ c(\mbf{x})} \phi(u) du.
\end{align*}

We note that $\lVert \mbf{x} - \mbf{x}_{m-1} \rVert_{\ell_1 (\mbf{w})}$ is upper bounded by $\beta( z^{(m-1)} + c(\mbf{x}) )$, since in the worst case, the previous online decision $\mbf{x}_{m-1}$ built up all of \ALGone's utilization ($z^{(m-1)}$) so far, and in the next step it will have to switch dimensions to ramp up to $\mbf{x}$.

Since the function $\phi$ is monotonically decreasing on $z \in [0,1]$, the $\mbf{x}_m$ solving the true pseudo-cost minimization problem is lower-bounded by the $\Breve{\mbf{x}}_m$ solving the following minimization problem (i.e., $c( \Breve{\mbf{x}}_m ) \leq c( \mbf{x}_m ) $):\\
\begin{align*}
    \Breve{\mbf{x}}_m = \argmin_{\mbf{x} \in \mathbb{R}^d : c( \mbf{x} ) \leq 1 - z^{(m-1)}} f_m(\mbf{x}) + \beta ( z^{(m-1)} + c(\mbf{x}) ) - \int_{z^{(m-1)}}^{z^{(m-1)}+ c(\mbf{x})} \phi(u) du.
\end{align*}

This further gives the following:
\begin{align*}
    & f_m(\mbf{x}) + \beta ( z^{(t)} + c( \mbf{x} ) ) - \int_{z^{(m-1)}}^{z^{(m-1)}+ c( \mbf{x} )} \phi(u) du\\
    & f_m(\mbf{x}) + \beta ( z^{(t)} + c( \mbf{x} ) ) - \int_{z^{(m-1)}}^{z^{(m-1)}+ c( \mbf{x} )} \left[ U - \beta + \left( \frac{U}{\alpha} - U + 2\beta \right) \exp(u/\alpha) \right] du\\
    & f_m(\mbf{x}) - (U - \beta) c( \mbf{x} ) + \beta ( z^{(t)} + c( \mbf{x} ) ) - \left[ U - U\alpha + 2\beta \alpha \right] \left( \exp \left( \frac{z^{(m-1)}+ c( \mbf{x} )}{\alpha} \right) - \exp \left( \frac{z^{(m-1)}}{\alpha} \right) \right)
\end{align*}

By assumption, since $f_m( \cdot)$ is convex and satisfies $\left[ \nabla f_m \right]^i < \phi(z^{(j)}) - \beta$ at $\mbf{x} = \mbf{0}$, there must exist a dimension $i$ in $f_m$ where an incremental step away from $\mbf{0}$ in direction $i$ satisfies the following inequality: $f_m(\mbf{x}) \lesssim \left[ \nabla f_m \right]^i \cdot c( \mbf{x} ) < [\phi(z^{(j)}) - \beta]  c( \mbf{x} )$ for some $\mbf{x}$ where $c(\mbf{x}) > 0$.  Thus, we have the following in the pseudo-cost minimization problem:
\begin{align*}
    & ( \left[ \nabla f_m \right]^i - U + \beta) c( \mbf{x} ) + \beta ( z^{(t)} + c( \mbf{x} ) ) - \left[ U - U\alpha + 2\beta \alpha \right] \left( \exp \left( \frac{z^{(m-1)}+ c( \mbf{x} )}{\alpha} \right) - \exp \left( \frac{z^{(m-1)}}{\alpha} \right) \right)
\end{align*}

Letting $c( x )$ be some scalar $y$ (which is valid since we assume there is at least one dimension in $f_t(\cdot)$ where the cost function growth rate is at most $\nabla f_m$), the pseudo-cost minimization problem finds the value $y$ which minimizes the following quantity:
\begin{align*}
    & (\left[ \nabla f_m \right]^i - U + \beta) y + \beta ( z^{(t)} + y ) - \left[ U - U\alpha + 2\beta \alpha \right] \left( \exp \left( \frac{z^{(m-1)}+ y}{\alpha} \right) - \exp \left( \frac{z^{(m-1)}}{\alpha} \right) \right)
\end{align*}

Taking the derivative of the above with respect to $y$ yields the following:
\begin{align*}
    & \frac{d}{dy} \left[ (\left[ \nabla f_m \right]^i - U + \beta) y + \beta ( z^{(t)} + y ) - \left[ U - U\alpha + 2\beta \alpha \right] \left( \exp \left( \frac{z^{(m-1)}+ y}{\alpha} \right) - \exp \left( \frac{z^{(m-1)}}{\alpha} \right) \right) \right] = \\
    & =  \left[ \nabla f_m \right]^i + 2 \beta - U + \frac{(U\alpha - 2\alpha \beta - U) \exp \left( \frac{z^{(m-1)}+ y}{\alpha} \right)}{\alpha}
\end{align*}

If $y = 0$, we have the following by assumption that $\left[ \nabla f_m \right]^i < \phi(z^{(j)}) - \beta$ and that $z^{(j)} = z^{(m-1)}$:
\begin{align*}
    \left[ \nabla f_m \right]^i + 2 \beta - U + (U - 2 \beta - U / \alpha) \exp \left( \frac{z^{(m-1)}}{\alpha} \right) &< \phi(z^{(j)}) + \beta - U + (U - 2 \beta - U / \alpha) \exp \left( \frac{z^{(m-1)}}{\alpha} \right)\\
    &< \phi(z^{(j)}) - \left( \phi(z^{(m-1)}) \right) = 0
\end{align*}

The above derivation implies that the derivative of the cost minimization problem at $c( \mbf{x} ) = 0$ (which corresponds to the case where $\mbf{x} = \mbf{0}$) is strictly less than $0$.  This further implies that $\Breve{\mbf{x}}_m$ must be \textit{non-zero}, since the minimizer must satisfy $c ( \Breve{\mbf{x}}_m ) > 0$.  Since $c( \Breve{\mbf{x}}_m )$ lower bounds the true $c( \mbf{x}_m )$, this causes a contradiction, as it was assumed that the utilization after time step $m$ would satisfy $z^{(m)} = z^{(m-1)} = z^{(j)}$, but if $c( \mbf{x}_m ) > 0$, $z^{(m)}$ must satisfy $z^{(m)} > z^{(m-1)}$.

It then follows by contradiction that $\OPT(\mathcal{I}) \geq \phi(z^{(j)}) - \beta$.

\begin{lemma}
\label{lem:alg-ub}
The cost of \ALGone on any valid \CFL instance $\mathcal{I} \in \Omega$ is upper bounded by
\begin{align}
    \emph{\ALGone}(\mathcal{I})\le \int_{0}^{z^{(j)}} \phi(u) du + \beta z^{(j)} + (1- z^{(j)}) U.
\end{align}
\end{lemma}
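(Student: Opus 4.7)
The plan is to decompose $\ALGone$'s total cost on $\mathcal{I}$ into contributions from the online phase (time steps $1, \ldots, j$ before the compulsory trade) and the compulsory trade phase (time steps $j+1, \ldots, T+1$), and bound each piece separately. For the online phase I expect the pseudo-cost minimization to deliver a bound of $\int_0^{z^{(j)}} \phi(u)\, du + \beta z^{(j)}$, while the compulsory trade contributes at most $(1 - z^{(j)}) U$ since each remaining unit of the long-term constraint then costs at most $U$ by the bounded-gradient assumption $[\nabla f_t]^i \leq U\,\mbf{c}^i$ together with $f_t(\mbf{0}) = 0$.

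The first key step is to exploit that $\mbf{x} = \mbf{0}$ is always a feasible choice in \eqref{eq:pseudocostRORO}. Because $f_t(\mbf{0}) = 0$, $c(\mbf{0}) = 0$, and the integral term vanishes on the empty interval, the pseudo-cost of this choice equals $\lVert \mbf{x}_{t-1} \rVert_{\ell_1(\mbf{w})}$. Optimality of $\mbf{x}_t$ therefore yields
\[
f_t(\mbf{x}_t) + \lVert \mbf{x}_t - \mbf{x}_{t-1} \rVert_{\ell_1(\mbf{w})} \leq \lVert \mbf{x}_{t-1} \rVert_{\ell_1(\mbf{w})} + \int_{z^{(t-1)}}^{z^{(t)}} \phi(u)\, du.
\]
Next I would use the definition $\beta = \max_i \mbf{w}^i / \mbf{c}^i$ and the fact $\mbf{x}_{t-1}^i \in [0,1]$ to obtain the crucial pointwise reduction $\lVert \mbf{x}_{t-1} \rVert_{\ell_1(\mbf{w})} \leq \beta\, c(\mbf{x}_{t-1})$. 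Summing the inequality over $t = 1, \ldots, j$, the $\phi$ integrals telescope along the utilization trajectory, while the residual $c(\mbf{x}_{t-1})$ terms also telescope: since $c(\mbf{x}_0) = 0$, $\sum_{t=1}^j c(\mbf{x}_{t-1}) = z^{(j-1)} \leq z^{(j)}$. Thus
\[
\sum_{t=1}^j \left[f_t(\mbf{x}_t) + \lVert \mbf{x}_t - \mbf{x}_{t-1} \rVert_{\ell_1(\mbf{w})}\right] \leq \beta z^{(j)} + \int_0^{z^{(j)}} \phi(u)\, du.
\]
Adding the compulsory trade contribution, bounded by $(1 - z^{(j)}) U$, produces the desired overall bound.

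The main obstacle I anticipate is the tightness of the switching-cost accounting: naively each $\lVert \mbf{x}_{t-1} \rVert_{\ell_1(\mbf{w})}$ could be as large as $\beta$, so summing would give $O(j\beta)$ rather than $O(\beta z^{(j)})$. It is precisely the inequality $\lVert \mbf{x}_{t-1} \rVert_{\ell_1(\mbf{w})} \leq \beta\, c(\mbf{x}_{t-1})$, combined with the telescoping of $c$-values, that collapses this sum to the stated $\beta z^{(j)}$. A secondary subtlety is handling any switching costs that arise during the compulsory trade and the final return to $\mbf{0}$; these contribute only $O(\beta)$ and are benign under the standing assumption $\beta < (U-L)/2$, but must be absorbed cleanly so that the compulsory-trade bound remains $(1 - z^{(j)}) U$.
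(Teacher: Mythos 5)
Your proof is correct and follows essentially the same route as the paper's: establish the per-step bound $f_t(\mbf{x}_t)+\lVert\mbf{x}_t-\mbf{x}_{t-1}\rVert_{\ell_1(\mbf{w})}\le\beta\,c(\mbf{x}_{t-1})+\int_{z^{(t-1)}}^{z^{(t)}}\phi(u)\,du$, sum it over $t\in[j]$ using $\sum_{t\in[j]}c(\mbf{x}_{t-1})=z^{(j-1)}\le z^{(j)}$, and account for the compulsory trade by $(1-z^{(j)})U$. Your derivation of the per-step bound directly from feasibility of $\mbf{0}$ in \eqref{eq:pseudocostRORO} and optimality of $\mbf{x}_t$ is marginally cleaner than the paper's two-case split on $c(\mbf{x}_t)$, but the substance is identical.
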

\paragraph{Proof of Lemma~\ref{lem:alg-ub}.}

First, recall that $z^{(t)} = \sum_{\tau \in[t]} c( \mbf{x}_\tau )$ is non-decreasing over $t \in [T]$.

Observe that whenever $c( \mbf{x}_t ) > 0$, we know that $f_t(\mbf{x}_t) + \lVert \mbf{x}_t - \mbf{x}_{t-1} \rVert_{\ell 1 (\mbf{w})} < \int_{z^{(t-1)}}^{z^{(t-1)}+ c( \mbf{x}_t )} \phi(u) du$.  Then, if $c( \mbf{x}_t ) = 0$, which corresponds to the case when $\mbf{x}_t = \mbf{0}$, we have the following:
\begin{align*}
    f_t(\mbf{x}_t) + \lVert \mbf{x}_t - \mbf{x}_{t-1} \rVert_{\ell 1 (\mbf{w})} - \int_{z^{(t-1)}}^{z^{(t-1)}+ c( \mbf{x}_t )}  \phi(u) du = 0 + \lVert - \mbf{x}_{t-1} \rVert_{\ell 1 (\mbf{w})} - 0 \leq \beta c( \mbf{x}_{t-1} )
\end{align*}

This gives that for any time step where $c( \mbf{x}_t ) = 0$, we have the following inequality:
\begin{align}
    & f_t(\mbf{x}_t) + \lVert \mbf{x}_t - \mbf{x}_{t-1} \rVert_{\ell 1 (\mbf{w})} \le \beta c( \mbf{x}_{t-1} ), \forall t \in [T] : c( \mbf{x}_t ) = 0.
\end{align}

\noindent And thus, since any time step where $c( \mbf{x}_t ) > 0$ implies $f_t(\mbf{x}_t) + \lVert \mbf{x}_t - \mbf{x}_{t-1} \rVert_{\ell 1 (\mbf{w})} < \int_{z^{(t-1)}}^{z^{(t-1)}+ c( \mbf{x}_t )} \phi(u) du$, we have the following inequality for all time steps (\textit{i.e., an upper bound on the excess cost not accounted for in the pseudo-cost threshold function or compulsory trade})
\begin{align}
    & f_t(\mbf{x}_t) + \lVert \mbf{x}_t - \mbf{x}_{t-1} \rVert_{\ell 1 (\mbf{w})} - \int_{z^{(t-1)}}^{z^{(t-1)}+ c( \mbf{x}_t )} \phi(u) du \leq \beta c( \mbf{x}_{t-1} ), \forall t \in [T].
\end{align}

Thus, we have
\begin{align}
   \beta z^{(j)} = \sum_{t \in [j]} \beta c( \mbf{x}_{t-1} ) &\ge \sum_{t \in [j]}  \left[ f_t(\mbf{x}_t) + \lVert \mbf{x}_t - \mbf{x}_{t-1} \rVert_{\ell 1 (\mbf{w})} - \int_{z^{(t-1)}}^{z^{(t-1)}+ c( \mbf{x}_t ) } \phi(u) du \right]\\
   &= \sum_{t \in [j]}  \left[ f_t(\mbf{x}_t) + \lVert \mbf{x}_t - \mbf{x}_{t-1} \rVert_{\ell 1 (\mbf{w})} \right] - \int_{0}^{z^{(j)}} \phi(u) du \\
    &= \ALGone - (1-z^{(j)})U - \int_{0}^{z^{(j)}}\phi(u) du.
\end{align}
 
Combining Lemma~\ref{lem:opt-lb} and Lemma~\ref{lem:alg-ub} gives
\begin{align}
    \frac{\ALGone(\mathcal{I})}{\OPT(\mathcal{I})} \le \frac{\int_{0}^{z^{(j)}} \phi(u) du + \beta z^{(j)} + (1- z^{(j)}) U}{\phi(z^{(j)}) - \beta} \leq \alpha,
\end{align}
where the last inequality holds since for any $z\in[0,1]$ 
\begin{align}
  \int_{0}^{z} \phi(u) du + \beta z + (1- z) U &=  \int_{0}^{z}\left[U - \beta + (U/\alpha - U + 2 \beta) \exp(z/\alpha)\right] + \beta z + (1 - z)U\\
  & = (U - \beta)z + \alpha (U/\alpha - U + 2 \beta) [\exp(z/\alpha) - 1] + \beta z + (1 - z)U\\
  & = \alpha (U/\alpha - U + 2 \beta) [\exp(z/\alpha) - 1] + U\\
  &= \alpha \left[U - 2 \beta + (U/\alpha - U + 2 \beta) \exp(z/\alpha) \right]\\
  &= \alpha [\phi(z) - \beta]. 
\end{align}
Thus, we conclude that \ALGone is $\alpha$-competitive for \CFL.
\end{proof}

\subsection{Proof of \sref{Corollary}{cor:alphaCompMAL}} \label{apx:alphaCompMAL}

In this section, we prove \sref{Corollary}{cor:alphaCompMAL}, which shows that the worst-case competitive ratio of \ALGone for \MAL is again upper bounded by $\alpha$ as defined in \eqref{eq:alpha}.

\begin{proof}[Proof of \sref{Corollary}{cor:alphaCompMAL}]
To show this result, we first prove a result stated in the main body, namely \sref{Lemma}{lem:simplexTransform}, which states the following:
For any \MAL instance on a weighted star metric $(X, d)$, there is a corresponding \CFL instance on $(\mathbb{R}^{n-1}, \lVert \cdot \rVert_{\ell_1 (\mbf{w'})})$ which preserves $f_t^a(\cdot) \ \forall t, c(\cdot) \ \forall a \in X$, and upper bounds $d(a,b) \ \forall (a,b) \in X$.

Before the proof, we note that \citet{Bansal:22} showed online metric allocation on a weighted star metric $(X, d)$ is identical to convex function chasing (with separable cost functions) on the normed vector space $(\Delta_n, \lVert \cdot \rVert_{\ell_1 (\mbf{w})})$, where $\Delta_n$ is the $n$-point simplex in $\mathbb{R}^{n}$ and $\lVert \cdot \rVert_{\ell_1 (\mbf{w})}$ is the weighted $\ell_1$ norm, with weights given by the corresponding edge weight in the underlying star metric as follows:
\begin{align*}
    \lVert \mbf{x} \rVert_{\ell_1 (\mbf{w})} &= \sum_{a \in X} \mbf{w}^a \lvert \mbf{x}^a \rvert.
\end{align*}

\begin{proof}[Proof of \sref{Lemma}{lem:simplexTransform}]
Recall that by assumption, the \MAL instance contains at least one \OFF point denoted by $a' \in X$ in the \MAL instance, where $\mbf{c}^{a'} = 0$.  Without loss of generality, let the \textit{first dimension} in $\Delta_n$ correspond to this \OFF point.

We define a linear map $\Phi : \Delta_n \to \mathbb{R}^{n-1}$, where
$\Phi$ has $n-1$ rows and $n$ columns, and is specified as follows:
\begin{align*}
    \Phi_{i,j} &= \begin{cases}
        1 \text{ if } $j = i+1$\\
        0 \text{ otherwise }
    \end{cases}
\end{align*}

It is straightforward to see that $\Phi \mbf{x} \in \mathbb{R}^{n-1},  \ \forall \mbf{x} \in \Delta_n$.

Recall that a \CFL decision space is the $\ell_1$ ball defined by the long-term constraint function in $\mathbb{R}^{n-1}$.  For any \MAL instance with constraint function $c(\mbf{x}) : \Delta_n \to \mathbb{R}$, we can define a long-term constraint function $c'(\mbf{x}') : \mathbb{R}^{n-1} \to \mathbb{R}$ as follows.  
The \MAL constraint function $c(\mbf{x})$ is defined as $\lVert \cdot \rVert_{\ell_1 (\mbf{c})}$ for some vector $\mbf{c} \in \mathbb{R}^{n-1}$.  Then 
\begin{align*}
    \mbf{c'} &= \Phi \mbf{c}\\
    c'(\mbf{x}') &= \lVert \mbf{x}' \rVert_{\ell_1 (\mbf{c'})} \ \forall \mbf{x}' \in \mathbb{R}^{n-1} 
\end{align*}

Furthermore, for any $z \in [0,1]$, let $\mbf{x} \in \Delta_n : c(\mbf{x}) < 1 - z$.  Then it follows that $\Phi \mbf{x}$ is in $\mathbb{R}^{n-1} : c'(\mbf{x}') < 1 - w$.

Recall that cost functions in the \MAL instance are convex and linearly separable as follows:
\begin{align*}
    f_t(\mbf{x}) = \sum_{a \in X} f_t^a(\mbf{x}^a)
\end{align*}
Next, again letting $\mbf{x} \in \Delta_n$, note that the $i$th term in $\mbf{x}$ is identical to the $(i-1)$th term in $\Phi \mbf{x}$ (excluding the first term in $\mbf{x}$).  Then we can construct cost functions in the \CFL instance as follows:
\begin{align*}
    f_t'(\mbf{x}') = \sum_{i \in [n-1]}  f_t^{i+1}(\mbf{x}^i)
\end{align*}

Under the mapping $\Phi$, note that it is straightforward to show that $f_t(\mbf{x}) = f_t'(\Phi \mbf{x})$ for any $\mbf{x} \in \Delta_n$.

Finally, consider the distances in the \MAL instance's weighted star metric, which can be expressed as a weighted $\ell_1$ norm defined by $\mbf{w}$, where the terms of $\mbf{w}$ correspond to the weighted edges of the star metric.  Recall that $\beta \coloneqq \max_{a', a} \lVert a' - a \rVert_{\ell_1 (\mbf{w})}$, i.e., the maximum distance between the \OFF point and any other point in the weighted star.

Then we define a corresponding distance metric in the \CFL instance, which is an $\ell_1$ norm weighted by $\mbf{w'} \in \mathbb{R}^{n-1}$, which is defined as follows:
\begin{align*}
    \mbf{w'}^i &= \mbf{w}^{i+1} + \mbf{w}^0.
\end{align*}
Note that $\mbf{w}^0$ is the edge weight associated with the \OFF point.  Then for any $(\mbf{x},\mbf{y}) \in \Delta_n$, it is straightforward to show the following:
\begin{align*}
    \lVert \mbf{x} - \mbf{y} \rVert_{\ell_1 (\mbf{w})} \le \lVert \Phi \mbf{x} - \Phi \mbf{y} \rVert_{\ell_1 (\mbf{w'})}
\end{align*}
This follows since for any $(\mbf{x},\mbf{y}) \in \Delta_n$ where $\mbf{x}^0 = 0$ and $\mbf{y}^0 = 0$ (i.e., allocations which do not allocate anything to the \OFF point), $\lVert \Phi \mbf{x} - \Phi \mbf{y} \rVert_{\ell_1 (\mbf{w'})} = \lVert \mbf{x} - \mbf{y} \rVert_{\ell_1 (\mbf{w})} + \lVert \mbf{x} - \mbf{y} \rVert_{\ell_1} \cdot \mbf{w}^0$.  

Conversely, if either $\mbf{x}$ or $\mbf{y}$ have $\mbf{x}^0 > 0$ or $\mbf{y}^0 > 0$, we have  $\lVert \mbf{x} - \mbf{y} \rVert_{\ell_1 (\mbf{w})} \le \lVert \Phi \mbf{x} - \Phi \mbf{y} \rVert_{\ell_1 (\mbf{w'})} \le \lVert \mbf{x} - \mbf{y} \rVert_{\ell_1 (\mbf{w})} + \lVert \mbf{x} - \mbf{y} \rVert_{\ell_1} \cdot \mbf{w}^0$.  Finally, supposing that (without loss of generality) $\mbf{x}$ has $\mbf{x}^0 = 1$, we have that $\lVert \mbf{x} -\mbf{y} \rVert_{\ell_1 (\mbf{w})} = \lVert \Phi \mbf{x} - \Phi \mbf{y} \rVert_{\ell_1 (\mbf{w'})}$.\\  Thus, $\lVert \Phi \mbf{x} - \Phi \mbf{y} \rVert_{\ell_1 (\mbf{w'})}$ upper bounds $\lVert \mbf{x} - \mbf{y} \rVert_{\ell_1 (\mbf{w})}$.  Furthermore, the constructed distance metric preserves $\beta$, i.e. given $(a', a) = \argmax_{a', a} \lVert a' - a \rVert_{\ell_1 (\mbf{w})}$, we have that $\lVert \Phi a' - \Phi a \rVert_{\ell_1 (\mbf{w'})} = \beta$.

Next, we show that the transformation $\Phi$ is bijective.  We define the affine map $\Phi^{-1} : \mathbb{R}^{n-1} \to \Delta_n$ as follows:  $\Phi^{-1}$ has $n$ rows and $n-1$ columns, where the first row is all $-1$, and the bottom $n$ rows are the $n \times n$ identity matrix.  Let $\mbf{b} \in \mathbb{R}^{n-1}$ denote the vector with $\mbf{b}^0 = 1$ and all other terms are zero, i.e., $\mbf{b}_i = 0 \ \forall i \geq 1$.

For any $\mbf{x}' \in \mathbb{R}^{n-1} : c'(\mbf{x}') \leq 1$, it is straightforward to show that $\Phi^{-1}\mbf{x}' + \mbf{b}$ is in $\Delta_n$, since by definition we have that $\sum_{i \in [n+1]} \left( \Phi^{-1}\mbf{x}' + \mbf{b} \right)_i = 1$.  Furthermore, by definition of $c'(\mbf{x}')$, we have that $c(\Phi^{-1}\mbf{x}' + \mbf{b}) = c'(\mbf{x}')$, because the $i$th term (excluding the first term) of $\Phi^{-1}\mbf{x}' + \mbf{b}$ is identical to the $(i-1)$th term of $\mbf{x}'$.  Similarly, by definition of $f_t'$, we have that  $f_t(\Phi^{-1}\mbf{x}' + \mbf{b}) = f_t'(\mbf{x}')$.

Finally, considering the distance metric, we have that for any $(\mbf{x}', \mbf{y}') \in \mathbb{R}^{n-1} : c'(\mbf{x}') \leq 1$:
\begin{align*}
    \lVert (\Phi^{-1}\mbf{x}' + \mbf{b}) - (\Phi^{-1}\mbf{y}' + \mbf{b}) \rVert_{\ell_1 (\mbf{w})} \le \lVert \mbf{x}' - \mbf{y}' \rVert_{\ell_1 (\mbf{w'})}.
\end{align*}
This follows by considering that for any $\mbf{x}'$, $\Phi^{-1}\mbf{x}' + \mbf{b}$ adds a dimension (corresponding to the \OFF point) and sets $\left(\Phi^{-1}\mbf{x}' + \mbf{b}\right) = 1 - \lVert \mbf{x}' \rVert_1$.  Then the distance between any two points which allocate a non-negative fraction to the \OFF point in $\Delta_n$ is $\leq$ the distance in $\mathbb{R}^{n-1}$ by definition of the weight vector $\mbf{w}'$, and the distance between e.g., the allocation fully in the \OFF point ($a'$) and any other allocation is exactly preserved.

Furthermore, note that if $\mbf{w}^0 = 0$ (i.e., the weight of the \OFF state in the weighted star metric is 0), $\Phi$ is a bijective isometry between $(\Delta_n, \lVert \cdot \rVert_{\ell_1 (\mbf{w})})$ and $(\mathbb{R}^{n-1}, \lVert \cdot \rVert_{\ell_1 (\mbf{w'})})$.
\end{proof}

The transformation defined by $\Phi$ in \sref{Lemma}{lem:simplexTransform} allows us to put decisions on the \CFL instance $(\mathbb{R}^{n-1}, \lVert \cdot \rVert_{\ell_1 (\mbf{w'})})$ in one-to-one correspondence with decisions in $(\Delta_n, \lVert \cdot \rVert_{\ell_1 (\mbf{w})})$.  

Below, we formalize this by proving a result stated in the main body (\sref{Proposition}{prop:CFL-MAL}) which states the following: Given an algorithm \ALG for \CFL, any performance bound on \ALG which assumes \OPT does not pay any switching cost will translate to an identical performance bound for \MAL whose parameters depend on the corresponding \CFL instance constructed according to \sref{Lemma}{lem:simplexTransform}.

\begin{proof}[Proof of \sref{Proposition}{prop:CFL-MAL}]
The cost of \ALG on the \CFL instance is an upper bound on the cost of the \ALG 's decisions mapped into the \MAL instance.  This follows since the cost functions are preserved exactly between the two instances, the long-term constraint function is preserved exactly, and the \CFL switching cost is by definition an upper bound on the \MAL switching cost.

If the \CFL performance bound assumes that \OPT does not pay any switching cost (e.g., as in \autoref{thm:alphaCompCFL}), lower bounding the cost of \OPT on the \CFL instance is equivalent to lower bounding the cost of \OPT on the \MAL instance, as the cost functions and constraint functions are preserved exactly.

Thus, we have that any such performance bound for \ALG on the \CFL instance constructed appropriately (as in \sref{Lemma}{lem:simplexTransform}) immediately gives an identical performance bound for the \MAL instance, yielding the result.
\end{proof}

By \sref{Lemma}{lem:simplexTransform}, we have that since \ALGone is $\alpha$-competitive for \CFL (\autoref{thm:alphaCompCFL}), \ALGone is $\alpha$-competitive for any \CFL instance constructed based on a \MAL instance.  Furthermore, by \sref{Proposition}{prop:CFL-MAL}, \ALGone is also $\alpha$-competitive on the underlying \MAL instance, where $\alpha$ is given by \eqref{eq:alpha}.
\end{proof}

\subsection{Proof of \autoref{thm:lowerboundCFL}} \label{apx:lowerboundCFL}

In this section, we prove \autoref{thm:lowerboundCFL}, which shows that $\alpha$ as given by \eqref{eq:alpha} is the best competitive ratio achievable for \CFL.

To show this lower bound, we first define a family of special adversaries, and then show that the competitive ratio for any deterministic algorithm is lower bounded under the instances provided by these adversaries.

Prior work has shown that difficult instances for online search problems with a minimization objective occur when inputs arrive at the algorithm in an decreasing order of cost~\cite{ElYaniv:01, Lorenz:08, SunZeynali:20, Lechowicz:23}.  For \CFL, we additionally consider how an adaptive adversary can essentially force an algorithm to incur a large switching cost in the worst-case.
We now formalize such a family of adversaries $\{ \mathcal{A}_y \}_{y \in [L, U]}$, where $\mathcal{A}_y$ is called a \textit{$y$-adversary}.

\begin{definition}[$y$-adversary for \CFL] \label{dfn:yadversary}

Let $w, m \in \mathbb{Z}$ be sufficiently large, and $\delta := \nicefrac{(U-L)}{w}$. 

Without loss of generality, let $k = \argmax_{i \in [d]} \mbf{w}_i$, where $\mbf{w}$ is the weight vector for $\lVert \cdot \rVert_{\ell_1 (\mbf{w})}$, and let $\beta = \max_{i\in [d]} \mbf{w}_i$.
For $y \in [L, U]$, an adaptive adversary $\mathcal{A}_y$ sequentially presents two types of cost functions $f_t(\cdot)$ to both $\ALG$ and $\OPT$.  

These types of cost functions are $\textbf{Up}(\mbf{x}) = U \mbf{1} \mbf{x}^\intercal$, and $\textbf{Down}^i (\mbf{x}) = \sum_{j \not = k}^d U \mbf{x}^j + (U - i \delta) \mbf{x}^k$.

The adversary sequentially presents cost functions from these two types in an alternating, ``continuously decreasing'' order.  Specifically, they start by presenting cost function $\textbf{Up}(\mbf{x})$, up to $m$ times.  

Then, they present $\textbf{Down}^1 (\mbf{x})$, which has linear cost coefficient $U$ in every direction except direction $k$, which has cost coefficient $(U - 1 \cdot \delta)$. $\textbf{Down}^1 (\mbf{x})$ is presented up to $m$ times.  If $\ALG$ ever ``accepts'' a cost function $\textbf{Down}^1 (\mbf{x})$ (i.e., if $\ALG$ makes a decision $\mbf{x}$ where $c(\mbf{x}) > 0$), the adaptive adversary immediately presents $\textbf{Up}(\mbf{x})$ starting in the next time step until either $\ALG$ moves to the origin (i.e. online decision $\mbf{x} = \mbf{0}$) or $\ALG$'s utilization $z = 1$.

The adversary continues alternating in this manner, presenting $\textbf{Down}^2 (\mbf{x})$ up to $m$ times, followed by $\textbf{Up}(\mbf{x})$ if $\ALG$ accepts anything, followed by $\textbf{Down}^3 (\mbf{x})$ up to $m$ times, and so on.  This continues until the adversary presents $\textbf{Down}^{w_y} (\mbf{x})$, where $y \coloneqq (U - w_y \delta)$, up to $m$ times.  After presenting $\textbf{Down}^{w_y} (\mbf{x})$, $\mathcal{A}_y$ will present $\textbf{Up}(\mbf{x})$ until either $\ALG$ moves to the origin or has utilization $z=1$.  Finally, the adversary presents exactly $m$ cost functions of the form $\sum_{j \not = k}^d U \mbf{x}^j + (y + \varepsilon) \mbf{x}^k$, followed by $m$ cost functions $\textbf{Up}(\mbf{x})$.

The mechanism of this adaptive adversary is designed to present ``good cost functions'' (i.e., $\textbf{Down}^i (\mbf{x})$) in a worst-case decreasing order, interrupted by blocks of ``bad cost functions'' $\textbf{Up}(\mbf{x})$ which force a large switching cost in the worst case.

\end{definition}

\newcommand{\cItem}[1]{\boxed{ \{ g(y) = #1y \} \times m }}

\newcommand{\aItem}[1]{ g(y) = #1y }

$\mathcal{A}_{U}$ is simply a stream of $m$ cost functions $U$, and the final cost functions in any $y$-adversary instance are always $\textbf{Up}(\mbf{x})$. %

\begin{proof}[Proof of \autoref{thm:lowerboundCFL}]

Let $g(y)$ denote a \textit{conversion function} $[L,U] \rightarrow [0,1]$, which fully describes the progress towards the long-term constraint (before the compulsory trade) of a deterministic $\ALG$ playing against adaptive adversary $\mathcal{A}_y$.  Note that for large $w$, the adaptive adversary $\mathcal{A}_{y-\delta}$ is equivalent to first playing $\mathcal{A}_y$ (besides the last two batches of cost functions), and then processing batches with cost functions $\textbf{Down}^{w_y + 1} (\mbf{x})$ and $\textbf{Up}(\mbf{x})$.  Since $\ALG$ is deterministic and the conversion is unidirectional (irrevocable), we must have that $g(y - \delta) \geq g(y)$, i.e. $g(y)$ is non-increasing in $[L, U]$.  Intuitively, the entire capacity should be satisfied if the minimum possible price is observed, i.e $g(L) = 1$.

Note that for $\varepsilon \to 0$, the optimal solution for adversary $\mathcal{A}_y$ is $\OPT(\mathcal{A}_y) = y + 2\beta/m$, and for $m$ sufficiently large, $\OPT(\mathcal{A}_y) \to y$.

Due to the adaptive nature of each $y$-adversary, any deterministic $\ALG$ incurs a switching cost proportional to $g(y)$, which gives the amount of utilization obtained by $\ALG$ before the end of $\mathcal{A}_y$'s sequence.  

Whenever $\ALG$ accepts some cost function with coefficient $U-i \delta$ in direction $k$, the adversary presents $\textbf{Up}(\mbf{x})$ starting in the next time step.  Any $\ALG$ which does not switch away immediately obtains a competitive ratio strictly worse than an algorithm which does switch away (if an algorithm accepts $c$ fraction of a good price and switches away immediately, the switching cost it will pay is $2\beta c$.  An algorithm may continue accepting $c$ fraction of coefficient $U$ in the subsequent time steps, but a sequence exists where this decision will take up too much utilization to recover when better cost functions are presented later.  In the extreme case, if an algorithm continues accepting $c$ fraction of these $U$ coefficients, it might fill its utilization and then $\OPT$ can accept a cost function which is arbitrarily better).

Since accepting any price by a factor of $c$ incurs a switching cost of $2 \beta c$, the switching cost paid by $\ALG$ on adversary $\mathcal{A}_y$ is $2\beta g(y)$.  We assume that $\ALG$ is notified of the compulsory trade, and does not incur a significant switching cost during the final batch.

Then the total cost incurred by an $\alpha^\star$-competitive online algorithm $\ALG$ on adversary $\mathcal{A}_y$ is $\ALG(\mathcal{A}_y) = g(\nicefrac{U}{\alpha^\star}) \nicefrac{U}{\alpha^\star} - \int^y_{\nicefrac{U}{\alpha^\star}} u d g(u) + 2\beta g(y) + (1 - g(y))U$, where $udg(u)$ is the cost of buying $dg(u)$ utilization at cost coefficient $u$, the last term is from the compulsory trade, and the second to last term is the switching cost incurred by $\ALG$.  Note that any deterministic $\ALG$ which makes conversions when the price is larger than $\nicefrac{U}{\alpha^\star}$ can be strictly improved by restricting conversions to prices~$\leq \nicefrac{U}{\alpha^\star}$.

For any $\alpha^\star$-competitive online algorithm, the corresponding conversion function $g(\cdot)$ must satisfy $\ALG(\mathcal{A}_y) \leq \alpha^\star \OPT(\mathcal{A}_y) = \alpha^\star y, \forall y \in [L, U]$.  This gives a necessary condition which the conversion function must satisfy as follows:
\[
\ALG(\mathcal{A}_y) = g(\nicefrac{U}{\alpha^\star}) \nicefrac{U}{\alpha^\star} - \int^y_{\nicefrac{U}{\alpha^\star}} u d g(u) + 2\beta g(y) + (1 - g(y))U \leq \alpha^\star y , \quad \forall y \in [L, U].
\]
By integral by parts, the above implies that the conversion function must satisfy $g(y) \geq \frac{U - \alpha^\star y}{U - y - 2\beta} - \frac{1}{U - y - 2\beta} \int_{\nicefrac{U}{\alpha^\star}}^y g(u) du$.  By Gr\"{o}nwall's Inequality \citep[Theorem 1, p. 356]{Mitrinovic:91}, we have that
\begin{align*}
g(y) & \geq \frac{U - \alpha^\star y}{U - y - 2\beta} - \frac{1}{U - y - 2\beta} \int_{\nicefrac{U}{\alpha^\star}}^y \frac{U - \alpha^\star u}{U - u - 2\beta} \cdot \exp\left( \int_u^y \frac{1}{U - r - 2\beta} dr \right) du \\
& \geq \frac{U - \alpha^\star y}{U - y - 2\beta} - \int_{\nicefrac{U}{\alpha^\star}}^y \frac{U - \alpha^\star u}{(U - u - 2\beta)^2} du \\
& \geq \frac{U - \alpha^\star y}{U - y - 2\beta} - \left[ \frac{U\alpha^\star - U - 2\beta \alpha^\star}{u + 2\beta - U} - \alpha^\star \ln \left( u + 2\beta -U \right) \right]_{\nicefrac{U}{\alpha^\star}}^y \\
& \geq \alpha^\star \ln \left( y + 2\beta -U \right) - \alpha^\star \ln \left( \nicefrac{U}{\alpha^\star} + 2\beta -U \right), \quad \forall y \in [L, U].
\end{align*}

$g(L) = 1$ by the problem definition -- we can combine this with the above constraint to give the following condition for an $\alpha^\star$-competitive online algorithm:
\[
\alpha^\star \ln \left( L + 2\beta -U \right) - \alpha^\star \ln \left( \nicefrac{U}{\alpha^\star} + 2\beta -U \right) \leq g(L) = 1.
\]
The optimal $\alpha^\star$ is obtained when the above inequality is binding, so solving for the value of $\alpha^\star$ which solves $\alpha^\star \ln \left( L + 2\beta -U \right) - \alpha^\star \ln \left( \nicefrac{U}{\alpha^\star} + 2\beta -U \right) = 1$ yields that the best competitive ratio for any $\ALG$ solving \CFL is $\alpha^\star \geq \left[ W \left( \frac{e^{\nicefrac{2\beta}{U}} ( \nicefrac{L}{U} + \nicefrac{2 \beta}{U} - 1) }{e} \right) - \frac{2\beta}{U} + 1 \right]^{-1}$.
\end{proof}

\subsection{Proof of \sref{Corollary}{cor:lowerboundMAL}} \label{apx:lowerboundMAL}

In this section, we prove \sref{Corollary}{cor:lowerboundMAL}, which shows that $\alpha$ as given by \eqref{eq:alpha} is the best competitive ratio achievable for \MAL.

To show this lower bound, we build off of the family of adversaries in \sref{Definition}{dfn:yadversary}, which are designed to force an algorithm to incur a large switching cost while satisfying the long-term constraint.  In \sref{Definition}{dfn:yadversaryMAL} we define this family of adversarial instances tailored for \MAL.

\begin{definition}[$y$-adversary for \MAL] \label{dfn:yadversaryMAL}
Let $w, m \in \mathbb{Z}$ be sufficiently large, and $\delta := \nicefrac{(U-L)}{w}$. 

Recall that $\mbf{w}$ denotes the vector of edge weights for each point in the weighted star metric $X$, and the \OFF point is defined (without loss of generality) as the point $a' \in X$ where $\mbf{c}^{a'} = 0$ and $f_t^{a'}(\mbf{x}^a) = 0 \forall t \in [T], \forall \mbf{x}^a \in [0,1]$.  We will assume that $\mbf{c}^{a} = 1 \ \forall a \in X : a \not = a'$.

Then we set $\mbf{w}^{a'} = 0$, i.e., the \OFF point is connected to the interior vertex of the weighted star with an edge of weight $0$.  Without loss of generality, we let $k = \argmax_{a \in [n]} \mbf{w}^a$ denote the largest edge weight of any other (non-\OFF) point in the metric. By definition, recall that $\beta = \mbf{w}^k$.

For $y \in [L, U]$, an adaptive adversary $\mathcal{A}_y$ sequentially presents two different sets of cost functions $f_t^a(\cdot)$ at each point in the metric space. 

These sets of cost functions are $\textbf{Up} = \{ f^a(x) = U \mbf{x}^a \ \ \forall a \in X \setminus \{ a' \} \}$, and $\textbf{Down}^i = \{ f^k(\mbf{x}^k) = (U - i \delta) \mbf{x}^k \} \cap \{ f^a(\mbf{x}^a) = U \mbf{x}^a \ \ \forall a \in X \setminus \{ a', k \} \}$.  Note that the adversary only ever presents cost functions with a coefficient $< U$ at the point $k$ which corresponds to the largest edge weight.

The adversary sequentially presents either of these two sets of cost functions in an alternating, ``continuously decreasing'' order.  Specifically, they start by presenting $\textbf{Up}$, up to $m$ times.  

Then, they present $\textbf{Down}$, which has cost coefficient $U$ in every point except point $k$, which has cost coefficient $(U - 1 \cdot \delta)$. $\textbf{Down}^1$ is presented up to $m$ times.  If $\ALG$ ever ``accepts'' a cost function in $\textbf{Down}^1$ (i.e., if $\ALG$ makes a decision $x$ where $c(x) > 0$), the adaptive adversary immediately presents $\textbf{Up}$ starting in the next time step until either $\ALG$ moves entirely to the \OFF point (i.e. online decision $x^{a'} = 1$) or $\ALG$'s utilization $z = 1$.

The adversary continues alternating in this manner, presenting $\textbf{Down}^2$ up to $m$ times, followed by $\textbf{Up}$ if $\ALG$ accepts anything, followed by $\textbf{Down}^3$ up to $m$ times, and so on.  This continues until the adversary presents $\textbf{Down}^{w_y}$, where $y = (U - w_y \delta)$, up to $m$ times.  After presenting $\textbf{Down}^{w_y}$, $\mathcal{A}_y$ will present $\textbf{Up}(x)$ until either $\ALG$ moves to the \OFF point or has utilization $z=1$.  Finally, the adversary presents the set of cost functions $\{ f^k(\mbf{x}^k) = (y + \varepsilon) \mbf{x}^k \} \cap \{ f^a(\mbf{x}^a) = U \mbf{x}^a \ \ \forall a \in X \setminus \{ a', k \} \}$ $m$ times, followed by $\textbf{Up}$ $m$ times.

The mechanism of this adaptive adversary is designed to present ``good cost functions'' (i.e., $\textbf{Down}^i$) in a worst-case decreasing order, interrupted by blocks of ``bad cost functions'' $\textbf{Up}$ which force a large switching cost in the worst case.

\end{definition}

As in \autoref{thm:lowerboundCFL}, $\mathcal{A}_{U}$ is simply a stream of $m$ $\textbf{Up}$ sets of cost functions, and the final cost functions in any $y$-adversary instance are always $\textbf{Up}$. 

\begin{proof}[Proof of \sref{Corollary}{cor:lowerboundMAL}]
As previously, we let $g(y)$ denote a \textit{conversion function} $[L,U] \rightarrow [0,1]$, which fully describes the progress towards the long-term constraint (before the compulsory trade) of a deterministic $\ALG$ playing against adaptive adversary $\mathcal{A}_y$.  %
Since $\ALG$ is deterministic and the conversion is unidirectional (irrevocable), %
$g(y)$ is non-increasing in $[L, U]$.  Intuitively, the entire long-term constraint should be satisfied if the minimum possible price is observed, i.e $g(L) = 1$.
For $\varepsilon \to 0$, the optimal solution for adversary $\mathcal{A}_y$ is $\OPT(\mathcal{A}_y) = y + 2\beta/m$, and for $m$ sufficiently large, $\OPT(\mathcal{A}_y) \to y$.

As in \autoref{thm:lowerboundCFL}, the adaptive nature of each $y$-adversary forces any deterministic $\ALG$ to incur a switching cost of $2\beta g(y)$ on adversary $\mathcal{A}_y$, and we assume that $\ALG$ does not incur a significant switching cost during the final batch (i.e., during the compulsory trade).

Then the total cost incurred by an $\alpha^\star$-competitive online algorithm $\ALG$ on adversary $\mathcal{A}_y$ is $\ALG(\mathcal{A}_y) = g(\nicefrac{U}{\alpha^\star}) \nicefrac{U}{\alpha^\star} - \int^y_{\nicefrac{U}{\alpha^\star}} u d g(u) + 2\beta g(y) + (1 - g(y))U$, where $udg(u)$ is the cost of buying $dg(u)$ utilization at cost coefficient $u$, the last term is from the compulsory trade, and the second to last term is the switching cost incurred by $\ALG$.  Note that this expression for the cost is exactly as defined in \autoref{thm:lowerboundCFL}.

Thus by \autoref{thm:lowerboundCFL}, for any $\alpha^\star$-competitive online algorithm, the conversion function $g(\cdot)$ must satisfy $\ALG(\mathcal{A}_y) \leq \alpha^\star \OPT(\mathcal{A}_y) = \alpha^\star y, \forall y \in [L, U]$.  Via integral by parts and Gr\"{o}nwall's Inequality \citep[Theorem 1, p. 356]{Mitrinovic:91}, we have the following condition on $g(y)$: 
\begin{align*}
g(y) & \geq \alpha^\star \ln \left( y + 2\beta -U \right) - \alpha^\star \ln \left( \nicefrac{U}{\alpha^\star} + 2\beta -U \right), \quad \forall y \in [L, U].
\end{align*}

$g(L) = 1$ by the problem definition -- combining this with the previous condition gives the following condition for an $\alpha^\star$-competitive online algorithm:
\[
\alpha^\star \ln \left( L + 2\beta -U \right) - \alpha^\star \ln \left( \nicefrac{U}{\alpha^\star} + 2\beta -U \right) \leq g(L) = 1.
\]
As in \autoref{thm:lowerboundCFL}, the optimal $\alpha^\star$ is obtained when the above inequality is binding, yielding that the best competitive ratio for any $\ALG$ solving \MAL is $\alpha^\star \geq \left[ W \left( \frac{e^{\nicefrac{2\beta}{U}} ( \nicefrac{L}{U} + \nicefrac{2 \beta}{U} - 1) }{e} \right) - \frac{2\beta}{U} + 1 \right]^{-1}$.
\end{proof}

\section{Proofs for \autoref{sec:clip} (Learning-Augmentation)} \label{apx:clip}
\subsection{Proof of \sref{Lemma}{lem:baseline}} \label{apx:baseline}

In this section, we prove \sref{Lemma}{lem:baseline}, which shows that the baseline fixed-ratio combination algorithm (\BL) is $(1+\epsilon)$-consistent and $\left( \frac{\nicefrac{(U + 2 \beta)}{L} (\alpha - 1 - \epsilon) + \alpha \epsilon}{( \alpha - 1) } \right)$-robust for \CFL, given any $\epsilon \in [0, \alpha-1]$ and where $\alpha$ is as defined in \eqref{eq:alpha}.  Recall that \sref{Lemma}{lem:baseline} specifies \ALGone as the ``robust algorithm'' to use for the following analysis.

\begin{proof}[Proof of \sref{Lemma}{lem:baseline}]
Under the assumption that \ADV satisfies the long-term constraint, (i.e., that $\sum_{t=1}^{T} c(\mbf{a}_t) \ge 1$), we first observe that the online solution of \BL must also satisfy the long-term constraint.  

Under the assumptions of \CFL, note that $c(\mbf{x})$ is linear (i.e., a weighted $\ell_1$ norm with weight vector $\mbf{c}$).  By definition, denoting the decisions of \ALGone by $\tilde{\mbf{x}}_t$, we know that $\sum_{t=1}^{T} c(\tilde{\mbf{x}}_t) \ge 1$.

Thus, we have the following:
\begin{align*}
\sum_{t=1}^{T} c(\mbf{x}_t) = \sum_{t=1}^{T} c\left( \lambda \mbf{a}_t + (1 - \lambda) \tilde{\mbf{x}}_t \right) = \lambda \sum_{t=1}^{T} c ( \mbf{a}_t ) + (1 - \lambda) \sum_{t=1}^{T} c \left( \tilde{\mbf{x}}_t \right) \geq \lambda + (1 - \lambda) = 1.
\end{align*}
Let $\mathcal{I} \in \Omega$ be an arbitrary valid \CFL sequence.
We denote the \textit{hitting} and \textit{switching} costs of the robust advice by $\ALGone_{\text{hitting}}$ and $\ALGone_{\text{switch}}$, respectively.  Likewise, the hitting and switching cost of the black-box advice \ADV is denoted by $\ADV_{\text{hitting}}$ and $\ADV_{\text{switch}}$.

The total cost of \BL is upper bounded by the following:
\begin{align*}
    \BL(\mathcal{I}) &= \sum_{t=1}^{T} f_t(\mbf{x}_t) + \sum_{t=1}^{T+1} \lVert \mbf{x}_t - \mbf{x}_{t-1} \rVert_{\ell_1 (\mbf{w})}, \\
    &= \sum_{t=1}^{T} f_t \left( \lambda \mbf{a}_t + (1 - \lambda) \tilde{\mbf{x}}_t \right) + \sum_{t=1}^{T+1} \lVert \lambda \mbf{a}_t + (1 - \lambda) \tilde{\mbf{x}}_t - \lambda \mbf{a}_{t-1} - (1 - \lambda) \tilde{\mbf{x}}_{t-1}  \rVert_{\ell_1 (\mbf{w})}, \\
    &\leq \lambda \sum_{t=1}^{T} f_t(\mbf{a}_t) + (1 - \lambda) \sum_{t=1}^{T} f_t( \tilde{\mbf{x}}_t )  + \sum_{t=1}^{T+1} \lVert \lambda \mbf{a}_t - \lambda \mbf{a}_{t-1} \rVert_{\ell_1 (\mbf{w})} + \sum_{t=1}^{T+1} \lVert (1 - \lambda) \tilde{\mbf{x}}_t - (1 - \lambda) \tilde{\mbf{x}}_{t-1} \rVert_{\ell_1 (\mbf{w})},  \\
    &\leq \lambda \ADV_{\text{hitting}}(\mathcal{I}) + (1 - \lambda) \ALGone_{\text{hitting}}(\mathcal{I}) + \lambda \sum_{t=1}^{T+1} \lVert \mbf{a}_t - \mbf{a}_{t-1} \rVert_{\ell_1 (\mbf{w})} + (1 - \lambda) \sum_{t=1}^{T+1} \lVert \tilde{\mbf{x}}_t - \tilde{\mbf{x}}_{t-1} \rVert_{\ell_1 (\mbf{w})},  \\
    &\leq \lambda \ADV_{\text{hitting}}(\mathcal{I}) + (1 - \lambda) \ALGone_{\text{hitting}}(\mathcal{I}) + \lambda \ADV_{\text{switch}}(\mathcal{I}) + (1 - \lambda) \ALGone_{\text{switch}}(\mathcal{I}),  \\
    &\leq \lambda \ADV(\mathcal{I}) + (1 - \lambda) \ALGone(\mathcal{I}).
\end{align*}

Since $\ALGone \leq \alpha \cdot \OPT \leq \alpha \cdot \ADV$, this gives the following:
\begin{align}
\BL(\mathcal{I}) & \leq \lambda \ADV(\mathcal{I}) + (1 - \lambda) \alpha \ADV(\mathcal{I}),\\
\BL(\mathcal{I}) & \leq (\lambda + (1 - \lambda) \alpha) \cdot \ADV(\mathcal{I})\\ \BL(\mathcal{I}) &\leq (1 + \epsilon) \cdot \ADV(\mathcal{I}). \label{eq:min-const}
\end{align}

Furthermore, since $\ADV \leq U + 2\beta \leq \frac{\OPT}{\nicefrac{L}{(U+2\beta)}}$, we have:
\begin{align}
\BL(\mathcal{I}) & \leq \lambda \frac{\OPT(\mathcal{I})}{\nicefrac{L}{(U+2\beta)}} + (1 - \lambda) \alpha \OPT(\mathcal{I}),\\
\BL(\mathcal{I}) & \leq \left[ \frac{\lambda (U + 2\beta)}{L} + (1 - \lambda) \alpha \right] \cdot \OPT(\mathcal{I}),\\ 
\BL(\mathcal{I}) &\leq \left( \frac{\nicefrac{(U + 2 \beta)}{L} (\alpha - 1 - \epsilon) + \alpha \epsilon}{( \alpha - 1) } \right) \cdot \OPT(\mathcal{I}). \label{eq:min-rob}
\end{align}

By combining \eqref{eq:min-const} and \eqref{eq:min-rob}, we conclude that \BL is $(1+\epsilon)$-consistent with respect to black-box advice \ADV, and $\left( \frac{\nicefrac{(U + 2 \beta)}{L} (\alpha - 1 - \epsilon) + \alpha \epsilon}{( \alpha - 1) } \right)$-robust.
\end{proof}

\subsection{Proof of \autoref{thm:constrobCLIP}} \label{apx:constrobCLIP}

In this section, we prove \autoref{thm:constrobCLIP}, which shows that \CLIP is $(1+\epsilon)$-consistent and $\gamma^\epsilon$-robust for \CFL, where $\gamma^\epsilon$ is defined as the solution to the following (as in \eqref{eq:gamma}):
\[
\gamma^\epsilon = \epsilon + \frac{U}{L} - \frac{\gamma^\epsilon}{L} (U-L) \ln \left( \frac{U - L - 2\beta}{U - \nicefrac{U}{\gamma^\epsilon} - 2\beta} \right).
\]
\begin{proof}[Proof of \autoref{thm:constrobCLIP}]
We show the above result by separately considering consistency (the competitive ratio when advice is correct) and robustness (the competitive ratio when advice is not correct) in turn.

Recall that the black-box advice \ADV is denoted by a decision $\mbf{a}_t$ at each time $t$. %
Throughout the following proof, we use shorthand notation $\CLIP_t$ to denote the cost of $\CLIP$ up to time $t$, and $\ADV_t$ to denote the cost of $\ADV$ up to time $t$.
We start with the following lemma to prove consistency.

\begin{lemma} \label{lem:clipconst}
    \CLIP is $(1+\epsilon)$-consistent.
\end{lemma}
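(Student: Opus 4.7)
The plan is to prove $(1+\epsilon)$-consistency by establishing, inductively, that the constrained pseudo-cost minimization in \eqref{eq:pseudocostCLIP}--\eqref{eq:const-constraint} is always feasible, and then showing that the consistency constraint itself, evaluated at the final step, directly implies $\CLIP(\mathcal{I}) \le (1+\epsilon)\ADV(\mathcal{I})$. The natural candidate for demonstrating feasibility at each step is $\mathbf{x}_t = \mathbf{a}_t$, i.e., copying the advice. Once feasibility is established, CLIP's actual choice $\mathbf{x}_t$ must also satisfy \eqref{eq:const-constraint}, which is precisely a ``forward-looking'' bound relating $\CLIP_t$ to $\ADV_t$ plus worst-case completion terms.

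For the inductive feasibility argument, I would first handle the base case $t=1$: with all quantities initialized to $0$ and $\mathbf{x}_0 = \mathbf{a}_0 = \mathbf{0}$, substituting $\mathbf{x} = \mathbf{a}_1$ makes the LHS of \eqref{eq:const-constraint} equal to $f_1(\mathbf{a}_1) + 2\|\mathbf{a}_1\|_{\ell_1(\mathbf{w})} + (1-c(\mathbf{a}_1))L$, which the RHS majorizes since $\epsilon \ge 0$. For the inductive step, I would take as hypothesis the consistency constraint satisfied by CLIP's actual choice at time $t$, which after using $\CLIP_t = \CLIP_{t-1} + f_t(\mathbf{x}_t) + \|\mathbf{x}_t - \mathbf{x}_{t-1}\|_{\ell_1(\mathbf{w})}$ and $z^{(t)} = z^{(t-1)} + c(\mathbf{x}_t)$ can be rewritten as
\begin{align*}
\CLIP_t + \|\mathbf{x}_t - \mathbf{a}_t\|_{\ell_1(\mathbf{w})} + \|\mathbf{a}_t\|_{\ell_1(\mathbf{w})} + (1-z^{(t)})L + \max(A^{(t)}-z^{(t)},0)(U-L) \le (1+\epsilon)\bigl[\ADV_t + \|\mathbf{a}_t\|_{\ell_1(\mathbf{w})} + (1-A^{(t)})L\bigr].
\end{align*}
Substituting the candidate $\mathbf{x} = \mathbf{a}_{t+1}$ at time $t+1$ (noting $A^{(t+1)} - z^{(t)} - c(\mathbf{a}_{t+1}) = A^{(t)} - z^{(t)}$ and that the $L$-terms on both sides telescope by $c(\mathbf{a}_{t+1})$), the required inequality collapses, after cancellations, to
\begin{align*}
\|\mathbf{a}_t\|_{\ell_1(\mathbf{w})} + \|\mathbf{a}_{t+1}-\mathbf{x}_t\|_{\ell_1(\mathbf{w})} - \|\mathbf{x}_t - \mathbf{a}_t\|_{\ell_1(\mathbf{w})} + c(\mathbf{a}_{t+1}) L \le f_{t+1}(\mathbf{a}_{t+1}) + \|\mathbf{a}_{t+1}-\mathbf{a}_t\|_{\ell_1(\mathbf{w})} + \|\mathbf{a}_{t+1}\|_{\ell_1(\mathbf{w})}.
\end{align*}
Two applications of the triangle inequality ($\|\mathbf{a}_{t+1}-\mathbf{x}_t\| \le \|\mathbf{a}_{t+1}-\mathbf{a}_t\| + \|\mathbf{a}_t - \mathbf{x}_t\|$ and $\|\mathbf{a}_t\| \le \|\mathbf{a}_{t+1}-\mathbf{a}_t\| + \|\mathbf{a}_{t+1}\|$) reduce this to $L\,c(\mathbf{a}_{t+1}) \le f_{t+1}(\mathbf{a}_{t+1})$, which is immediate from the bounded-gradient assumption $L \le [\nabla f_t]^i / \mathbf{c}^i$ together with $f_{t+1}(\mathbf{0}) = 0$ and $c(\mathbf{a}_{t+1}) = \sum_i \mathbf{c}^i \mathbf{a}_{t+1}^i$.

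Finally, to convert the invariant into the consistency bound, I would consider two termination cases. If CLIP reaches $z^{(T')} = 1$ for some $T' \le T$, then at that step $(1-z^{(T')})L = 0$ and $\max(A^{(T')}-z^{(T')},0)(U-L) = 0$, so the invariant yields $\CLIP_{T'} + \|\mathbf{x}_{T'} - \mathbf{a}_{T'}\|_{\ell_1(\mathbf{w})} + \|\mathbf{a}_{T'}\|_{\ell_1(\mathbf{w})} \le (1+\epsilon)[\ADV_{T'} + \|\mathbf{a}_{T'}\|_{\ell_1(\mathbf{w})} + (1-A^{(T')})L]$; triangle inequality on $\|\mathbf{x}_{T'}\|_{\ell_1(\mathbf{w})} \le \|\mathbf{x}_{T'}-\mathbf{a}_{T'}\|_{\ell_1(\mathbf{w})} + \|\mathbf{a}_{T'}\|_{\ell_1(\mathbf{w})}$ accounts for the mandatory switch-off cost, and the residual $(1+\epsilon)(1-A^{(T')})L$ on the right bounds the portion of ADV's remaining hitting cost at rate $\ge L$ plus its own switch-off. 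If instead the sequence ends with $z^{(T)} < 1$ and a compulsory trade ensues, the key observation is that the terms $(1-z^{(T)})L + \max(A^{(T)}-z^{(T)},0)(U-L)$ in the invariant already upper bound the compulsory-trade cost: in the behind-ADV case this equals $(1-A^{(T)})L + (A^{(T)}-z^{(T)})U$, which majorizes the $(1-z^{(T)})U$ cost of finishing at worst rate while matching ADV's residual at rate $L$. Combining the invariant with the compulsory-trade cost bound and the triangle-inequality argument for the switch-off then delivers $\CLIP(\mathcal{I}) \le (1+\epsilon)\ADV(\mathcal{I})$.

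The main obstacle will be the inductive step --- specifically, verifying that after the substantial cancellations triggered by choosing $\mathbf{x} = \mathbf{a}_{t+1}$, the residual inequality is exactly of the form $L\,c(\mathbf{a}_{t+1}) \le f_{t+1}(\mathbf{a}_{t+1})$ and no cross terms spoil the telescoping. Getting the signs right on $\max(A^{(t)}-z^{(t)},0)$ in both cases ($\text{CLIP ahead}$ versus $\text{CLIP behind}$) and ensuring the compulsory-trade bound in the final step matches the $(U-L)$ penalty already hedged inside \eqref{eq:const-constraint} are the bookkeeping-heavy parts of the argument.
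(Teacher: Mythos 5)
Your overall strategy---proving feasibility of the constrained minimization inductively via the candidate $\mathbf{x}_t = \mathbf{a}_t$, then reading off $(1+\epsilon)$-consistency from the constraint at the final step and resolving the compulsory trade---matches the paper's argument and in fact fills a gap in it: the paper asserts feasibility only informally in the algorithm discussion (``this ensures that there is always a feasible setting of $\mathbf{x}_t$'') and never proves it inside the lemma. Your two termination cases align with the paper's ``underprovisioned'' and ``overprovisioned'' cases, and the way you use the hedging terms $(1-z^{(t)})L + \max(A^{(t)}-z^{(t)},0)(U-L)$ to majorize the compulsory-trade cost is exactly what the paper does.

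However, there is a bookkeeping error in your inductive step that you should fix. Write the invariant at time $t$ as $\mathcal{L}_t \le \mathcal{R}_t$ and the constraint at $t+1$ with $\mathbf{x}=\mathbf{a}_{t+1}$ as $\mathcal{L}_t + \Delta_L \le \mathcal{R}_t + (1+\epsilon)\Delta_R$, where
\begin{align*}
\Delta_L &= f_{t+1}(\mathbf{a}_{t+1}) + \lVert \mathbf{a}_{t+1}-\mathbf{x}_t\rVert - \lVert\mathbf{x}_t-\mathbf{a}_t\rVert + \lVert\mathbf{a}_{t+1}\rVert - \lVert\mathbf{a}_t\rVert - c(\mathbf{a}_{t+1})L,\\
\Delta_R &= f_{t+1}(\mathbf{a}_{t+1}) + \lVert \mathbf{a}_{t+1}-\mathbf{a}_t\rVert + \lVert\mathbf{a}_{t+1}\rVert - \lVert\mathbf{a}_t\rVert - c(\mathbf{a}_{t+1})L.
\end{align*}
Given the invariant, feasibility follows from $\Delta_L \le (1+\epsilon)\Delta_R$. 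The clean way to get this is to show \emph{two} separate facts: $\Delta_L \le \Delta_R$, which after cancellation is a single triangle inequality $\lVert\mathbf{a}_{t+1}-\mathbf{x}_t\rVert - \lVert\mathbf{x}_t-\mathbf{a}_t\rVert \le \lVert\mathbf{a}_{t+1}-\mathbf{a}_t\rVert$, and $\Delta_R \ge 0$, which follows from $\lVert\mathbf{a}_{t+1}-\mathbf{a}_t\rVert \ge \lVert\mathbf{a}_t\rVert - \lVert\mathbf{a}_{t+1}\rVert$ together with $f_{t+1}(\mathbf{a}_{t+1}) \ge L\,c(\mathbf{a}_{t+1})$. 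Your single ``collapsed'' residual inequality conflates these, and the two triangle inequalities you list do \emph{not} reduce it to $L\,c(\mathbf{a}_{t+1}) \le f_{t+1}(\mathbf{a}_{t+1})$: each application of the triangle inequality consumes the single $\lVert\mathbf{a}_{t+1}-\mathbf{a}_t\rVert$ on your right-hand side, so after both applications you are left with a stray $\lVert\mathbf{a}_{t+1}-\mathbf{a}_t\rVert$ on the left that does not disappear. The cross term you worried about really is there in your formulation. Splitting into $\Delta_L \le \Delta_R$ and $\Delta_R \ge 0$ removes it, and the rest of your argument then goes through.
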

\begin{proof}
First, we note that the constrained optimization enforces that the possible cost so far plus a compulsory term is always within $(1 + \epsilon)$ of the advice.  Formally, if time step $j \in [T]$ denotes the time step marking the start of the compulsory trade, we have that the constraint given by \eqref{eq:const-constraint} holds for every time step $t \in [j]$.

Thus, to show $(1+\epsilon)$ consistency, we must resolve the cost during the \textit{compulsory trade} and show that the final cumulative cost of \CLIP is upper bounded by $(1+\epsilon) \ADV$.

Let $\mathcal{I} \in \Omega$ be an arbitrary valid \CFL sequence.
If the compulsory trade begins at time step $j < T$, both $\CLIP$ and $\ADV$ must greedily fill their remaining utilization during the last $m$ time steps $[j, T]$.  This is assumed to be feasible, and the switching cost is assumed to be negligible as long as $m$ is sufficiently large.

Let $(1 - z^{(j-1)})$ denote the remaining long-term constraint that must be satisfied by $\CLIP$ at the final time step, and let $(1 - A^{(j-1)})$ denote the remaining long-term constraint to be satisfied by $\ADV$.

We consider the following two cases, which correspond to the cases where $\CLIP$ has \textit{under}- and \textit{over}- provisioned with respect to \ADV, respectively.

\paragraph{Case 1: $\CLIP(\mathcal{I})$ has ``underprovisioned'' ($(1 - z^{(j-1)}) > (1 - A^{(j-1)})$).} 

In this case, \CLIP must satisfy \textit{more} of the long-term constraint during the compulsory trade compared to \ADV.

From the previous time step, we know that the following constraint holds: $\CLIP_{j-1} + \lVert \mbf{x}_{j-1} - \mbf{a}_{j-1} \rVert_{\ell_1 (\mbf{w})} + \lVert \mbf{a}_{j-1} \rVert_{\ell_1 (\mbf{w})} + (1 - A^{(j-1)} ) L + (A^{(j-1)} - z^{(j-1)})U \le (1+ \epsilon) \left[ \ADV_{j-1} + \lVert \mbf{a}_{j-1} \rVert_{\ell_1 (\mbf{w})} + (1 - A^{(j-1)} ) L \right]$.

Let $\{ \mbf{x}_t \}_{t \in [j, T]}$ and $\{ \mbf{a}_t \}_{t \in [j, T]}$ denote the decisions made by $\CLIP$ and $\ADV$ during the compulsory trade, respectively.  By definition, we have that $\sum_{t=j}^T c(\mbf{x}_t) = (1 - z^{(j-1)})$ and $\sum_{t=j}^T c(\mbf{a}_t)  = (1 - A^{(j-1)})$.

Considering $\{ f_t(\cdot) \}_{t \in [j, T]}$, we know that by definition $\sum_{t=j}^T f_t(\mbf{a}_t) \geq L \sum_{t=j}^T c(\mbf{a}_t)$, and by convex assumptions on the cost functions, $\sum_{t=j}^T f_t(\mbf{x}_t) \leq \sum_{t=j}^T f_t(\mbf{a}_t) + U (\sum_{t=j}^T c(\mbf{x}_t)- \sum_{t=j}^T c(\mbf{a}_t))$.

Note that the worst case for $\CLIP$ occurs when $\sum_{t=j}^T f_t(\mbf{a}_t) = L \sum_{t=j}^T c(\mbf{a}_t)$, as $\ADV$ is able to satisfy the rest of the long-term constraint at the best possible price.  

By the constraint in the previous time step, we have the following:
\begin{align*}
\CLIP_{j-1} + \lVert \mbf{a}_{j-1} \rVert_{\ell_1 (\mbf{w})} + (1 - A^{(j-1)} ) L & + (A^{(j-1)} - z^{(j-1)})U& \\
& \leq (1+\epsilon) [ \ADV_{j-1} + \lVert \mbf{a}_{j-1} \rVert_{\ell_1 (\mbf{w})} + (1 - A^{(j-1)} ) L  ], \\
\CLIP_{j-1} + \lVert \mbf{a}_{j-1} \rVert_{\ell_1 (\mbf{w})} + L  \sum_{t=j}^T c(\mbf{a}_t) & + U \left( \sum_{t=j}^T c(\mbf{x}_t)- \sum_{t=j}^T c(\mbf{a}_t) \right)  \\
& \leq (1+\epsilon) \left[ \ADV_{j-1} + \lVert \mbf{a}_{j-1} \rVert_{\ell_1 (\mbf{w})} + L \sum_{t=j}^T c(\mbf{a}_t) \right],  \\
\CLIP(\mathcal{I}) &\leq (1+\epsilon) \left[ \ADV(\mathcal{I}) \right].
\end{align*}

\paragraph{Case 2: $\CLIP(\mathcal{I})$ has ``overprovisioned'' ($(1 - z^{(j-1)}) \leq (1 - A^{(j-1)})$).}

In this case, \CLIP must satisfy \textit{less} of the long-term constraint during the compulsory trade compared to \ADV.

From the previous time step, we know that the following constraint holds: $\CLIP_{j-1} + \lVert \mbf{x}_{j-1} - \mbf{a}_{j-1} \rVert_{\ell_1 (\mbf{w})} + \lVert \mbf{a}_{j-1} \rVert_{\ell_1 (\mbf{w})} + (1 - A^{(j-1)} ) L + (A^{(j-1)} - z^{(j-1)})U \le (1+ \epsilon) \left[ \ADV_{j-1} + \lVert \mbf{a}_{j-1} \rVert_{\ell_1 (\mbf{w})} + (1 - A^{(j-1)} ) L \right]$.

Let $\{ \mbf{x}_t \}_{t \in [j, T]}$ and $\{ \mbf{a}_t \}_{t \in [j, T]}$ denote the decisions made by $\CLIP$ and $\ADV$ during the compulsory trade, respectively. By definition, we have that $\sum_{t=j}^T c(\mbf{x}_t) = (1 - z^{(j-1)})$ and $\sum_{t=j}^T c(\mbf{a}_t)  = (1 - A^{(j-1)})$.

Considering $\{ f_t(\cdot) \}_{t \in [j, T]}$, we know that by definition, $\sum_{t=j}^T f_t(\mbf{x}_t) \geq L \sum_{t=j}^T c(\mbf{x}_t)$, and $\sum_{t=j}^T f_t(\mbf{a}_t) \geq L \sum_{t=j}^T c(\mbf{a}_t)$.
By convexity, because $\sum_{t=j}^T c(\mbf{x}_t) \leq \sum_{t=j}^T c(\mbf{a}_t)$, $\sum_{t=j}^T f_t(\mbf{x}_t) \leq \sum_{t=j}^T f_t(\mbf{a}_t)$.

By the constraint in the previous time step, we have:
\begin{align*}
\frac{\CLIP_{j-1} + \lVert \mbf{x}_{j-1} - \mbf{a}_{j-1} \rVert_{\ell_1 (\mbf{w})} + \lVert \mbf{a}_{j-1} \rVert_{\ell_1 (\mbf{w})} + (1 - z^{(j-1)}) L}{\ADV_{j-1} + \lVert \mbf{a}_{j-1} \rVert_{\ell_1 (\mbf{w})} + (1 - A^{(j-1)} ) L } = \hspace{10em} &\\
\frac{\CLIP_{j-1} + \lVert \mbf{x}_{j-1} - \mbf{a}_{j-1} \rVert_{\ell_1 (\mbf{w})} + \lVert \mbf{a}_{j-1} \rVert_{\ell_1 (\mbf{w})} + L \sum_{t=j}^T c(\mbf{x}_t)}{\ADV_{j-1} +\lVert \mbf{a}_{j-1} \rVert_{\ell_1 (\mbf{w})} + L \sum_{t=j}^T c(\mbf{a}_t)} &\leq (1 + \epsilon).
\end{align*}

Let $y = \sum_{t=j}^T f_t(\mbf{x}_t) - L \sum_{t=j}^T c(\mbf{x}_t)$, and let $y' = \sum_{t=j}^T f_t(\mbf{a}_t) - L \sum_{t=j}^T c(\mbf{a}_t)$.  By definition, $y \geq 0$ and $y' \geq 0$.

Note that $\CLIP_{j-1} + \lVert \mbf{x}_{j-1} - \mbf{a}_{j-1} \rVert_{\ell_1 (\mbf{w})} + \lVert \mbf{a}_{j-1} \rVert_{\ell_1 (\mbf{w})} + (1 - z^{(j-1)}) L + y \geq \CLIP(\mathcal{I})$ and $\ADV_{j-1} + \lVert \mbf{a}_{j-1} \rVert_{\ell_1 (\mbf{w})} + L \sum_{t=j}^T c(\mbf{a}_t) + y' = \ADV(\mathcal{I})$.

Furthermore, by definition and convexity of the cost functions $f_t( \cdot )$, we have that $y \leq y'$.

Combined with the constraint from the previous time step, we have the following bound:
\begin{align*}
\frac{\CLIP(\mathcal{I})}{\ADV(\mathcal{I})} \leq \frac{\CLIP_{j-1} + \lVert \mbf{x}_{j-1} - \mbf{a}_{j-1} \rVert_{\ell_1 (\mbf{w})} + \lVert \mbf{a}_{j-1} \rVert_{\ell_1 (\mbf{w})} + (1 - z^{(j-1)}) L + y}{\ADV_{j-1} + \lVert \mbf{a}_{j-1} \rVert_{\ell_1 (\mbf{w})} + (1 - A^{(j-1)} ) L + y'} & \\
\leq \frac{\CLIP_{j-1} + \lVert \mbf{a}_{j-1} \rVert_{\ell_1 (\mbf{w})} + L \sum_{t=j}^T c(\mbf{x}_t)}{\ADV_{j-1} + \lVert \mbf{a}_{j-1} \rVert_{\ell_1 (\mbf{w})} + L \sum_{t=j}^T c(\mbf{a}_t)}  &\leq (1 + \epsilon).
\end{align*}

Thus, by combining the bounds in each of the above two cases, the result follows, and we conclude that $\CLIP$ is $(1+\epsilon)$-consistent with accurate advice.
\end{proof}

Having proved the consistency of \CLIP, we proceed to show robustness in the next lemma.

\begin{lemma} \label{lem:cliprob}
    \CLIP is $\gamma^\epsilon$-robust, where $\gamma^\epsilon$ is as defined in \eqref{eq:gamma}.
\end{lemma}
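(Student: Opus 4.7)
The plan is to mirror the proof of \autoref{thm:alphaCompCFL}, with $\phi^\epsilon$ replacing $\phi$, $\gamma^\epsilon$ replacing $\alpha$, and the pseudo-utilization $p^{(t)}$ replacing the actual utilization $z^{(t)}$. The update rule $p^{(t)} = p^{(t-1)} + \min(c(\bar{\mbf{x}}_t), c(\mbf{x}_t))$ is designed precisely so that $p^{(t)}$ evolves as if driven by an \ALGone-style unconstrained pseudo-cost minimization with parameter $\gamma^\epsilon$, regardless of how the consistency constraint distorts the actual decisions $\mbf{x}_t$.

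The first substep is an analogue of Lemma~\ref{lem:opt-lb}: $\OPT(\mathcal{I}) \geq \phi^\epsilon(p^{(j)}) - \beta$, where $j$ marks the start of the compulsory trade. Following the same contradiction strategy, if $\OPT$ were strictly smaller, some cost function $f_m$ would have directional gradient strictly below $\phi^\epsilon(p^{(j)}) - \beta$; placing $f_m$ at a time where $p^{(m-1)} = p^{(j)}$ (as the worst-case decreasing order), the unconstrained pseudo-cost minimization in \CLIP yields $c(\bar{\mbf{x}}_m) > 0$ by the argument of Lemma~\ref{lem:opt-lb} applied to $\bar{\mbf{x}}_m$. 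To complete the contradiction I must further argue $c(\mbf{x}_m) > 0$. This is done by showing that accepting a cost function with gradient below the threshold can only relax the LHS of the consistency constraint, since the marginal decreases in $(1 - z^{(t-1)} - c(\mbf{x}))L$ and $\max(A^{(t)} - z^{(t-1)} - c(\mbf{x}), 0)(U-L)$ outweigh the marginal increases in $f_t(\mbf{x}) + \lVert \mbf{x} - \mbf{x}_{t-1}\rVert_{\ell_1(\mbf{w})} + \lVert \mbf{x} - \mbf{a}_t\rVert_{\ell_1(\mbf{w})}$, so the feasible set for the constrained problem contains points with $c(\mbf{x}) > 0$ whose pseudo-cost is strictly less than at $\mbf{x} = \mbf{0}$. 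Then $\min(c(\bar{\mbf{x}}_m), c(\mbf{x}_m)) > 0$, contradicting the assumed maximality of $p^{(j)}$. The second substep is an upper bound $\CLIP(\mathcal{I}) \leq \int_0^{p^{(j)}} \phi^\epsilon(u) du + \beta p^{(j)} + (1 - p^{(j)}) U$, analogous to Lemma~\ref{lem:alg-ub}: for each step where $c(\mbf{x}_t) \leq c(\bar{\mbf{x}}_t)$, the standard pseudo-cost inequality carries over with $p$ advancing by $c(\mbf{x}_t)$, while for steps where $c(\mbf{x}_t) > c(\bar{\mbf{x}}_t)$, the extra utilization is incurred at per-unit cost bounded by $\phi^\epsilon(p^{(t-1)}) - \beta \leq U$, which can be absorbed together with the actual compulsory trade $(1 - z^{(j)})U$ into the residual $(1 - p^{(j)}) U$ term. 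Dividing the two bounds and performing the same algebraic simplification as at the end of \autoref{thm:alphaCompCFL} using the defining equation \eqref{eq:gamma} yields the ratio $\gamma^\epsilon$.

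The main obstacle will be the case analysis needed in the OPT lower bound when the consistency constraint distorts \CLIP's decision away from the unconstrained pseudo-cost minimizer; specifically, verifying the ``relaxation'' claim, that accepting a low-gradient cost function never tightens the consistency constraint enough to force $c(\mbf{x}_m) = 0$ when $c(\bar{\mbf{x}}_m) > 0$. Along the way, I expect to need the assumption $\beta \in [0, \nicefrac{U-L}{2})$ and the inequality $\phi^\epsilon(p) + \beta \leq U$ for all $p \in [0,1]$ (which follows from $\gamma^\epsilon \leq \nicefrac{U}{L}$) to ensure the constraint-relaxation direction of analysis goes through.
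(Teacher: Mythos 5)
Your plan is a genuine departure from the paper's proof. The paper establishes robustness by an explicit two-case analysis on \ADV's behavior: (1) \ADV inactive, where the consistency constraint \emph{caps} the amount \CLIP can accept robustly at $z_{\mathrm{PCM}}$, and the ratio is computed directly as $\frac{\int_0^{z_{\mathrm{PCM}}}\phi^\epsilon + \beta z_{\mathrm{PCM}} + (1-z_{\mathrm{PCM}})U}{L} = \gamma^\epsilon$; and (2) \ADV overactive, where a lower bound $z_\ADV$ on forced acceptance is derived and compared against $z_{\mathrm{PCM}}$. You instead try to lift the proof of \autoref{thm:alphaCompCFL} wholesale, replacing $z^{(t)}$ by the pseudo-utilization $p^{(t)}$. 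Your upper bound on $\CLIP(\mathcal{I})$ in terms of $p^{(j)}$ can in fact be made rigorous (the excess $z^{(j)} - p^{(j)}$ is covered because $\phi^\epsilon(0) + \beta = \nicefrac{U}{\gamma^\epsilon} + 2\beta < U$, so absorbing it into $(1-p^{(j)})U$ works), so that half of your plan is sound.

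The genuine gap is in the $\OPT$ lower bound. Your contradiction argument needs $c(\mbf{x}_m) > 0$ whenever $c(\bar{\mbf{x}}_m) > 0$, and your resolution is the ``relaxation claim'': that accepting a low-gradient cost function only loosens the LHS of \eqref{eq:const-constraint}. That claim is false once $z^{(t-1)} \geq A^{(t)}$, because then the $\max(\cdot,0)(U-L)$ term vanishes and the marginal change in the LHS with respect to $c(\mbf{x})$ is $\partial_c f_m + \partial_c(\text{switching}) - L$. Since $\partial_c f_m \geq L$ always and switching can add up to $2\beta$, this derivative is $\geq 0$ and typically strictly positive, so acceptance \emph{tightens} the constraint. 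The bound you say you need, $\phi^\epsilon(p)+\beta \leq U$, handles only the regime where the $\max$ term is active (where the marginal change is $\phi^\epsilon(p)+\beta-U < 0$); in the inactive regime the analogous quantity is $\phi^\epsilon(p)+\beta-L \geq 2\beta \geq 0$, which goes the wrong way. Concretely, if \ADV has already forced \CLIP to accept mediocre cost functions so that \CLIP has caught up to \ADV and the consistency constraint is nearly binding, a subsequently revealed cost function with gradient just below $\phi^\epsilon(p^{(j)})-\beta$ may be infeasible to accept at all, freezing $p$ while $\OPT$ drops below $\phi^\epsilon(p^{(j)})-\beta$. Your lower bound then fails, and the remaining slack that would save the ratio lives in the relationship between $\CLIP_{t-1}$ and $(1+\epsilon)\ADV_t$ --- exactly the bookkeeping the paper's case analysis performs explicitly but which your sketch does not recover.
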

\begin{proof}
Let $\epsilon \in (0, \alpha-1]$ be the target consistency (recalling that \CLIP is $(1+\epsilon)$ consistent), and let $\mathcal{I} \in \Omega$ denote an arbitrary valid \CFL sequence.

To prove the robustness of \CLIP, we consider two ``bad cases'' for the advice $\ADV(\mathcal{I})$, and show that in the worst-case, \CLIP's competitive ratio is bounded by $\gamma^\epsilon$.

\paragraph{Case 1: $\ADV(\mathcal{I})$ is ``inactive''.}

Consider the case where \ADV accepts nothing during the main sequence and instead satisfies the entire long-term constraint in the final time step.  In the worst-case, this gives that $\ADV(\mathcal{I}) = U + 2\beta$.

Based on the consistency constraint (and using the fact that \CLIP will always be ``overprocuring'' w.r.t. \ADV throughout the main sequence), we can derive an upper bound on the amount that \CLIP is allowed to accept from the robust pseudo-cost minimization.  Recall the following constraint:
\begin{align*}
\CLIP_{t-1} + f_t(\mbf{x}_t) + \lVert \mbf{x}_t - \mbf{x}_{t-1} \rVert_{\ell_1 (\mbf{w})} + \lVert \mbf{x}_t - \mbf{a}_t \rVert_{\ell_1 (\mbf{w})} + \lVert \mbf{a}_t \rVert_{\ell_1 (\mbf{w})}& + (1 - z^{(t-1)} - c( \mbf{x}_t) )L\\
&\le (1+ \epsilon) \left[ \ADV_t + \lVert \mbf{a}_t \rVert_{\ell_1 (\mbf{w})} + (1 - A^{(t)} ) L \right].
\end{align*}

\begin{proposition}
    $z_{\text{PCM}}$ is an upper bound on the amount that \CLIP can accept from the pseudo-cost minimization without violating $(1+\epsilon)$ consistency, and is defined as:
\[
z_{\text{PCM}} = \gamma^\epsilon \ln \left[ \frac{U - L - 2\beta}{U - \nicefrac{U}{\gamma^\epsilon} - 2\beta} \right]
\]
\end{proposition}
\begin{proof}
Consider an arbitrary time step $t$.  When \CLIP is \textit{not} allowed to accept anything more from the robust pseudo-cost minimization, we have that $c(\mbf{x}_t)$ is restricted to be $0$ (recall that $\mbf{a}_t = \mbf{0}$ for any time steps before $T$, because the advice is assumed to be inactive).

By definition, since any cost functions accepted in $\CLIP_{t-1}$ can be attributed to the robust pseudo-cost minimization, we have the following in the worst-case:
\begin{align*}
\CLIP_{t-1} = \int_0^{z^{(t-1)}} \phi^\epsilon (u) du + \beta z^{(t-1)}.
\end{align*}

Combining the above with the left-hand side of the consistency constraint, we have the following by observing that $\mbf{x}_t = \mbf{0}$ and $\mbf{a}_t = \mbf{0}$, and the switching cost to ``ramp-up'' is absorbed into the pseudo-cost $\phi$:
\begin{align*}
\CLIP_{t-1} + (1 - z^{(t-1)})L &= \int_0^{z^{(t-1)}} \phi^\epsilon (u) du + \beta z^{(t-1)} + (1 - z^{(t-1)})L.
\end{align*}

As stated, let $z^{(t-1)} = z_{\text{PCM}}$.
Then by properties of the pseudo-cost,  
\begin{align*}
\CLIP_{t-1} + (1 - z_{\text{PCM}})L &= \int_0^{z_{\text{PCM}}} \phi (u) du + \beta z_{\text{PCM}} + (1 - z_{\text{PCM}})U + (1 - z_{\text{PCM}})L - (1 - z_{\text{PCM}})U, \\
&= \gamma^\epsilon \left[ \phi^\epsilon (z_{\text{PCM}}) - \beta \right] + (1 - z_{\text{PCM}})L - (1 - z_{\text{PCM}})U, \\
&= \gamma^\epsilon L + \left(L - U \right) \left(1 - \gamma^\epsilon \ln \left[ \frac{U - L - 2\beta}{U - \nicefrac{U}{\gamma^\epsilon} - 2\beta} \right] \right), \\
&= \gamma^\epsilon L + L - U - \left(L-U\right) \gamma^\epsilon \ln \left[ \frac{U - L - 2\beta}{U - \nicefrac{U}{\gamma^\epsilon} - 2\beta} \right].
\end{align*}

Substituting for the definition of $\gamma^\epsilon$, we obtain:
\begin{align*}
\CLIP_{t-1} + (1 - z_{\text{PCM}})L &= \gamma^\epsilon L + L - U - \left(L-U\right) \gamma^\epsilon \ln \left[ \frac{U - L - 2\beta}{U - \nicefrac{U}{\gamma^\epsilon} - 2\beta} \right], \\
&= \left[ \epsilon L + U - \gamma^\epsilon (U-L) \ln \left[ \frac{U - L - 2\beta}{U - \nicefrac{U}{\gamma^\epsilon} - 2\beta} \right] \right] + L - U + \left(U-L\right) \gamma^\epsilon \ln \left[ \frac{U - L - 2\beta}{U - \nicefrac{U}{\gamma^\epsilon} - 2\beta} \right], \\
&= \epsilon L + L = (1+\epsilon) L.
\end{align*}

As $(1+\epsilon) L $ is exactly the right-hand side of the consistency constraint (i.e., $(1+\epsilon) \left[ \ADV_t + \lVert \mbf{a}_t \rVert_{\ell_1 (\mbf{w})} + (1-A_t) L \right] = (1+\epsilon) L$), this completes the proposition.
\end{proof}

If \CLIP is constrained to use at most $z_{\text{PCM}}$ of its utilization to be robust, the remaining $(1-z_{\text{PCM}})$ utilization must be used for the compulsory trade and/or to follow $\ADV$.  Thus, we have the following worst-case competitive ratio for \CLIP, specifically for Case 2:
\begin{align*}
\frac{\CLIP(\mathcal{I})}{\OPT(\mathcal{I})} &\le \frac{ \int_0^{z_{\text{PCM}}} \phi^\epsilon(u) du + \beta z_{\text{PCM}} + (1 - z_{\text{PCM}})U}{L}
\end{align*}

By the definition of $\phi^\epsilon(p)$, we have the following:
\begin{align*}
\frac{\CLIP(\mathcal{I})}{\OPT(\mathcal{I})} &\le \frac{ \int_0^{z_{\text{PCM}}} \phi^\epsilon(u) du + \beta z_{\text{PCM}} + (1 - z_{\text{PCM}})U}{L}\\
&\le \frac{ \gamma^\epsilon \left[ \phi^\epsilon(z_{\text{PCM}}) - \beta \right]}{L} \le \frac{ \gamma^\epsilon \left[ L + \beta - \beta \right]}{L} \ \leq \ \gamma^\epsilon.
\end{align*}

\paragraph{Case 2: $\ADV(\mathcal{I})$ is ``overactive''.}

We now consider the case where \ADV accepts bad cost functions which it ``should not'' accept (i.e. $\ADV(\mathcal{I}) \gg \OPT(\mathcal{I})$).
Let $\ADV(\mathcal{I}) = v \gg \OPT_T$ (i.e. the final total hitting and switching cost of \ADV is $v$ for some $v \in [L, U+2\beta]$, and this is much greater than the optimal solution).

This is without loss of generality, since we can assume that $v$ is the ``best cost function'' accepted by \ADV and the consistency ratio changes strictly in favor of \ADV.
Based on the consistency constraint, we can derive a lower bound on the amount that \CLIP \textit{must} accept from \ADV in order to stay $(1+\epsilon)$-consistent.

To do this, we consider the following sub-cases:

\noindent $\bullet$ \textbf{Sub-case 2.1}: Let $v \geq \frac{U+\beta}{1+\epsilon}$.

In this sub-case, \CLIP can fully ignore the advice, because the following consistency constraint is never binding (note that $\ADV_t \geq \frac{U+\beta}{1+\epsilon} A^{(t)}$):
\begin{align*}
 \CLIP_{t-1} + f_t(\mbf{x}_t) + \lVert \mbf{x}_t - \mbf{x}_{t-1} \rVert_{\ell_1 (\mbf{w})} + \lVert \mbf{x}_t - \mbf{a}_{t} \rVert_{\ell_1 (\mbf{w})}& + \lVert \mbf{a}_t \rVert_{\ell_1 (\mbf{w})} + (1 - A^{(t)} ) L + (A^{(t)} - z^{(t-1)} - c(\mbf{x}_t))U\\
 &\le (1+ \epsilon) \left[ \ADV_t + \lVert \mbf{a}_t \rVert_{\ell_1 (\mbf{w})} + (1 - A^{(t)} ) L \right], \\
(1 - A^{(t)} ) L + (A^{(t)})U + \lVert \mbf{a}_t \rVert_{\ell_1 (\mbf{w})}  &\le (1+ \epsilon) \left[ \ADV_t + \lVert \mbf{a}_t \rVert_{\ell_1 (\mbf{w})} + (1 - A^{(t)} ) L\right],  \\
 (1 - A^{(t)} ) L + U A^{(t)} + \beta A^{(t)} &\le (1+ \epsilon) \left[ \frac{U+\beta}{1+\epsilon} A^{(t)} + (1 - A^{(t)} ) L \right]
\end{align*}

\noindent $\bullet$ \textbf{Sub-case 2.2}: Let $v \in ( L, \ \frac{U+\beta}{1+\epsilon} )$.

To remain $(1 + \epsilon)$ consistent, \CLIP must accept some of these ``bad cost functions'' denoted by $v$ in the worst-case.
We would like to derive a lower bound $z_\ADV$, such that $z_\ADV$ describes the minimum amount that \CLIP must accept from \ADV in order to always satisfy the $(1+\epsilon)$ consistency constraint.

Based on the consistency constraint, we have the following:
\begin{align*}
\CLIP_{t-1} + f_t(\mbf{x}_t) + \lVert \mbf{x}_t - \mbf{x}_{t-1} \rVert_{\ell_1 (\mbf{w})} + \lVert \mbf{x}_t - \mbf{a}_{t} \rVert_{\ell_1 (\mbf{w})}& + \lVert \mbf{a}_t \rVert_{\ell_1 (\mbf{w})} + (1 - A^{(t)} ) L + (A^{(t)} - z^{(t-1)} - c(\mbf{x}_t))U\\
&\le (1+ \epsilon) \left[ \ADV_t + \lVert \mbf{a}_t \rVert_{\ell_1 (\mbf{w})} + (1 - A^{(t)} ) L \right]. \\
\end{align*}

We let $f_t(\mbf{x}_t) + \lVert \mbf{x}_t - \mbf{x}_{t-1} \rVert_{\ell_1 (\mbf{w})} + \lVert \mbf{x}_t - \mbf{a}_t \rVert_{\ell_1 (\mbf{w})} + \lVert \mbf{a}_t \rVert_{\ell_1 (\mbf{w})} \leq v c(\mbf{x}_t)$ for any $\mbf{x}_t : c(\mbf{x}_t) < c(\mbf{a}_t)$, which holds by convexity of the cost functions $f_t(\cdot)$ and a prevailing condition that $c( \mbf{x}_t ) \leq c( \mbf{a}_t )$ for the ``bad cost functions'' accepted by \ADV. Note that $v - U$ is negative (by the condition of Sub-case 2.2):

\begin{align*}
\CLIP_{t-1} + v c(\mbf{x}_t)& + L - L A^{(t)} + U A^{(t)} - U z^{(t-1)} - U c(\mbf{x}_t) \le (1+ \epsilon) \left[ vA^{(t-1)} + vc(\mbf{a}_t) + L - L A^{(t)}  \right],\\
v c(\mbf{x}_t) - U c(\mbf{x}_t) &\le (1+ \epsilon) \left[ vA^{(t-1)} + vc(\mbf{a}_t) + L - L A^{(t)}  \right] - \CLIP_{t-1} - L + L A^{(t)} - U A^{(t)} + U z^{(t-1)}, \\
v c(\mbf{x}_t) - U c(\mbf{x}_t) &\le vA^{(t)} - U A^{(t)} - \CLIP_{t-1} + U z^{(t-1)} + \epsilon \left[ vA^{(t-1)} + vc(\mbf{a}_t) + L - L A^{(t)}  \right], \\
c(\mbf{x}_t) &\ge \frac{vA^{(t)} - U A^{(t)} - \CLIP_{t-1} + U z^{(t-1)} + \epsilon \left[ vA^{(t)} + L - L A^{(t)}  \right]}{v-U}.
\end{align*}

In the event that $A^{(t-1)} = 0$ (i.e. nothing has been accepted so far by either \ADV or \CLIP), we have the following:
\begin{align*}
c(\mbf{x}_t) &\ge \frac{vc(\mbf{a}_t) - U c(\mbf{a}_t) + \epsilon \left[ vc(\mbf{a}_t) + L - L c(\mbf{a}_t)  \right]}{v-U}, \\
c(\mbf{x}_t) &\ge c(\mbf{a}_t) - \frac{\epsilon \left[ vc(\mbf{a}_t) + L - L c(\mbf{a}_t)  \right]}{U-v}.
\end{align*}

Through a recursive definition, we can show that for any $A^{(t)}$, given that \CLIP has accepted $z^{(t-1)}$ of \ADV's suggested prices so far, it must set $\mbf{x}_t$ such that:
\begin{align*}
z^{(t)} &\ge z^{(t-1)} + c(\mbf{a}_t) - \frac{\epsilon \left[ vc(\mbf{a}_t) + L - L c(\mbf{a}_t)  \right]}{U-v}.
\end{align*}

Continuing the assumption that $v$ is constant, if \CLIP has accepted $z^{(t-1)}$ thus far, we have the following if we assume that the acceptance up to this point happened in a single previous time step $m$:
\begin{align*}
c(\mbf{x}_t) &\ge A^{(t)} + \frac{U c(\mbf{x}_m) - \CLIP_{t-1} + \epsilon \left[ vA^{(t)} + L - L A^{(t)}  \right]}{v-U}, \\
c(\mbf{x}_t) &\ge c(\mbf{a}_t) + c(\mbf{a}_m) + \frac{U c(\mbf{x}_m) - v c(\mbf{x}_m) + \epsilon \left[ v(c(\mbf{a}_t) + c(\mbf{a}_m)) + L - L (c(\mbf{a}_t) + c(\mbf{a}_m))  \right]}{v-U}, \\
c(\mbf{x}_t) &\ge c(\mbf{a}_t) + c(\mbf{a}_m) - \mbf{x}_m + \frac{\epsilon \left[ v(c(\mbf{a}_t) + c(\mbf{a}_m)) + L - L (c(\mbf{a}_t) + c(\mbf{a}_m))  \right]}{v-U}, \\
c(\mbf{x}_t) + c(\mbf{x}_m) &\ge c(\mbf{a}_t) + c(\mbf{a}_m) + \frac{\epsilon \left[ v(c(\mbf{a}_t) + c(\mbf{a}_m)) + L - L (c(\mbf{a}_t) + c(\mbf{a}_m))  \right]}{v-U}, \\
z^{(t)} &\ge A^{(t)} + \frac{\epsilon \left[ vA^{(t)} + L - L A^{(t)}  \right]}{v-U}.
\end{align*}

This gives intuition into the desired $z_\ADV$ bound.  The above describes and motivates that the \textit{aggregate} acceptance by \CLIP at any given time step $t$ must satisfy a lower bound.  Consider that the worst case for Sub-case 2.2 occurs when all of the $v$ prices accepted by \ADV arrive first, before any prices which would be considered by the pseudo-cost minimization.  Then let $A^{(t)} = 1$ for some arbitrary time step $t$, and we have the following lower bound on $z_\ADV$:
\begin{align*}
z_\ADV &\geq 1 - \frac{ v\epsilon }{U - v}.
\end{align*}

If \CLIP is forced to use $z_\ADV$ of its utilization to be $(1+\epsilon)$ consistent against \ADV, that leaves at most $(1-z_\ADV)$ utilization for robustness.  

We define $z' = \min ( 1 - z_\ADV, z_{\text{PCM}})$ and consider the following two cases.  

\noindent $\bullet$ \textbf{Sub-case 2.2.1}: if $z' = z_{\text{PCM}}$, the worst-case competitive ratio is bounded by the following.  Note that if $z' = z_{\text{PCM}}$, the amount of utilization that \CLIP can use to ``be robust'' is exactly the same as in \textbf{Case 1}:
\begin{align*}
\frac{\CLIP(\mathcal{I})}{\OPT(\mathcal{I})} &\leq \frac{\int_0^{z_{\text{PCM}}} \phi(u) du + \beta z_{\text{PCM}} + (1 -z_{\ADV} - z_{\text{PCM}})U + z_{\ADV} v}{L}, \\
&\leq \frac{\int_0^{z_{\text{PCM}}} \phi(u) du + \beta z_{\text{PCM}} + (1 - z_{\text{PCM}})U}{L} \leq \gamma^\epsilon.
\end{align*}

\noindent $\bullet$ \textbf{Sub-case 2.2.2}: if $z' = 1 - z_{\ADV}$, the worst-case competitive ratio is bounded by the following.  Note that \CLIP \textit{cannot} use $z_{\text{PCM}}$ of its utilization for robustness, so the following bound assumes that the cost functions accepted by \CLIP are bounded by the \textit{worst $(1 - z_{\ADV})$ fraction} of the pseudo-cost threshold function $\phi^\epsilon$ (which follows since $\phi^\epsilon$ is non-decreasing on $z \in [0,1]$):
\begin{align*}
\frac{\CLIP(\mathcal{I})}{\OPT(\mathcal{I})} \leq \frac{\int_0^{1 - z_{\ADV}} \phi(u) du + \beta (1 - z_{\ADV}) + z_{\ADV} v}{L}.
\end{align*}

Note that if $z' = 1 - z_\ADV$, we know that $1 - z_\ADV < z_{\text{PCM}}$, which further gives the following by definition of $z_\ADV$:
\begin{align*}
    1-z_{\text{PCM}} &< 1 - \frac{v\epsilon}{U-v}, \\
    v\epsilon &< (U-v) z_{\text{PCM}}, \\
    v &< \frac{U}{(1+ \frac{\epsilon}{z_{\text{PCM}}})}.
\end{align*}

\noindent By plugging $v$ back into the definition of $z_\ADV$, we have that $z_\ADV v \leq \left( \frac{(1-z_{\text{PCM}})U}{1 + \frac{\varepsilon}{z_{\text{PCM}}}} \right)$, giving the following:
\begin{align*}
\frac{\CLIP(\mathcal{I})}{\OPT(\mathcal{I})} &\leq \frac{\int_0^{1 - z_{\ADV}} \phi(u) du + \beta (1 - z_{\ADV}) + \left( \frac{(1-z_{\text{PCM}})U}{1 + \frac{\epsilon}{z_{\text{PCM}}}} \right) }{L}, \\
&\leq \frac{\int_0^{z_{\text{PCM}}} \phi(u) du + \beta z_{\text{PCM}} + (1 - z_{\text{PCM}})U}{L} \leq \gamma^\epsilon.
\end{align*}

Thus, by combining the bounds in each of the above two cases, the result follows, and we conclude that $\CLIP$ is $\gamma^\epsilon$-robust for any advice \ADV.
\end{proof}

Having proven \sref{Lemma}{lem:clipconst} (consistency) and \sref{Lemma}{lem:cliprob} (robustness), the statement of \autoref{thm:constrobCLIP} follows -- \CLIP is $(1+\epsilon)$-consistent and $\gamma^\epsilon$-robust given any advice for \CFL.
\end{proof}

\subsection{Proof of \sref{Corollary}{cor:constrobCLIPMAL}} \label{apx:constrobCLIPMAL}

In this section, we prove \sref{Corollary}{cor:constrobCLIPMAL}, which shows that \CLIP is $(1+\epsilon)$-consistent and $\gamma^\epsilon$-robust for \MAL, where $\gamma^\epsilon$ is defined in \eqref{eq:gamma}.

\begin{proof}[Proof of \sref{Corollary}{cor:constrobCLIPMAL}]
We show the above result by separately considering consistency (the competitive ratio when advice is correct) and robustness (the competitive ratio when advice is not correct), relying on the proof of \autoref{thm:constrobCLIP}.

\noindent\textbf{Consistency.}  By definition, \MAL on a weighted star metric is identical to an instance of \CFLno on $(\Delta_n, \lVert \cdot \rVert_{\ell_1 (\mbf{w'})})$, where $\Delta_n$ is the $n$-point simplex in $\mathbb{R}^{n}$ and $\lVert \cdot \rVert_{\ell_1 (\mbf{w'})}$ is the weighted $\ell_1$ norm, with weights $\mbf{w'}$ given by the corresponding edge weight in the underlying star metric.

Observe that the consistency proof given in \sref{Lemma}{lem:clipconst} holds when the consistency constraint at each time step is defined as follows:
\begin{align}
\scriptsize
\begin{split}
&\CLIP_{t-1} + f_t(\mbf{x}) + \lVert \mbf{x} - \mbf{x}_{t-1} \rVert_{\ell_1 (\mbf{w'})} + \lVert \mbf{x} - \mbf{a}_{t} \rVert_{\ell_1 (\mbf{w'})} + \lVert \mbf{a}_{t} \rVert_{\ell_1 (\mbf{w'})} + (1 - z^{(t-1)} - c(\mbf{x}))L + \max( (A^{(t)} - z^{(t-1)} - c(\mbf{x})), \ 0)(U-L)\\ 
& \hspace{43.5em} \leq (1+\epsilon) [\ADV_{t} + \lVert \mbf{a}_t \rVert_{\ell_1 (\mbf{w'})} + (1 - A^{(t)} ) L],
\end{split}
\end{align}
where $\mbf{x}$ and $\mbf{a}$ denote decisions by \CLIP and \ADV (respectively) supported on $\Delta_n$.
Thus, since the consistency proof in \sref{Lemma}{lem:clipconst} exactly holds under the \CFL vector space corresponding to \MAL, we conclude that \CLIP is $(1+\epsilon)$-consistent for \MAL.

\smallskip

\noindent\textbf{Robustness.}  First, we note that the robustness proof given in \sref{Lemma}{lem:cliprob} assumes \OPT does not pay any switching cost.  This implies that the proof of \sref{Lemma}{lem:cliprob} meets the conditions of \sref{Proposition}{prop:CFL-MAL}, which states that any performance bound for an arbitrary \ALG solving \CFL which assumes \OPT pays no switching cost translates to an identical bound for \MAL, where the problem's parameters can be recovered by constructing a corresponding \CFL instance according to \sref{Lemma}{lem:simplexTransform}.

Thus, by \sref{Proposition}{prop:CFL-MAL}, we conclude that \CLIP is $\gamma^\epsilon$-robust for \MAL, where $\gamma^\epsilon$ is defined in \eqref{eq:gamma}.

\smallskip

By combining the two results, the statement of \sref{Corollary}{cor:constrobCLIPMAL} follows -- \CLIP is $(1+\epsilon)$-consistent and $\gamma^\epsilon$-robust given any advice \ADV for \MAL.
\end{proof}

\subsection{Proof of \autoref{thm:optimalconstrobCFL}} \label{apx:optimalconstrobCFL}

In this section, we prove \autoref{thm:optimalconstrobCFL}, which shows that any $(1+\epsilon)$-consistent algorithm for \CFL is at least $\gamma^\epsilon$-robust, where $\gamma^\epsilon$ is as defined in \eqref{eq:gamma}.

\begin{proof}[Proof of \autoref{thm:optimalconstrobCFL}]
    To show this result, we leverage the same special family of $y$-adversaries for \CFL defined in Definition~\ref{dfn:yadversary}, where $y \in [L,U]$.  Recall that $k = \argmax_{i \in [d]} \mbf{w}_i$, where $\mbf{w}$ is the weight vector for $\lVert \cdot \rVert_{\ell_1 (\mbf{w})}$.

    As in the proof of \autoref{thm:lowerboundCFL}, we note that for adversary $\mathcal{A}_y$, the optimal offline solution is $\OPT(\mathcal{A}_y) = y + \nicefrac{2\beta}{m}$, and that as $m$ grows large, $\OPT(\mathcal{A}_y) \to y$.
    
    Against these adversaries, we consider two types of advice -- the first is \textit{bad} advice, which sets $\mbf{a}_t = \mbf{0}$ for all time steps $t < T$ (i.e., before the compulsory trade), incurring a final cost of $U + 2\beta$.  

    On the other hand, \textit{good} advice sets $\mbf{a}_t = \mbf{0}$ for all time steps up to the first time step when $y$ is revealed, at which point it sets $\mbf{a}_t^k = \nicefrac{1}{m}$ to achieve final cost $\ADV(\mathcal{A}_y) = \OPT(\mathcal{A}_y) = y + \nicefrac{2\beta}{m}$.

    We let $g(y)$ denote a \textit{robust conversion function} $[L,U] \rightarrow [0,1]$, which fully quantifies the actions of a learning-augmented algorithm $\LALG$ playing against adaptive adversary $\mathcal{A}_y$, where $g(y)$ gives the progress towards the long-term constraint under the instance $\mathcal{A}_y$ before (either) the compulsory trade or the black-box advice sets $\mbf{a}_t^k > 0$.  Note that for large $w$, the adaptive adversary $\mathcal{A}_{y-\delta}$ is equivalent to first playing $\mathcal{A}_y$ (besides the last two batches of cost functions), and then processing batches with cost functions $\textbf{Down}^{w_y+1}(\mbf{x})$ and $\textbf{Up}(\mbf{x})$.  Since $\LALG$ is deterministic and the conversion is unidirectional (irrevocable), we must have that $g(y - \delta) \geq g(y)$, i.e. $g(y)$ is non-increasing in $[L, U]$.  

    As in the proof of \autoref{thm:lowerboundCFL}, the adaptive nature of each $y$-adversary forces any algorithm to incur a switching cost proportional to $g(y)$, specifically denoted by $2\beta g(y)$. 

    For any $\gamma$-robust online algorithm $\LALG$ given any arbitrary black-box advice, the following must hold:
    \[
    \LALG(\mathcal{A}_y) \leq \gamma \OPT(\mathcal{A}_y) = \gamma y, \ \forall y \in [L, U].
    \]

    The cost of $\LALG$ with conversion function $g$ on an instance $\mathcal{A}_y$ is $\LALG(\mathcal{A}_y) = g(\nicefrac{U}{\gamma}) \nicefrac{U}{\gamma} - \int^y_{\nicefrac{U}{\gamma}} u d g(u) + 2\beta g(y) + (1 - g(y))U$, where $udg(u)$ is the cost of buying $dg(u)$ utilization at price $u$, the last term is from the compulsory trade, and the second to last term is the switching cost incurred by $\LALG$. 

    This implies that $g(y)$ must satisfy the following:
    \[
    g(\nicefrac{U}{\gamma}) \nicefrac{U}{\gamma} - \int^y_{\nicefrac{U}{\gamma}} u d g(u) + 2\beta g(y) + (1 - g(y))U \leq \gamma y, \ \forall y \in [L, U].
    \]

    By integral by parts, the above implies that the conversion function must satisfy $g(y) \geq \frac{U - \gamma y}{U - y - 2\beta} - \frac{1}{U - y - 2\beta} \int_{\nicefrac{U}{\gamma}}^y g(u) du$.  By Gr\"{o}nwall's Inequality \cite{Mitrinovic:91}[Theorem 1, p. 356], we have that
    \begin{align}
    g(y) & \geq \frac{U - \gamma y}{U - y - 2\beta} - \frac{1}{U - y - 2\beta} \int_{\nicefrac{U}{\gamma}}^y \frac{U - \gamma u}{U - u - 2\beta} \cdot \exp\left( \int_u^y \frac{1}{U - r - 2\beta} dr \right) du \\
    & \geq \frac{U - \gamma y}{U - y - 2\beta} - \int_{\nicefrac{U}{\gamma}}^y \frac{U - \gamma u}{(U - u - 2\beta)^2} du \\
    & \geq \frac{U - \gamma y}{U - y - 2\beta} - \left[ \frac{U\gamma - U - 2\beta \gamma}{u + 2\beta - U} - \gamma \ln \left( u + 2\beta -U \right) \right]_{\nicefrac{U}{\gamma}}^y \\
    & \geq \gamma \ln \left( y + 2\beta -U \right) - \gamma \ln \left( \nicefrac{U}{\gamma} + 2\beta -U \right), \quad \forall y \in [L, U]. \label{eq:rob-bound-gron-min}
    \end{align}

    In addition, to simultaneously be $\eta$-consistent when the advice \textit{is} correct, $\LALG$ must satisfy $\LALG(\mathcal{A}_L) \leq \eta \OPT(\mathcal{A}_L) = \eta L$.  If the advice is correct (and $m$ is sufficiently large), we assume that $\LALG$ pays no switching cost to satisfy the long-term constraint at the best cost functions $L$.  It must still pay for switching incurred by the robust algorithm (recall that $\OPT$ pays no switching cost).
    \begin{align}
    \int_{\nicefrac{U}{\gamma}}^L g(u) du + 2\beta g(L) &\leq \eta L - L. \label{eq:const-bound-gron-min}
    \end{align}

    By combining equations \eqref{eq:rob-bound-gron-min} and \eqref{eq:const-bound-gron-min}, the conversion function $g(y)$ of any $\gamma$-robust and $\eta$-consistent online algorithm must satisfy the following:
    \begin{align}
    \gamma \int_{\nicefrac{U}{\gamma}}^L \ln \left( \frac{u + 2\beta -U}{ \nicefrac{U}{\gamma} + 2\beta -U } \right)  du + 2\beta \left[ \gamma \ln \left( \frac{u + 2\beta -U}{ \nicefrac{U}{\gamma} + 2\beta -U } \right)  \right] &\leq \eta L - L.
    \end{align}
    When all inequalities are binding, this equivalently gives that 
    \begin{align}
    \eta \geq \gamma +1-\frac{U}{L} + \frac{\gamma(U-L)}{L}  \ln \left( \frac{U - L - 2\beta}{U - \nicefrac{U}{\gamma^\epsilon} - 2\beta} \right). \label{eq:eta-gamma}
    \end{align}
    We define $\eta$ such that $\eta \coloneqq (1 + \epsilon)$.  By substituting for $\eta$ into \eqref{eq:eta-gamma}, we recover the definition of $\gamma^\epsilon$ as given by \eqref{eq:gamma}, which subsequently completes the proof.
    Thus, we conclude that any $(1+\epsilon)$-consistent algorithm for \CFL is at least $\gamma^\epsilon$-robust.
\end{proof}

\subsection{Proof of \sref{Corollary}{cor:optimalconstrobMAL}} \label{apx:optimalconstrobMAL}

In this section, we prove \sref{Corollary}{cor:optimalconstrobMAL}, which shows that any $(1+\epsilon)$-consistent algorithm for \MAL is at least $\gamma^\epsilon$-robust, where $\gamma^\epsilon$ is as defined in \eqref{eq:gamma}.

\begin{proof}[Proof of \sref{Corollary}{cor:optimalconstrobMAL}]
    To show this result, we leverage the same special family of $y$-adversaries for \CFL defined in Definition~\ref{dfn:yadversaryMAL}, where $y \in [L,U]$.  Recall that $k = \argmax_{a \in [n]} \mbf{w}^a$, deonotes the largest edge weight of any (non-\OFF) point in the metric space, and $\beta = \mbf{w}^k$.

    As in the proof of \autoref{thm:lowerboundCFL}, we note that for adversary $\mathcal{A}_y$, the optimal offline solution is $\OPT(\mathcal{A}_y) = y + \nicefrac{2\beta}{m}$, and that as $m$ grows large, $\OPT(\mathcal{A}_y) \to y$.
    
    Against these adversaries, we consider two types of advice -- the first is \textit{bad} advice, which sets $\mbf{a}^{a'}_t = 1$ (i.e., \ADV stays in the \OFF point) for all time steps $t < T$ (i.e., before the compulsory trade), incurring a final cost of $U + 2\beta$.  

    On the other hand, \textit{good} advice sets $\mbf{a}^{a'}_t = 1$ for all time steps up to the first time step when $y$ is revealed, at which point it sets $\mbf{a}_t^k = \nicefrac{1}{m}$ to achieve final cost $\ADV(\mathcal{A}_y) = \OPT(\mathcal{A}_y) = y + \nicefrac{2\beta}{m}$.

As previously, we let $g(y)$ denote a \textit{robust conversion function} $[L,U] \rightarrow [0,1]$, which fully quantifies the actions of a learning augmented algorithm $\LALG$ playing against adaptive adversary $\mathcal{A}_y$.  %
Since $\LALG$ is deterministic and the conversion is unidirectional (irrevocable), %
$g(y)$ is non-increasing in $[L, U]$.  Intuitively, the entire long-term constraint should be satisfied if the minimum possible price is observed, i.e $g(L) = 1$.

As in \autoref{thm:optimalconstrobCFL}, the adaptive nature of each $y$-adversary forces any deterministic $\ALG$ to incur a switching cost of $2\beta g(y)$ on adversary $\mathcal{A}_y$, and we assume that $\ALG$ does not incur a significant switching cost during the final batch (i.e., during the compulsory trade).

For any $\gamma$-robust \LALG given any arbitrary black-box advice, the following must hold:
\[
\LALG(\mathcal{A}_y) \leq \gamma \OPT(\mathcal{A}_y) = \gamma y, \ \forall y \in [L, U].
\]

The cost of $\LALG$ with conversion function $g$ on an instance $\mathcal{A}_y$ is $\LALG(\mathcal{A}_y) = g(\nicefrac{U}{\gamma}) \nicefrac{U}{\gamma} - \int^y_{\nicefrac{U}{\gamma}} u d g(u) + 2\beta g(y) + (1 - g(y))U$, where $udg(u)$ is the cost of buying $dg(u)$ utilization at price $u$, the last term is from the compulsory trade, and the second to last term is the switching cost incurred by $\LALG$.  Note that this expression for the cost is exactly as defined in \autoref{thm:optimalconstrobCFL}.

Thus by \autoref{thm:optimalconstrobCFL}, for any learning-augmented algorithm \LALG which is simultaneously $\eta$-consistent and $\gamma$-robust, the conversion function $g(\cdot)$ must satisfy the following inequality (via integral by parts and Gr\"{o}nwall's Inequality \citep[Theorem 1, p. 356]{Mitrinovic:91}):
\begin{align}
\gamma \int_{\nicefrac{U}{\gamma}}^L \ln \left( \frac{u + 2\beta -U}{ \nicefrac{U}{\gamma} + 2\beta -U } \right)  du + 2\beta \left[ \gamma \ln \left( \frac{u + 2\beta -U}{ \nicefrac{U}{\gamma} + 2\beta -U } \right)  \right] &\leq \eta L - L.
\end{align}

When all inequalities are binding, this equivalently gives that the optimal $\eta$ and $\gamma$ satisfy:
\begin{align}
\eta \geq \gamma +1-\frac{U}{L} + \frac{\gamma(U-L)}{L}  \ln \left( \frac{U - L - 2\beta}{U - \nicefrac{U}{\gamma^\epsilon} - 2\beta} \right).
\end{align}
We define $\eta$ such that $\eta \coloneqq (1 + \epsilon)$.  By substituting for $\eta$ into \eqref{eq:eta-gamma}, we recover the definition of $\gamma^\epsilon$ as given by \eqref{eq:gamma}, which subsequently completes the proof.
Thus, we conclude that any $(1+\epsilon)$-consistent algorithm for \CFL is at least $\gamma^\epsilon$-robust.
\end{proof}

\end{document}